\definecolor{myred}{rgb}{1,0,0}
\newtheorem{theorem}{Theorem}
\newtheorem{lemma}{Lemma}
\newtheorem{definition}{Definition}
\theoremstyle{remark} \newtheorem{remark}{Remark}}
\newtheorem{proposition}{Proposition}
\numberwithin{equation}{section}
\newcommand\showfigures[1]{}
\title{Equilibrium Stability Analysis \\ of Hyperbolic Shallow Water Moment Equations}
\author{
Qian Huang\thanks{Department of Energy and Power Engineering, Tsinghua University, Beijing 100084, China},~
Julian Koellermeier\thanks{Department of Computer Science, KU Leuven, 3001 Leuven, Belgium,
    email: {\tt julian.koellermeier@kuleuven.be}},~
Wen-An Yong\thanks{Department of Mathematical Sciences, Tsinghua University, Beijing 100084, China}
}
\begin{document}
\maketitle


\begin{abstract}
In this paper we analyze the stability of equilibrium manifolds of hyperbolic shallow water moment equations. Shallow water moment equations describe shallow flows for complex velocity profiles which vary in vertical direction and the models can be seen as extensions of the standard shallow water equations. Equilibrium stability is an important property of balance laws that determines the linear stability of solutions in the vicinity of equilibrium manifolds and it is seen as a necessary condition for stable numerical solutions. After an analysis of the hyperbolic structure of the models, we identify three different stability manifolds based on three different limits of the right-hand side friction term, which physically correspond to water-at-rest, constant-velocity, and bottom-at-rest velocity profiles. The stability analysis then shows that the structural stability conditions are fulfilled for the water-at-rest equilibrium and the constant-velocity equilibrium. However, the bottom-at-rest equilibrium can lead to instable modes depending on the velocity profile. Relaxation towards the respective equilibrium manifolds is investigated numerically for different models.
\end{abstract} 
\medskip\noindent
{\bf Keywords}: Shallow Water Equations, hyperbolic moment equations, structural stability condition, equilibrium stability\\
\medskip\noindent
\section{Introduction}
The simulation of free surface flows using shallow water models has led to many successful applications in different scientific fields, such as hydrodynamics \cite{schijf1953theoretical}, snow avalanches \cite{christen2010ramms} and granular flows \cite{craster2009dynamics}. Under the assumption of a constant velocity profile over the water height, the shallow water equations are obtained as an efficient model. However, more complex flows that exhibit varying velocity profiles along the vertical axis from the bottom to the surface cannot be described by this simple model. This is crucial in applications where the bottom friction influences the fluid, e.g. for sediment transport \cite{Garres2020}. Other examples are typical tsunami or dam break scenarios, see \cite{Koellermeier2020}. In \cite{kowalski2018moment} a new model that allows for vertical velocity changes was introduced. The model is called Shallow Water Moment Equations (SWME). Although the model was successfully used for several smooth test cases and moderate dam break scenarios, the lack of hyperbolicity has already been identified by the authors. Hyperbolicity is a mathematical requirement for first-order partial differential equations to be robust against small perturbations of the initial data, a key property of the real-world physical processes \cite{Serre1999}. 
In \cite{Koellermeier2020} it was shown that the model equations (SWME) yield unstable results in the presence of shocks that can be related to the breakdown of hyperbolicity. 

To overcome the lack of hyperbolicity of the SWME model, three new hyperbolic models were recently presented. The first model, called Hyperbolic Shallow Water Moment Equations (HSWME) in \cite{Koellermeier2020} resulted from a straightforward linearization around the constant velocity case , similar to the hyperbolic regularizations in \cite{Cai2013}. The second model, called $\beta$-HSWME, from the same paper includes an additional modification that allows for more control over the eigenvalues of the model. The third model, called Shallow Water Linearized Moment Equations (SWLME) from \cite{Pimentel2020}, was derived from a consistent linearization of only the non-linear terms in the original model. This model lead to an analytical investigation of the steady states and the derivation of an appropriate well-balanced scheme. 

Achieving hyperbolicity is a major step towards applications of the models. This is similar for moment models in rarefied gases originated from the Boltzmann equation, where the so-called Grad model \cite{Grad1949} was not hyperbolic and only the recent hyperbolic regularizations \cite{Cai2013,Koellermeier2014,Fan2016} made the models accessible for applications. 
On the other hand, the stability properties are not solely determined by the hyperbolic transport part but also by the non-negligible right-hand side source term of the models.
For the rarefied gas, the collision source term gives rise to the dissipation behavior of the system which is described by the well-known $H$-theorem \cite{Kremer2010}. It is hence desirable to properly characterize the dissipation property in the shallow water moment models, which contain right-hand side friction terms. 

For this purpose, we resort to a set of structural stability conditions proposed in \cite{Yong1999} for hyperbolic relaxation systems. In short, the conditions impose constrains on the coupling of the source term and the hyperbolic part in the vicinity of the equilibrium. 
It ensures the existence and stability of initial value problems when the relaxation approaches zero \cite{Yong1999}. 
Moreover, the condition is satisfied by many well-developed physical theories \cite{Yong2008}. It was recently shown in \cite{Di2017_2} that the hyperbolic regularized moment models of rarefied gases \cite{Cai2013,Fan2016,Koellermeier2014} fulfill the structural stability condition. In \cite{Huang2020}, it was proven that the Gaussian-type extended quadrature method of moment (EQMOM) for the Boltzmann equation also respects this condition.
On the other hand, \cite{LiuJW2016} reports a counterexample for which the condition is violated so that blow-up solutions are possible for that model.
It is reasonable to believe that the shallow water moment models can only be physically sound if this set of structural stability condition is also fulfilled.

In this paper, we consider the stability properties of hyperbolic shallow water models written in the following form
\begin{equation} \label{eq:1d1orderpde}
\partial_t U + A(U) \partial_x U = S(U),
\end{equation}
in the vicinity of local equilibrium points for which the right-hand side friction term vanishes. Throughout the paper, $U \in \mathbb{R}^{N+2}$ will be the unknown variable, $A(U) \in \mathbb{R}^{(N+2)\times(N+2)}$ the system matrix, and $S(U) \in \mathbb{R}^{N+2}$ the source term. All the models covered in this paper can be written in the form of \eqref{eq:1d1orderpde}. 

In the course of the stability analysis, we will give the first general proofs of hyperbolicity for the HSWME and $\beta$-HSWME models from \cite{Koellermeier2020} by extending the existing proofs, which were only done numerically for models up to size $N=100$ so far. We analyze the equilibrium manifolds on which the right-hand side vanishes, i.e. $S(U)=0$. Apart from the trivial water-at-rest equilibrium, we identify two other equilibrium manifolds in the no-slip and perfect-slip regime, respectively. These equilibrium manifolds model a bottom-at-rest condition and a constant-velocity profile, respectively. The concise analysis of the equilibrium manifolds then allows for the application of the stability conditions first mentioned in \cite{Yong1999}. We prove equilibrium stability for the constant-velocity profile under perfect slip and for the water-at-rest equilibrium. For the bottom-at-rest equilibrium under no-slip, we numerically prove linear instability. Due to the form of the linear instabilities, we discuss that those instabilities only form when the velocity profile allows for a change of sign. This is a complex flow behavior including possible vortexes and back streaming, which leads to a breakdown of the shallow water assumption. We argue that the shallow water moment models are thus stable in the regime of shallow flows. The stability analysis is accompanied by numerical test of dam-break scenarios that highlight the convergence to each equilibrium separately. The test cases are distinguished by different friction values to investigate the behavior of fast relaxation towards the respective equilibrium manifold.  

The rest of this paper is organized as follows: In Section \ref{sec:models}, we recall the shallow water models, among them the three hyperbolic models namely HSWME, $\beta$-HSWME and SWLME. We give novel hyperbolicity proofs for the HSWME and $\beta$-HSWME models for arbitrary $N$. Three equilibrium manifolds for different flow conditions are identified at the end of the section.
Section \ref{sec:analysis} recalls the stability conditions and applies them to the water-at-rest equilibrium and to the constant-velocity equilibrium to show equilibrium stability. In case of the bottom-at-rest equilibrium for the no-slip limit, a numerical example shows the linear instability of all three hyperbolic models. 
Section \ref{sec:simulations} presents numerical simulations of a dam break test case under various friction conditions that show convergence to the three different equilibrium manifolds.
The paper ends with a brief conclusion.

\section{Shallow water moment models}
\label{sec:models}
In this section, we will recall the shallow water moment models, for which we will derive the equilibrium manifolds and subsequently perform a stability analysis.

The standard shallow water equations in one horizontal direction $x$ for water height $h$ and mean velocity $u_m$ using a flat bottom topography are given by
\begin{equation}\label{e:SWE}
    \partial_t
    \begin{pmatrix}
    h\\
    h u_m\\
    \end{pmatrix} +\partial_x
    \begin{pmatrix}
    h u_m\\
    h u_m^2 + \frac{1}{2}g h^2 \\
    \end{pmatrix} =
    -\frac{\nu}{\lambda}
   \begin{pmatrix}
    0\\
    u_m\\
    \end{pmatrix},
\end{equation}
where $\lambda$ is the slip length, and $\nu$ the kinematic viscosity, modeling a Newtonian fluid.

Standard shallow water equations do not allow for the representation of a varying horizontal flow velocity. In other words, the horizontal velocity is constant in vertical direction and only the mean velocity $u_m$ is used. This is consistent with the assumption that the length scale $L$ of the problem is much larger than the water height $h$, i.e. $L\gg h$, leading to only small portions of water flowing in vertical direction. In particular, the flow does not include small features like vortexes. Otherwise the problem requires more complex models like a full solution of the Navier-Stokes equations. 
However, small deviations from constant horizontal velocity profiles often occur in applications, especially together with friction at the bottom, which slows down the flow only at the bottom and leads to a boundary layer close to the bottom of the flow. In the standard shallow water equations, this cannot be represented as the velocity profile is constant. A new model for shallow flows that mitigates this problem was developed in \cite{kowalski2018moment}. The model is based on the following two main ideas:

The first idea is the introduction of a scaled vertical position variable $\zeta(t,x)$, which is defined by
\begin{equation*}
    \zeta(t,x):= \frac{z-h_b(t,x)}{h_s(t,x)-h_b(t,x)}=\frac{z-h_b(t,x)}{h(t,x)},
\end{equation*}
with $h(t,x)=h_s(t,x)-h_b(t,x)$ the water height from the bottom $h_b$ to the surface $h_s$. This transforms the vertical $z$-direction from a physical space to a projected space $\zeta:[0,T]\times \mathbb{R}\rightarrow[0,1]$, see \cite{Koellermeier2020,kowalski2018moment}.

The second idea is a moment expansion of the velocity variable, which is used for expressing more complex velocities, e.g., linear, quadratic or cubic, in the transformed vertical direction. We thus expand $u:[0,T]\times \mathbb{R}\times[0,1]\rightarrow\mathbb{R}$ as
\begin{equation}\label{expansion}
    u(t,x,\zeta)=u_m(t,x)+\sum_{j=1}^{N}\alpha_j(t,x)\phi_j(\zeta).
\end{equation}
Here $u_m:[0,T]\times\mathbb{R}\rightarrow\mathbb{R}$ is the mean velocity and  $\phi_j:[0,1]\rightarrow\mathbb{R}$ are \emph{scaled Legendre polynomials} of degree $j$ defined by
\begin{equation} \label{eq:defphi}
  \phi_j(\zeta) = \frac{1}{j!} \frac{d^j}{d\zeta^j} (\zeta - \zeta^2)^j.
\end{equation}
The first two polynomials read $\phi_1(\zeta) = 1-2\zeta$ and $\phi_2(\zeta) = 1-6\zeta+6\zeta^2$. A basic property of $\phi_j$ is that $\phi_j(0) = 1$. Meanwhile, they form a group of orthogonal basis functions as \cite{kowalski2018moment} 
$$
\int_0^1 \phi_m \phi_n d\zeta = \frac{1}{2n+1} \delta_{mn},
$$
with Kronecker delta $\delta_{mn}$.
$\alpha_j:[0,T]\times\mathbb{R}\rightarrow\mathbb{R}$ with $j\in [1,2,\ldots,N]$ are the corresponding \emph{basis coefficients} at time $t$ and position $x$, also called \emph{moments}. Depending on the values of the coefficients, different horizontal velocity profiles can be described, which leads to an extension compared to the classical shallow water equations \eqref{e:SWE}, where the horizontal velocity is constant, compare \cite{kowalski2018moment}. In the expansion, $N \in \mathbb{N}$ is the order of the velocity expansion and at the same time the maximum degree of the Legendre polynomials. A larger $N$ typically allows for representation of more complex flows, whereas $N=0$ corresponds to the constant velocity profile of the standard shallow water equations \eqref{e:SWE}.

In the shallow water setting, the mean velocity $u_m$ should normally be relatively large in comparison to the deviation part $\sum_{j=1}^{N}\alpha_j(t,x)\phi_j(\zeta)$, see also \cite{kowalski2018moment,Pimentel2020}. Otherwise, small deviations can lead to a change of sign for the velocity profile and back streaming can occur. This can lead to small vortexes in the flow, a phenomenon that is not in agreement with the shallow water regime where the lengths scales are much larger than the water height. In that sense, we only consider velocity distributions that do not include a change of sign, i.e. $u(t,x,\zeta)\geq 0$ without loss of generality. While this seems to be a severe restriction, we note that similar restrictions also hold for other applications, for example, in kinetic theory where the probability density function $f$ is non negative or in rarefied gases where the temperature of the gas is non negative \cite{Koellermeier2017b,Torrilhon2016}.

Equations for the evolution of the basis coefficients are computed by insertion of the expansion into the Navier-Stokes equations, which have been properly transformed to the new $\zeta(t,x)$ variable. Subsequently, the equations are projected onto the Legendre polynomials of degree $i=1,\ldots, N$, which gives one additional equation for each coefficient in the expansion. We refer to \cite{kowalski2018moment} for more details.

The model can be derived in closed form \cite{Pimentel2020} with the help of the precomputed terms $A_{ijk}, B_{ijk}, C_{ij}$ given by 
\begin{equation}
    A_{ijk} = (2i+1) \int_{0}^{1} \phi_i \phi_j \phi_k \,d\zeta,
 \end{equation}
\begin{equation}
    B_{ijk} = (2i+1) \int_{0}^{1} \partial_{\zeta} \phi_i \left( \int_{0}^{\zeta} \phi_j \, d\hat{\zeta} \right) \phi_k \,d\zeta,
\end{equation}
\begin{equation} \label{eq:Cij}
    C_{ij} = \int_{0}^{1} \partial_{\zeta} \phi_i \, \partial_{\zeta} \phi_j \,d\zeta.
\end{equation}

We then write the model as
\begin{equation}\label{SWME_arbitrary}
    \partial_t U + \partial_x F = Q \partial_x U + S,
\end{equation}
with variables $U = \left(h, hu, h\alpha_1, \ldots, h\alpha_N \right)^T \in \mathbb{R}^{N+2}$, the flux Jacobian (also called conservative matrix) $\frac{\partial F}{\partial U}$ given by
\begin{equation*}
    \frac{\partial F}{\partial U} = \begin{bmatrix}
    0 & 1 & 0 & \hdots & 0 \\
    gh -u^2 - \displaystyle\sum_{i=1}^N \frac{\alpha_i}{2i+1} & 2u & \frac{2\alpha_1}{2\cdot 1 +1} & \hdots & \frac{2\alpha_N}{2N+1} \\
    -2u\alpha_1 - \displaystyle\sum_{j,k=1}^N A_{1jk} \alpha_j \alpha_k & 2\alpha_1 & 2 u \delta_{11} + 2 \displaystyle\sum_{k=1}^N A_{11k} \alpha_k & \hdots & 2 u \delta_{1N} + 2 \displaystyle\sum_{k=1}^N A_{1Nk} \alpha_k\\
    \vdots & \vdots & \vdots & \ddots & \vdots\\
    -2u\alpha_N - \displaystyle\sum_{j,k=1}^N A_{Njk} \alpha_j \alpha_k & 2\alpha_N & 2 u \delta_{N1} + 2 \displaystyle\sum_{k=1}^N A_{N1k} \alpha_k & \hdots & 2 u \delta_{NN} + 2 \displaystyle\sum_{k=1}^N A_{NNk} \alpha_k
    \end{bmatrix},
\end{equation*}
and the non-conservative matrix
\begin{equation*}
    Q=\begin{bmatrix}
    0 & 0 & 0 & \hdots & 0 \\
    0 & 0 & 0 & \hdots & 0 \\
    0 & 0 & u \delta_{11} + \displaystyle\sum_{k=1}^N B_{11k} \alpha_k & \hdots & u \delta_{1N} + \displaystyle\sum_{k=1}^N B_{1Nk} \alpha_k\\
    \vdots & \vdots & \vdots & \ddots & \vdots\\
    0 & 0 & u \delta_{N1} + \displaystyle\sum_{k=1}^N B_{N1k} \alpha_k & \hdots & u \delta_{NN} + \displaystyle\sum_{k=1}^N B_{NNk} \alpha_k\\
    \end{bmatrix}.
\end{equation*}
The friction term on the right-hand side $S=(0,S_0,S_1,\dots,S_N)^T \in \mathbb{R}^{N+2}$ is defined in \cite{kowalski2018moment} as $S_0 = 0$ and 
\begin{equation} \label{eq:hswmsrc}
S_i = -\left(2i+1\right) \frac{\nu}{\lambda} \left( u + \sum_{j=1}^N \left( 1 + \frac{\lambda}{h} C_{ij} \right) \alpha_j \right), i=0,\ldots,N.
\end{equation}

The resulting system \eqref{SWME_arbitrary}, called Shallow Water Moment Equations (SWME), can be written in the form of \eqref{eq:1d1orderpde}, where the unknowns are given by $U=[h,hu_m, h\alpha_1,\dots, h\alpha_N]^T \in \mathbb{R}^{N+2}$, the matrix $A(U) \in \mathbb{R}^{(N+2)\times(N+2)}$ models the (conservative and non-conservative) transport part and the right-hand side $S(U) \in \mathbb{R}^{N+2}$ models the friction term. The explicit form of the transport matrix $A= \frac{\partial F}{\partial U} - Q$ can easily be obtained from the explicit terms above. For details on the derivation of the different terms, we refer to \cite{kowalski2018moment}.


We recall the formal definition of hyperbolicity for the 1D first-order PDE of the form \eqref{eq:1d1orderpde}.

\begin{definition}
The system \eqref{eq:1d1orderpde} is called \emph{hyperbolic} if $A(U)$ has $N+2$ linearly independent and real eigenvectors. The system is called \emph{strictly hyperbolic} if $A(U)$ has $N+2$ distinct real eigenvalues.
\end{definition}
Obviously the system is hyperbolic if it is strictly hyperbolic.

It was already noted in \cite{kowalski2018moment}, that the SWME model is not hyperbolic for values $N>1$. Loosing hyperbolicity can lead to instabilities and non-physical values during numerical simulations. In \cite{Koellermeier2020}, the hyperbolicity was studied in more detail and a breakdown of hyperbolicity inducing instable oscillations in time could be found for standard simulations.

\subsection{HSWME}
The SWME model lacks hyperbolicity. The so-called Hyperbolic Shallow Water Moment Equations (HSWME) \cite{Koellermeier2020} overcome this problem by a linearization of the expansion around linear velocity deviations, i.e. the case $N=1$. Effectively, this corresponds to setting $\alpha_i = 0$ for $i=2,\ldots,N$ in the SWME model matrices. This leads to a simplification of the model including modifications in the momentum equation and the higher order moment equations. 

Using $U=[h,hu_m, h\alpha_1,\dots, h\alpha_N]^T \in \mathbb{R}^{N+2}$  
the system is written in non-conservative form as
\begin{equation} \label{eq:hswme}
  \partial_t U + A_H \partial_x U = S(U).
\end{equation}
Here the system matrix $A_H = A_H(U)\in \mathbb{R}^{(N+2)\times(N+2)}$ has the form:
\begin{equation} \label{eq:hswmmat}
    A_H = 
    \begin{bmatrix}
        0 & 1 & &&& \\
        gh-u_m^2-\frac{1}{3}\alpha_1^2 & 2u_m & \frac{2}{3}\alpha_1 &&& \\
        -2u_m\alpha_1 & 2\alpha_1 & u_m & \frac{3}{5}\alpha_1 && \\
        -\frac{2}{3}\alpha_1^2 & 0 & \frac{1}{3}\alpha_1 & u_m & \ddots & \\
        &&& \ddots & \ddots & \frac{N+1}{2N+1}\alpha_1 \\
        &&&& \frac{N-1}{2N-1}\alpha_1 & u_m
    \end{bmatrix},
\end{equation}
which again is the same as using $\alpha_i = 0$ for $i=2,\ldots,N$ in the flux Jacobian and the non-conservative matrix in the original model \eqref{SWME_arbitrary}. Namely, the HSWME model only depends non-linearly on $\alpha_1$ and no longer on the higher coefficients $\alpha_i = 0$ for $i=2,\ldots,N$.

The source term $S(U) = [0,S_0,\dots,S_N]^T \in \mathbb{R}^{N+2}$ of the HSWME model is the same as \eqref{eq:hswmsrc}.

The major advantage of HSWME is that the lack of hyperbolicity of the original SWME is overcome. The previous work \cite{Koellermeier2020} proves that the system matrix $A_H$ is hyperbolic at least for $N \le 150$. 
Indeed, by improving the method in  \cite{Koellermeier2020}, we can further show that

\begin{theorem} \label{thm:hswme}
The HSWME model (\ref{eq:hswme}) of any order $N$ is globally hyperbolic. Moreover, the eigenvalues are
$$
\begin{aligned}
  z_{1,2} &= u_m \pm \sqrt{gh + \alpha_1^2}, \\
  z_{i+2} &= u_m + r_{i,N} \alpha_1, \quad i=1,2,\dots,N,
\end{aligned}
$$
where $r_{i,N} \in \mathbb{R}$ is the $i$-th root of the real polynomial $p_N(z)$ of degree $N$, defined by the recursion $p_k(z) = z p_{k-1}(z) - b_k p_{k-2}(z)$, for
$2 \le k \le N$, $p_1(z) = 1$, $b_k = \frac{(k-1)(k+1)}{(2k-1)(2k+1)}$.
\end{theorem}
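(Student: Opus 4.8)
The plan is to compute the characteristic polynomial $\chi_N(\lambda):=\det(\lambda I-A_H)$ explicitly, exploiting that $A_H$ in \eqref{eq:hswmmat} is tridiagonal except for a small dense block in its top-left corner, and then to read off the eigenvalues. First I would dispose of the degenerate case $\alpha_1=0$: there $A_H$ is block-diagonal, the $2\times2$ shallow-water block $\left[\begin{smallmatrix}0&1\\ gh-u_m^2&2u_m\end{smallmatrix}\right]$ having the two distinct real eigenvalues $u_m\pm\sqrt{gh}$ (recall $h>0$), the rest being the scalar block $u_mI_N$; hence $A_H$ is diagonalizable and the stated formulas hold trivially with $r_{i,N}\alpha_1=0$. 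From now on assume $\alpha_1\neq0$.

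The heart of the argument is a three-term recursion for $\chi_N$. Expanding $\det(\lambda I-A_H)$ along its last row — which for $N\ge3$ has only the entries $-\tfrac{N-1}{2N-1}\alpha_1$ (in the $\alpha_{N-1}$ column) and $\lambda-u_m$ (on the diagonal), and whose deletion restores the order-$(N-1)$ HSWME matrix in the corner — I expect to obtain
\[
  \chi_N(\lambda)=(\lambda-u_m)\,\chi_{N-1}(\lambda)-b_N\,\alpha_1^2\,\chi_{N-2}(\lambda),\qquad N\ge3,
\]
with $b_N=\tfrac{N-1}{2N-1}\cdot\tfrac{N+1}{2N+1}=\tfrac{(N-1)(N+1)}{(2N-1)(2N+1)}$, exactly the coefficient in the statement; the point is that the product of the matching sub- and super-diagonal entries is $b_N\alpha_1^2$, and that the nontridiagonal couplings to $h$ and $hu_m$ (the entries $-2u_m\alpha_1$, $2\alpha_1$, $-\tfrac23\alpha_1^2$ in the $\alpha_1$- and $\alpha_2$-rows) stay confined to the corner, so the sub-minor next to the off-diagonal entry collapses to $\chi_{N-2}$. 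The two base cases, $\chi_1(\lambda)=(\lambda-u_m)\big((\lambda-u_m)^2-gh-\alpha_1^2\big)$ and $\chi_2(\lambda)=\big((\lambda-u_m)^2-gh-\alpha_1^2\big)\big((\lambda-u_m)^2-\tfrac15\alpha_1^2\big)$, I would check by direct $3\times3$ and $4\times4$ expansion. Since the acoustic factor $(\lambda-u_m)^2-gh-\alpha_1^2$ divides $\chi_1$ and $\chi_2$, the recursion shows it divides every $\chi_N$; writing $\chi_N(\lambda)=\big((\lambda-u_m)^2-gh-\alpha_1^2\big)\,\alpha_1^{N}p_N\!\big(\tfrac{\lambda-u_m}{\alpha_1}\big)$ and inserting this into the recursion produces precisely $p_k(z)=z\,p_{k-1}(z)-b_k\,p_{k-2}(z)$ with $p_0=1$, $p_1=z$, so $p_N$ is the monic degree-$N$ polynomial of the theorem. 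This yields the eigenvalues $z_{1,2}=u_m\pm\sqrt{gh+\alpha_1^2}$ (real since $gh+\alpha_1^2>0$) and $z_{i+2}=u_m+r_{i,N}\alpha_1$.

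It then remains to show $p_N$ has $N$ real, simple roots and that $A_H$ is genuinely hyperbolic. Since all $b_k>0$, the recursion identifies $p_N$ with the characteristic polynomial of the symmetric Jacobi matrix with zero diagonal and off-diagonal entries $\sqrt{b_2},\dots,\sqrt{b_N}$; a real symmetric tridiagonal matrix with nonvanishing off-diagonals has real and simple eigenvalues, which are the $r_{i,N}$ (equivalently, the $p_k$ are orthogonal polynomials by Favard's theorem). Moreover $b_k=\tfrac{k^2-1}{4k^2-1}<\tfrac14$, so every Gershgorin row-sum of that Jacobi matrix is $\sqrt{b_j}+\sqrt{b_{j+1}}<1$, giving $|r_{i,N}|<1$. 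Because $\sqrt{gh+\alpha_1^2}>|\alpha_1|>|r_{i,N}\alpha_1|$, the two acoustic eigenvalues are distinct from one another and from all $z_{i+2}$, and the $z_{i+2}$ are mutually distinct; hence $A_H$ has $N+2$ distinct real eigenvalues and \eqref{eq:hswme} is strictly, hence globally, hyperbolic.

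The step I expect to be the main obstacle is the bookkeeping in the cofactor expansion that yields the recursion — in particular checking the small cases ($N=3$ and the base cases) to be sure the corner couplings with $h$ and $hu_m$ do not alter the coefficient $b_N$, i.e. that the recursion "switches on" cleanly as soon as the bottom two rows become purely tridiagonal. Everything downstream of the recursion (the factorization, and the Jacobi-matrix/orthogonal-polynomial argument for reality, simplicity, and $|r_{i,N}|<1$) is routine.
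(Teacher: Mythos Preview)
Your proof is correct and, in two respects, cleaner than the paper's. The paper imports the factorization $\chi_A(\lambda)=\big((\lambda-u_m)^2-gh-\alpha_1^2\big)\,p_N\!\big(\tfrac{\lambda-u_m}{\alpha_1}\big)$ from earlier work and then establishes that $p_N$ has $N$ distinct real roots by an explicit interlacing induction: assuming the roots of $p_k$ separate those of $p_{k+1}$ and tracking the sign pattern of $p_k$ at the roots of $p_{k+1}$, it shows the same holds one level up. You instead derive the factorization yourself via the last-row Laplace recursion (which is indeed valid for $N\ge3$, since deleting the last row/column of $A_H^{(N)}$ returns $A_H^{(N-1)}$ and the second minor collapses to $A_H^{(N-2)}$), and then replace the interlacing induction by the observation that a three-term recursion with positive $b_k$ makes $p_N$ the characteristic polynomial of a real symmetric Jacobi matrix with nonzero off-diagonals, hence has real simple roots --- a one-line invocation of standard tridiagonal/Favard theory. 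Your Gershgorin step ($b_k<\tfrac14\Rightarrow|r_{i,N}|<1$) is an extra dividend the paper does not extract: it separates the acoustic pair $u_m\pm\sqrt{gh+\alpha_1^2}$ from the moment eigenvalues $u_m+r_{i,N}\alpha_1$, so you actually get strict hyperbolicity for $\alpha_1\neq0$ with all $N+2$ eigenvalues distinct, whereas the paper asserts this without checking the cross-collision. The only place to be careful, as you note, is the bookkeeping at $N=2,3$ where the $(-2u_m\alpha_1,2\alpha_1)$ and $(-\tfrac23\alpha_1^2,0)$ corner entries live; your plan to verify those base cases directly handles this.
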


\begin{proof}
We first give a sketch on the results in  \cite{Koellermeier2020}. The Theorem 3.2 in \cite{Koellermeier2020} shows that the characteristic polynomial of $A_H$ has the form:
$$
\chi_A(z) = \left( (z-u_m)^2 - gh - \alpha_1^2 \right) \cdot p_N \left(\frac{z-u_m}{\alpha_1} \right),
$$
and $p_N = p_N(z)$ is the characteristic polynomial of the matrix $A_2^{(N)} \in \mathbb{R}^{N\times N}$ defined as
$$
A_2^{(N)} =
  \begin{bmatrix}
    0 & c_2 & & \\
    a_2 & 0 & \ddots & \\
    & \ddots & \ddots & c_N \\
    & & a_N & 0
  \end{bmatrix}.
$$
with entries $a_i = \frac{i-1}{2i-1}$ and $c_i=\frac{i+1}{2i+1}$, for $2 \le i \le N$. 

The first factor of $\chi_A(z)$ gives the two distinct eigenvalues
$$
z_{1,2} = u_m \pm \sqrt{gh + \alpha_1^2}.
$$
From the structure of $A_2^{(N)}$, it is observed in \cite{Koellermeier2020} that $p_N(z)$ follows a three-term recurrence formula:
$$
    p_k(z) = z p_{k-1}(z) - b_k p_{k-2}(z) \text{ for any } 2 \le k \le N,
$$
with $b_i = a_i c_i > 0$.
Then \cite{Koellermeier2020} leaves the job to show formally that $p_N(z)$ has $N$ distinct real roots when $\alpha_1 \ne 0$; Namely, the system is strictly hyperbolic when $\alpha_1 \ne 0$. This was only shown numerically up to $N=150$ in \cite{Koellermeier2020} and will be extended analytically for arbitrary $N \in \mathbb{N}$.

To do this, we prove the following propositions by induction:
\begin{itemize}
\item $p_{k+1}(z)$ has $k+1$ different real roots $z_1^{(k+1)}< z_2^{(k+1)} < \dots < z_{k+1}^{(k+1)}$, and the $k$ roots of $p_k(z)$ lie between them:
  $$
    z_1^{(k+1)} < z_1^{(k)} < z_2^{(k+1)} < \dots < z_{k}^{(k+1)} < z_k^{(k)} < z_{k+1}^{(k+1)}.
  $$
\item If $k$ is even, the signs of the sequence $p_k \left(z_1^{(k+1)}\right), \ p_k \left(z_2^{(k+1)}\right), \dots, \ p_k\left(z_{k+1}^{(k+1)}\right)$ are $(+,-,+,-,\dots,+,-,+)$; otherwise the signs are $(+,-,+,-,\dots,+,-)$.
\end{itemize}

The two properties can be easily verified for $p_0(z)=1$, $p_1(z)=z$, $p_2(z) = z^2-b_2$, and $p_3(z) = z^3 - (b_2+b_3) z$.
Suppose by induction that the properties hold for $p_0(z), p_1(z), \dots, p_{k+1}(z)$, and we look into $p_{k+2}(z)$. Obviously,
$$
p_{k+2} \left(z_i^{(k+1)}\right) = -b_{k+2} p_k\left(z_i^{(k+1)}\right), \text{ for } 1\le i \le k+1.
$$
Now, we focus on the $(k+1)$-term sequence $\{ p_{k+2}\left(z_i^{(k+1)}\right) \}_{1\le i \le k+1}$. According to the inductive assumption, if $k$ is even, the signs of the sequence are $(-,+,-,+,\dots,-,+,-)$; otherwise the signs are $(-,+,-,+,\dots,-,+)$.
Because $p_k$ is a $z$-polynomial of degree $k$, we can conclude in both cases that $p_{k+2}$ has $k+2$ different real roots $z_1^{(k+2)} < z_2^{(k+2)} < \cdots < z_{k+2}^{(k+2)}$, and that the roots of $p_{k+1}$ lie between them:
$$
z_1^{(k+2)} < z_1^{(k+1)} < z_2^{(k+2)} < \cdots < z_{k+1}^{(k+2)} < z_{k+1}^{(k+1)} < z_{k+2}^{(k+2)}.
$$
As for the $(k+2)$-term sequence $\{ p_{k+1}\left(z_i^{(k+2)}\right) \}_{1 \le i \le k+2}$, the signs are $(+,-,+,-,\dots,+,-,+)$ when $k+1$ is even; otherwise the signs are $(+,-,+,-,\dots,+,-)$. 
So this completes the proof of strict hyperbolicity when $\alpha_1 \ne 0$.

When $\alpha_1=0$, the system matrix $A_H$ in Eq.(\ref{eq:hswmmat}) is largely simplified to
$$
\begin{bmatrix}
0 & 1 & \\
gh-u_m^2 & 2u_m & \\
&& u_m I_N
\end{bmatrix},
$$
where $I_N \in \mathbb{R}^{N\times N}$ is the unit matrix.
The system matrix has $N+2$ linearly independent eigenvectors in $\mathbb{R}^{N+2}$: $(1,u+\sqrt{gh},0,\dots,0)^T$, $(1,u-\sqrt{gh},0,\dots,0)^T$, $(0,0,1,\dots,0)^T$, $\dots$, $(0,0,0,\dots,1)^T$. Therefore, the system is still hyperbolic even though it is not strictly hyperbolic.

The form of the eigenvalues can be directly obtained as the roots of the characteristic polynomial of $A_H$.
\end{proof}

We remark that in Section 3 the hyperbolicity is considered as one of the structural stability conditions for this kind of first-order PDEs.

\subsection{$\beta$-HSWME}
The hyperbolic regularization leading to the HSWME model is not the only possibility to change the original SWME and obtain a hyperbolic system of equations. In \cite{Koellermeier2020}, another class of models was presented that allows for some freedom in the eigenvalues of the system. The model is constructed to obtain eigenvalues that can be prescribed by some target polynomial. One example that was analytically derived in \cite{kowalski2018moment} is the so-called $\beta$-HSWME model. The system matrix of the $\beta$-HSWME model only differs from the HSWME model in one entry in the last row:
\begin{equation} \label{eq:b-hswmmat}
    A_{\beta} = 
    \begin{bmatrix}
        0 & 1 & &&& \\
        gh-u_m^2-\frac{1}{3}\alpha_1^2 & 2u_m & \frac{2}{3}\alpha_1 &&& \\
        -2u_m\alpha_1 & 2\alpha_1 & u_m & \frac{3}{5}\alpha_1 && \\
        -\frac{2}{3}\alpha_1^2 & 0 & \frac{1}{3}\alpha_1 & u_m & \ddots & \\
        &&& \ddots & \ddots & \frac{N+1}{2N+1}\alpha_1 \\
        &&&& \frac{2N^2-N-1}{2N^2+N-1}\alpha_1 & u_m
    \end{bmatrix}.
\end{equation}

We now extend the proof of hyperbolicity for the $\beta$-HSWME model from \cite{kowalski2018moment} from $N<100$ to arbitrary $N \in \mathbb{N}$.

\begin{theorem}
The $\beta$-HSWME, with the coefficient matrix in the form of Eq.(\ref{eq:b-hswmmat}), is globally hyperbolic for any order $N$. Moreover, the eigenvalues are
$$
\begin{aligned}
  z_{1,2} &= u_m \pm \sqrt{gh + \alpha_1^2}, \\
  z_{i+2} &= u_m + r_{i,N} \alpha_1, \quad i=1,2,\dots,N,
\end{aligned}
$$
where $r_{i,N}$ is the $i$-th root of the Legendre polynomial of degree $N$.
\end{theorem}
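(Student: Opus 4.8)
The plan is to build on the structure already exposed in the proof of Theorem~\ref{thm:hswme} and then to recognise the resulting reduced characteristic polynomial as a classical orthogonal polynomial. The first observation is that $A_\beta$ in \eqref{eq:b-hswmmat} differs from $A_H$ in \eqref{eq:hswmmat} in the single entry in position $(N+2,N+1)$, which sits inside the tridiagonal lower-right block and which, in the notation of the proof of Theorem~\ref{thm:hswme}, is the last subdiagonal entry $a_N=\tfrac{N-1}{2N-1}$ of $A_2^{(N)}$; in $A_\beta$ it is replaced by $a_N^\beta=\tfrac{2N^2-N-1}{2N^2+N-1}=\tfrac{(2N+1)(N-1)}{(2N-1)(N+1)}$. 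Since the factorisation $\chi_{A_\beta}(z)=\bigl((z-u_m)^2-gh-\alpha_1^2\bigr)\,p_N^\beta\!\bigl((z-u_m)/\alpha_1\bigr)$ of \cite{Koellermeier2020} only uses the border-plus-tridiagonal shape of the matrix, it carries over verbatim, with $p_N^\beta$ the characteristic polynomial of $A_2^{(N)}$ after this one replacement. Because only the last coefficient of the three-term recurrence changes, and $a_N^\beta c_N=\tfrac{N-1}{2N-1}$ with $c_N=\tfrac{N+1}{2N+1}$, one gets $p_k^\beta=p_k$ for $k\le N-1$ and $p_N^\beta(z)=z\,p_{N-1}(z)-\tfrac{N-1}{2N-1}\,p_{N-2}(z)$, where $p_k$ are the polynomials from Theorem~\ref{thm:hswme}.

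The crux — and what I expect to be the main obstacle — is to show that $p_N^\beta$ is a nonzero scalar multiple of the Legendre polynomial $P_N$. I would first prove that $p_k$ equals the monic polynomial proportional to $P_{k+1}'$, the derivative of the degree-$(k+1)$ Legendre polynomial. This goes by induction: the base cases $p_0=1$, $p_1=z$ coincide with the monic renormalisations of $P_1'=1$ and $P_2'=3z$, and differentiating the Legendre three-term recurrence together with the identity $(2n+1)P_n=P_{n+1}'-P_{n-1}'$ and the leading-coefficient ratio $\ell(P_m)/\ell(P_{m-1})=\tfrac{2m-1}{m}$ shows that the monic renormalisations $\widehat{P_{k+1}'}$ satisfy the same recurrence $p_k=z p_{k-1}-b_k p_{k-2}$ with $b_k=\tfrac{(k-1)(k+1)}{(2k-1)(2k+1)}$ and the same initial data as the $p_k$ (these are, equivalently, the monic Gegenbauer polynomials $C_k^{(3/2)}$). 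Substituting $p_{N-1}=P_N'/\bigl(N\,\ell(P_N)\bigr)$ and $p_{N-2}=P_{N-1}'/\bigl((N-1)\,\ell(P_{N-1})\bigr)$ into the formula for $p_N^\beta$, using $\ell(P_{N-1})=\tfrac{N}{2N-1}\,\ell(P_N)$, and finally applying the classical identity $z\,P_N'(z)-P_{N-1}'(z)=N\,P_N(z)$, the two terms collapse to $p_N^\beta(z)=P_N(z)/\ell(P_N)$, i.e.\ $p_N^\beta$ is the monic Legendre polynomial of degree $N$.

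It then remains only to invoke the classical fact that $P_N$ has exactly $N$ simple real zeros, all in $(-1,1)$ (the Gauss--Legendre nodes, immediate from orthogonality of $\{P_n\}$ on $[-1,1]$). Writing them as $r_{1,N}<\dots<r_{N,N}$, the factored form of $\chi_{A_\beta}$ gives, for $\alpha_1\neq0$, the $N+2$ pairwise distinct real eigenvalues $u_m\pm\sqrt{gh+\alpha_1^2}$ and $u_m+r_{i,N}\alpha_1$, $i=1,\dots,N$, so the system is strictly hyperbolic; for $\alpha_1=0$ the matrix $A_\beta$ degenerates to exactly the same block form (a $2\times2$ shallow-water block plus $u_m I_N$) as in the proof of Theorem~\ref{thm:hswme}, hence still has $N+2$ independent eigenvectors and is hyperbolic. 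The eigenvalue list in the statement is then read off the factorisation. Beyond the standard orthogonal-polynomial facts, the two delicate points are (i) checking that the factorisation of \cite{Koellermeier2020} is genuinely unaffected by the single altered entry, and (ii) the normalisation bookkeeping in the second paragraph, which is precisely what forces $p_N^\beta$ to equal $P_N$ up to scaling rather than $P_N$ plus a lower-degree correction.
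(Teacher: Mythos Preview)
Your proposal is correct and reaches the same conclusion as the paper---that $p_N^\beta$ is the monic Legendre polynomial $\hat P_N$---but the middle argument is genuinely different. The paper never identifies the auxiliary polynomials $p_k$ with any classical family; instead it writes down the monic Legendre recurrence $\hat P_{N+1}=z\hat P_N-\tfrac{N^2}{4N^2-1}\hat P_{N-1}$ and verifies, by a direct three-line telescoping computation using only the recurrence $p_k=z p_{k-1}-b_k p_{k-2}$, that $p_{\beta,N+1}-z p_{\beta,N}+\tfrac{N^2}{4N^2-1}p_{\beta,N-1}=0$. Your route---first proving $p_k$ is the monic version of $P_{k+1}'$ (equivalently the monic Gegenbauer $C_k^{(3/2)}$), then collapsing $z p_{N-1}-\tfrac{N-1}{2N-1}p_{N-2}$ via the classical identity $zP_N'-P_{N-1}'=NP_N$---is more structural and has the pleasant by-product of explaining \emph{what} the $p_k$ are (and hence giving an alternative, orthogonality-based proof that their roots are real and simple, independently of the interlacing induction in Theorem~\ref{thm:hswme}). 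The paper's route is more self-contained: it needs nothing beyond the Legendre three-term recurrence and avoids the normalisation bookkeeping you flag as point~(ii). Both the factorisation step (your point~(i)) and the degenerate case $\alpha_1=0$ are handled identically in the two proofs.
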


\begin{proof}
Analogously to the proof of Theorem \ref{thm:hswme}, we only need to prove that the $r_{i,N}$'s in the eigenvalues $z_{i+2}$ ($1 \le i \le N$) are the roots of the Legendre polynomial of degree $N$. 
According to \cite{Koellermeier2020} these $r_{i,N}$'s are the eigenvalues of the matrix $A_{\beta,2}^{(N)} \in \mathbb{R}^{N \times N}$ defined as:
$$
A_{\beta,2}^{(N)} =
  \begin{bmatrix}
    0 & c_2 & & \\
    a_2 & 0 & \ddots & \\
    & \ddots & \ddots & c_N \\
    & & a_N' & 0
  \end{bmatrix},
$$
where $a_i = \frac{i-1}{2i-1}$, $c_i=\frac{i+1}{2i+1}$ ($2 \le i \le N$), and the modified last row coefficient reads $a_N' = \beta_{N+1}+a_N = \frac{(N-1)(2N+1)}{(N+1)(2N-1)}$.
Let us denote the characteristic polynomials of $A_{\beta,2}^{(N)}$ and $A_2^{(N)}$ (defined in the proof of Theorem \ref{thm:hswme}) as $p_{\beta,N}=p_{\beta,N}(z)$ and $p_N = p_N(z)$, respectively.
Using the same technique as in the proof of Theorem \ref{thm:hswme}, we see that $p_{\beta,0}=1$, $p_{\beta,1} = z$, $p_{\beta,2}=z^2 - \frac{1}{3}$, and for $N \ge 2$,
$$
p_{\beta,N} = z p_{N-1} - a_N'c_N p_{N-2} = z p_{N-1} - \frac{N-1}{2N-1} p_{N-2}.
$$
We recall the recurrence formula for $p_N$:
$$
p_N = z p_{N-1} - \frac{N^2-1}{4N^2-1} p_{N-2}.
$$

We claim that $p_{\beta,N}$ is exactly the monic Legendre polynomial $\hat{P}_N=\hat{P}_N(z)$ of degree $N$. This is true for $N=0,1,2$. So we only need to verify the recurrence formula for $N \ge 1$:
$$
\hat{P}_{N+1} = z \hat{P}_{N} - \frac{N^2}{4N^2-1} \hat{P}_{N-1}.
$$
This relation comes from the well-known recurrence relation for Legendre polynomials $P_N$: $(N+1)P_{N+1} = (2N+1) z P_{N} - N P_{N-1}$ and that $P_N = \frac{1 \cdot 3 \cdot 5 \cdot \ldots \cdot (2N-1)}{N!} \hat{P}_N$ \cite{Abramowitz1964}.
We prove this equality for $p_{\beta,N}$ by direct calculation:
$$
\begin{aligned}
& \quad p_{\beta,N+1} - z p_{\beta,N} + \frac{N^2}{4N^2-1} p_{\beta,N-1} \\
&= z p_N - \frac{N}{2N+1}p_{N-1} - z \left( zp_{N-1} - \frac{N-1}{2N-1}p_{N-2} \right) + \frac{N^2}{4N^2-1} \left( zp_{N-2} - \frac{N-2}{2N-3} p_{N-3} \right) \\
&= -\frac{N}{2N+1} p_{N-1} + z p_{N-2} \left( -\frac{N^2-1}{4N^2-1} + \frac{N-1}{2N-1} + \frac{N^2}{4N^2-1} \right) - \frac{N^2 (N-2)}{(4N^2-1)(2N-3)} p_{N-3} \\
&= -\frac{N}{2N+1} \left( p_{N-2} - zp_{N-2} + \frac{N(N-2)}{(2N-1)(2N-3)} p_{N-3} \right) = 0,
\end{aligned}
$$
which completes the proof.
\end{proof}

\subsection{SWLME}
In \cite{Pimentel2020} a modified model was introduced based on the linearization of only the nonlinear velocity terms. The model is called Shallow Water Linearized Moment Equations (SWLME). It was originally derived to simplify the nonlinear terms. However, it was shown in \cite{Pimentel2020} that the model allows for an easy analytical evaluation of steady states for the construction of a well-balancing numerical scheme. Moreover, the model is hyperbolic for all $N$, which is important for the stability analysis carried out in this paper. The model's system matrix reads
\begin{equation} \label{eq:n-hswmmat}
    A_{L} = 
    \begin{bmatrix}
        0 & 1 & 0 & \cdots & 0 \\
        gh-u_m^2-\frac{\alpha_1^2}{3} - \cdots - \frac{\alpha_N^2}{2N+1} & 2u_m & \frac{2}{3}\alpha_1 & \cdots & \frac{2}{2N+1}\alpha_N \\
        -2u_m\alpha_1 & 2\alpha_1 & u_m & & \\
        \vdots & \vdots & & \ddots &  \\
        -2u_m \alpha_N & 2\alpha_N && & u_m
    \end{bmatrix}
\end{equation}

According to the hyperbolicity proof in \cite{Pimentel2020} the eigenvalues are given by
\begin{equation*}
    \lambda_{1,2} = u_m \pm \sqrt{gh + \sum_{i=1}^N \frac{3 \alpha_i^2}{2i+1}}, \quad \textrm{ and } \quad \lambda_{i+2} = u_m, \textrm{ for } i=1,\ldots,N.
\end{equation*}
and the eigenvectors $v_i$, $i=1,\ldots, N+2$ are computed as
\begin{equation}
    v_{1,2} = \begin{bmatrix}
    \frac{1}{2 \alpha_n} \\
    \displaystyle\frac{1}{2 \alpha_n} \left( u + \sqrt{gh \pm \sum_{i=1}^N \frac{3 \alpha_i^2}{2i+1}}\right) \\
    \frac{\alpha_1}{\alpha_N} \\
    \vdots \\
    \frac{\alpha_N}{\alpha_N}
    \end{bmatrix}, 
    v_{i+2} = \begin{bmatrix}
    \displaystyle{\frac{6\alpha_{n+1-i}}{(2(n+1-i)+1) -3gh + \sum_{i=1}^N \frac{3 \alpha_i^2}{2i+1}}} \\
    \displaystyle\frac{6\alpha_{n+1-i}u}{(2(n+1-i)+1) -3gh + \sum_{i=1}^N \frac{3 \alpha_i^2}{2i+1}} \\
    \delta_{n+3-i,3} \\
    \vdots \\
    \delta_{n+3-i,N} 
    \end{bmatrix}
\end{equation}
for $i=1,\ldots, N$ and Kronecker delta $\delta_{i,j}$.

For more information on the derivation and properties of the SWLME, we refer to \cite{Pimentel2020}.

\subsection{Equilibrium manifolds}
\label{sec:equilibria}
We have introduced the three different shallow water moment models that fix the loss of hyperbolicity in the original model \cite{kowalski2018moment} by varying the coefficient matrix $A(U)$. 
Note that all the revised models are equipped with the same source term $S(U)$ as given in (\ref{eq:hswmsrc}). This allows to derive the equilibrium states separately from the specific model. This section is devoted to distinguishing the equilibrium states under different parameter settings, which will greatly facilitate our stability analysis in Section 3.

We denote the equilibrium manifold $\mathcal{E} := \{U \in G: S(U) = 0\}$, where $G\subset \mathbb{R}^{N+2}$ is the domain for the unknown variable $U$.

For different scenarios, we distinguish three different equilibrium manifolds, which we cover with the respective subsections below.

\subsubsection{Water-at-rest equilibrium for finite friction coefficients}
Without any assumptions on the friction coefficients $\lambda, \nu$ of the right-hand side source term $S(U)$ in Eq.(\ref{eq:hswmsrc}), we quantify the equilibrium manifold as follows:
\begin{theorem} \label{thm:eqrest}
The equilibrium manifold is 
$$
\mathcal{E} = \{ U \in G: u_m=\alpha_1=\dots=\alpha_N=0 \}.
$$
Namely, the equilibrium state, which is a one-dimensional subspace of $G \subset  \mathbb{R}^{N+2}$, represents the water-at-rest state.
\end{theorem}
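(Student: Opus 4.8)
The plan is to show the equilibrium manifold $\mathcal{E} = \{U \in G : S(U) = 0\}$ consists exactly of the states with $u_m = \alpha_1 = \dots = \alpha_N = 0$ by analysing the linear system $S_i = 0$, $i = 0, \dots, N$. Since $S_0 = 0$ automatically and the $h$-component of $S$ is zero, the condition $S(U) = 0$ reduces to the $N+1$ equations $S_i = 0$ from \eqref{eq:hswmsrc}, which (after dividing out the nonzero factor $-(2i+1)\nu/\lambda$) read
\begin{equation*}
u_m + \sum_{j=1}^N \left(1 + \frac{\lambda}{h} C_{ij}\right) \alpha_j = 0, \qquad i = 0, 1, \dots, N.
\end{equation*}
Treating $(u_m, \alpha_1, \dots, \alpha_N)$ as the unknown vector in $\mathbb{R}^{N+1}$, this is a homogeneous linear system whose coefficient matrix is $M = \mathbf{1}\,e_1^T + \begin{bmatrix} \mathbf{1} & \mathbf{1}\cdots\mathbf{1} \end{bmatrix} + \frac{\lambda}{h} \tilde{C}$, where $\tilde{C}$ is the $(N+1)\times N$ matrix with entries $C_{ij}$ padded by a zero first column. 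The claim is equivalent to this matrix having trivial kernel, i.e. full column rank $N+1$.

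The key step is to compute the structure of $C_{ij} = \int_0^1 \partial_\zeta\phi_i\,\partial_\zeta\phi_j\,d\zeta$ explicitly. First I would note $C_{0j} = 0$ for all $j$ since $\phi_0 \equiv 1$, so the $i = 0$ equation gives directly $u_m = -\sum_{j=1}^N \alpha_j$. Substituting this into the remaining $N$ equations ($i = 1, \dots, N$) eliminates $u_m$ and yields the reduced system $\frac{\lambda}{h}\sum_{j=1}^N C_{ij}\alpha_j = 0$, $i = 1, \dots, N$, because the $\sum_j \alpha_j$ terms cancel the $-\sum_j\alpha_j$ from $u_m$. Hence everything reduces to showing the $N \times N$ matrix $(C_{ij})_{1 \le i,j \le N}$ is nonsingular. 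For this I would either invoke the known closed form of $C_{ij}$ for scaled Legendre polynomials — it is a sparse matrix (essentially tridiagonal in the appropriate pairing, with $C_{ij} \ne 0$ only when $i+j$ is even and $|i-j|$ small) with an explicit nonzero pattern making its determinant computable — or argue more abstractly: $C$ is the Gram matrix of the linearly independent functions $\partial_\zeta\phi_1, \dots, \partial_\zeta\phi_N$ (which are nonzero polynomials of distinct degrees $0, 1, \dots, N-1$) with respect to the $L^2(0,1)$ inner product, hence positive definite and in particular invertible.

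The abstract Gram-matrix argument is cleanest and I would use it: $\{\partial_\zeta \phi_j\}_{j=1}^N$ spans the polynomials of degree $\le N-1$ and is a basis of that $N$-dimensional space, so its Gram matrix $C$ under the positive-definite bilinear form $\langle f, g\rangle = \int_0^1 fg\,d\zeta$ is positive definite, hence $\ker C = \{0\}$. Therefore $\alpha_1 = \dots = \alpha_N = 0$, and then $u_m = -\sum_j \alpha_j = 0$. Conversely $u_m = \alpha_1 = \dots = \alpha_N = 0$ visibly gives $S(U) = 0$, establishing the set equality; and since the only remaining free variable is $h > 0$, $\mathcal{E}$ is a one-dimensional subspace (a ray) of $G$, corresponding physically to water at rest. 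The main obstacle is purely bookkeeping — correctly isolating the $i = 0$ equation to decouple $u_m$ and confirming the $\frac{\lambda}{h}$ factor does not vanish (which holds since $h > 0$, $\lambda, \nu > 0$) — after which the Gram-matrix nonsingularity is immediate; no hard estimate or delicate argument is needed.
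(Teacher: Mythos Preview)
Your argument is correct and takes a genuinely different route from the paper. Both proofs begin the same way: use the $i=0$ equation (where $C_{0j}=0$) to eliminate $u_m=-\sum_j\alpha_j$, reducing the problem to showing the $N\times N$ matrix $(C_{ij})_{1\le i,j\le N}$ is nonsingular. At that point the paper derives two preparatory lemmas giving the explicit closed form $C_{mn}=0$ for $n-m$ odd and $C_{mn}=2m(m+1)$ for $n-m$ even ($m\le n$), then performs a column-subtraction Gaussian elimination on $C'=(2i+1)C_{ij}$ to exhibit a lower-triangular matrix with positive diagonal. You instead observe that $C$ is the Gram matrix of $\{\partial_\zeta\phi_j\}_{j=1}^N$ in $L^2(0,1)$, and since these are polynomials of pairwise distinct degrees $0,1,\dots,N-1$ they are linearly independent, forcing $C$ to be positive definite. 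Your route is shorter and more conceptual for this theorem in isolation; the paper's route has the payoff that the explicit $C_{ij}$ formula and the invertibility of $C'$ are reused verbatim in the later stability proofs (Theorems~4 and~5), so the computation is not wasted effort in the larger context. One small caution: your aside that $C$ is ``essentially tridiagonal'' is not accurate---the explicit formula shows $C_{mn}\ne 0$ whenever $n-m$ is even, so the matrix has a full checkerboard pattern---but you do not actually use this, so it does not affect the validity of your proof.
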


To prove the theorem we need to solve $S_i=0$ for $0\le i \le N+1$ based on \eqref{eq:hswmsrc}. This asks for a deeper look into the $C_{ij}$ defined in \eqref{eq:Cij}. We first state the following two Lemmas

\begin{lemma}
For $n\ge 2$,
$$
  \partial_{\zeta} \phi_n = \partial_{\zeta} \phi_{n-2} - (4n-2) \phi_{n-1}.
$$
\end{lemma}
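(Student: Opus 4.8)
The plan is to reduce the statement to a classical identity for the standard Legendre polynomials. The key observation is that the scaled Legendre polynomials of \eqref{eq:defphi} are precisely the usual Legendre polynomials $P_n$ (normalized by $P_n(1)=1$) pulled back along the affine map $x=1-2\zeta$, that is, $\phi_n(\zeta)=P_n(1-2\zeta)$. I would establish this directly from the Rodrigues formula $P_n(x)=\frac{1}{2^n n!}\frac{d^n}{dx^n}(x^2-1)^n$: since $x=1-2\zeta$ gives $x^2-1=-4(\zeta-\zeta^2)$ and $\frac{d}{dx}=-\tfrac12\frac{d}{d\zeta}$, one obtains
\[
  P_n(1-2\zeta)=\frac{1}{2^n n!}\left(-\tfrac12\right)^n\frac{d^n}{d\zeta^n}\big[(-4)^n(\zeta-\zeta^2)^n\big]=\frac{1}{n!}\frac{d^n}{d\zeta^n}(\zeta-\zeta^2)^n=\phi_n(\zeta),
\]
using $(-\tfrac12)^n(-4)^n=2^n$. (Alternatively one checks that both sides obey the same three-term recursion with the same initial data $1$ and $1-2\zeta$.)

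Granting this dictionary, the lemma is equivalent, via the chain rule $\partial_\zeta\phi_n(\zeta)=-2P_n'(1-2\zeta)$, to the well-known derivative recurrence $(2n-1)P_{n-1}(x)=P_n'(x)-P_{n-2}'(x)$ for Legendre polynomials (obtained by reindexing $(2n+1)P_n=\frac{d}{dx}(P_{n+1}-P_{n-1})$; see \cite{Abramowitz1964}). Indeed,
\[
  \partial_\zeta\phi_n-\partial_\zeta\phi_{n-2}=-2\big(P_n'(1-2\zeta)-P_{n-2}'(1-2\zeta)\big)=-2(2n-1)P_{n-1}(1-2\zeta)=-(4n-2)\phi_{n-1},
\]
which is the claim; the hypothesis $n\ge2$ is exactly what makes $\phi_{n-2}$ and the recurrence meaningful (with the convention $P_0'\equiv0$).

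Since the Legendre route is essentially a dictionary lookup, the only point requiring care there is the sign-and-power bookkeeping in the Rodrigues substitution above. If instead one wants a proof staying entirely within the calculus of \eqref{eq:defphi}, the route is to set $w=\zeta-\zeta^2$, $D=\partial_\zeta$, use $D(w^n)=n(1-2\zeta)w^{n-1}$ to write $D\phi_n=\tfrac{1}{n!}D^{n+1}(w^n)=\tfrac{1}{(n-1)!}D^n\big[(1-2\zeta)w^{n-1}\big]$, and expand by Leibniz (only the $j=0,1$ terms survive, since $1-2\zeta$ is linear). This produces a first-order relation among $D\phi_n$, $D\phi_{n-1}$ and $\phi_{n-1}$; applying it again at level $n-1$ and re-expressing the surviving lower-order Rodrigues-type derivatives in terms of $\phi_{n-2}$ yields the identity. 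I expect that last recombination — making the telescoping into $\partial_\zeta\phi_{n-2}$ explicit — to be the only delicate computation, which is why I would present the Legendre argument as the proof and leave the direct calculation as a remark.
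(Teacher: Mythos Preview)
Your proof is correct and takes a genuinely different route from the paper. You identify $\phi_n(\zeta)=P_n(1-2\zeta)$ via Rodrigues and then quote the classical derivative recurrence $(2n-1)P_{n-1}=P_n'-P_{n-2}'$. The paper instead stays entirely within the definition \eqref{eq:defphi} and does a three-line direct computation: starting from $\partial_\zeta\phi_n=\frac{1}{(n-1)!}\,\partial_\zeta^{\,n}\big[(1-2\zeta)(\zeta-\zeta^2)^{n-1}\big]$, it differentiates once more inside and then uses the algebraic identity $(1-2\zeta)^2=1-4(\zeta-\zeta^2)$, which immediately splits the expression into $\partial_\zeta\phi_{n-2}$ and $-(4n-2)\phi_{n-1}$. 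Your approach has the merit of situating $\phi_n$ within classical orthogonal-polynomial theory (and the paper does cite \cite{Abramowitz1964} elsewhere), but it imports an external identity; the paper's proof is self-contained and in fact shorter. Your closing remark that the direct route would require a ``delicate'' Leibniz recombination overlooks this shortcut: the substitution $(1-2\zeta)^2=1-4(\zeta-\zeta^2)$ makes the direct argument the cleaner of the two.
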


\begin{proof}
Based on the definition of $\phi_n$ in \eqref{eq:defphi}, we have
$$
\begin{aligned}
  \partial_{\zeta} \phi_n &= \frac{1}{n!} \frac{d^{n+1}}{d\zeta^{n+1}} (\zeta-\zeta^2)^n = \frac{1}{(n-1)!} \frac{d^n}{d\zeta^n} \left[ (\zeta-\zeta^2)^{n-1} (1-2\zeta) \right] \\
  &= \frac{1}{(n-1)!} \frac{d^{n-1}}{d\zeta^{n-1}} \left[ (n-1)(\zeta-\zeta^2)^{n-2}(1-2\zeta)^2 - 2(\zeta-\zeta^2)^{n-1} \right] \\
  &= \partial_{\zeta} \phi_{n-2} - (4n-2) \phi_{n-1}.
\end{aligned}
$$
\end{proof}

\begin{lemma}
For $m \le n$, the term $C_{mn}$ in \eqref{eq:Cij} fulfills 
$$
  C_{mn} = C_{nm} = \left \{
  \begin{aligned}
    0 \quad &\text{if $n-m$ is odd,} \\
    2m(m+1) \quad &\text{if $n-m$ is even.}
  \end{aligned}
  \right.
$$
\end{lemma}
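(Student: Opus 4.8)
\quad The plan is to combine the previous Lemma's recursion $\partial_\zeta\phi_n=\partial_\zeta\phi_{n-2}-(4n-2)\phi_{n-1}$ with the orthogonality relation $\int_0^1\phi_m\phi_n\,d\zeta=\frac{1}{2n+1}\delta_{mn}$ and a handful of elementary endpoint facts about the $\phi_j$. Symmetry $C_{mn}=C_{nm}$ is immediate from \eqref{eq:Cij}, so we may assume $m\le n$. First I would record: (i) the reflection identity $\phi_j(1-\zeta)=(-1)^j\phi_j(\zeta)$, which follows from \eqref{eq:defphi} because $(\zeta-\zeta^2)^j$ is invariant under $\zeta\mapsto 1-\zeta$ while each of the $j$ differentiations contributes a sign $-1$; combined with $\phi_j(0)=1$ this gives $\phi_j(1)=(-1)^j$ and, after differentiating, $\phi_j'(1)=(-1)^{j+1}\phi_j'(0)$. (ii) Expanding $(\zeta-\zeta^2)^j=\zeta^j-j\zeta^{j+1}+O(\zeta^{j+2})$ near $\zeta=0$ and differentiating $j$ times gives $\phi_j(\zeta)=1-j(j+1)\zeta+O(\zeta^2)$, hence $\phi_j'(0)=-j(j+1)$. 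I will also use throughout that $\phi_k$ is $L^2(0,1)$-orthogonal to every polynomial of degree $<k$, since $\phi_0,\dots,\phi_{k-1}$ span that space.

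For $n-m$ odd I would argue by a parity obstruction. Iterating the previous Lemma from the base cases $\partial_\zeta\phi_0=0$ and $\partial_\zeta\phi_1=-2=-2\phi_0$ shows by induction on $j$ that $\partial_\zeta\phi_j$ lies in $\mathrm{span}\{\phi_k:\ k<j,\ k\equiv j-1\!\!\pmod 2\}$. Expanding $\partial_\zeta\phi_m$ and $\partial_\zeta\phi_n$ in these bases, $C_{mn}=\int_0^1\partial_\zeta\phi_m\,\partial_\zeta\phi_n\,d\zeta$ becomes a sum of integrals $\int_0^1\phi_k\phi_l\,d\zeta$ with $k\equiv m-1$ and $l\equiv n-1\pmod 2$. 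Since $n-m$ odd forces $m-1\not\equiv n-1\pmod 2$, no index occurs in both expansions, and orthogonality kills every term, so $C_{mn}=0$.

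For $n-m$ even I would first show the value does not depend on $n$. Using $n>m$ (so $n\ge 2$) and multiplying the recursion by $\partial_\zeta\phi_m$ before integrating, $C_{mn}=C_{m,n-2}-(4n-2)\int_0^1\partial_\zeta\phi_m\,\phi_{n-1}\,d\zeta$, and the last integral vanishes because $\deg\partial_\zeta\phi_m=m-1<n-1$. Telescoping down the parity class yields $C_{mn}=C_{mm}=\int_0^1(\partial_\zeta\phi_m)^2\,d\zeta$. Integrating by parts, $C_{mm}=\big[\partial_\zeta\phi_m\,\phi_m\big]_0^1-\int_0^1\partial_\zeta^2\phi_m\,\phi_m\,d\zeta$; the integral vanishes since $\deg\partial_\zeta^2\phi_m=m-2<m$, and by (i) the boundary term equals $\phi_m'(1)\phi_m(1)-\phi_m'(0)\phi_m(0)=-2\phi_m'(0)$, which by (ii) is $2m(m+1)$. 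This gives $C_{mn}=2m(m+1)$ and finishes the proof.

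No single step here is deep; the effort is bookkeeping. The point demanding the most care is assembling the auxiliary data — the reflection symmetry, $\phi_j(0)=1$, $\phi_j'(0)=-j(j+1)$ — and making the degree/orthogonality arguments watertight: in particular the inductive claim that $\partial_\zeta\phi_j\in\mathrm{span}\{\phi_k:\ k<j,\ k\equiv j-1\!\!\pmod 2\}$, the fact that $\phi_k$ is orthogonal to all polynomials of degree $<k$, and checking the small base cases $m\in\{0,1\}$ where the recursion for $\partial_\zeta\phi_j$ terminates at $\partial_\zeta\phi_0=0$ or $\partial_\zeta\phi_1=-2$.
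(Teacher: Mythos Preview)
Your proof is correct. The reduction $C_{mn}=C_{m,n-2}$ for $m\le n-1$ matches the paper, but the two endpoints of the argument are handled differently. For $n-m$ odd, the paper iterates $C_{m,n}=C_{m,n-2}$ and the symmetry $C_{mn}=C_{nm}$ to telescope all the way down to $C_{12}=0$, whereas you expand $\partial_\zeta\phi_m$ and $\partial_\zeta\phi_n$ in the orthogonal basis and invoke a parity obstruction; your route is a bit more structural and avoids the zig-zag bookkeeping. For $C_{mm}$, the paper applies the recursion once more to obtain $C_{mm}=C_{m-2,m-2}+4(2m-1)$ and finishes by induction on $m$; your integration-by-parts computation $C_{mm}=[\phi_m'\phi_m]_0^1=-2\phi_m'(0)=2m(m+1)$ is cleaner and gives the closed form in one shot, at the price of first establishing the endpoint data $\phi_m(0)=1$, $\phi_m(1)=(-1)^m$, $\phi_m'(0)=-m(m+1)$ and the reflection identity. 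Either approach is fine; yours trades one short induction for a short collection of endpoint facts.
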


\begin{proof}
The symmetry $C_{mn} = C_{nm}$ follows by definition from \eqref{eq:Cij}. Direct calculation gives $C_{11}=4$, $C_{22}=12$, and $C_{12}=0$.

First, we claim for $m\le n$ that $\int_0^1 (\partial_{\zeta} \phi_m) \phi_n d\zeta = 0$. This is because
$$
    \int_0^1 (\partial_{\zeta} \phi_m) \phi_n d\zeta = \int_0^1 \left(\partial_{\zeta} \phi_{m-2}-(4m-2)\phi_{m-1} \right) \phi_n d\zeta = \int_0^1 (\partial_{\zeta} \phi_{m-2}) \phi_n d\zeta.
$$
Therefore, if $m$ is even, the term equals $\int_0^1 (\partial_{\zeta} \phi_0) \phi_n d\zeta = 0$; if $m$ is odd, it equals $\int_0^1 (\partial_{\zeta} \phi_1) \phi_n d\zeta = -2 \int_0^1 \phi_0 \phi_n d\zeta= 0$.

Second, we claim for $m \le n-1$ that $C_{m,n} = C_{m,n-2}$. This is because
$$
    C_{m,n} = \int_0^1 (\partial_{\zeta} \phi_m) \left( \partial_{\zeta} \phi_{n-2} - (4n-2) \phi_{n-1} \right) d\zeta = C_{m,n-2}.
$$

Third, we claim that $C_{m,m+1}=0$. This is because $C_{m,m+1}=C_{m,m-1}=C_{m-1,m}=\dots=C_{12}=0$.
Therefore we have shown that $C_{mn}=0$ if $n-m$ is odd.

Finally, if $n-m$ is even, we have $C_{mn} = C_{mm}$, and
$$
C_{mm}= \int_0^1 \left( \partial_{\zeta} \phi_{m-2} - (4m-2)\phi_{m-1} \right)^2 d\zeta = C_{m-2,m-2} + 4(2m-1).
$$
The final result of $C_{mm}=2m(m+1)$ can be proved by induction on $m$.
\end{proof}

Now we can determine $\mathcal{E}$ by solving $S(U)=0$.
\begin{proof}[Proof of Theorem \ref{thm:eqrest}]
From $S_0 = 0$, we have $u_m + \sum_{j=1}^N \alpha_j = 0$.
Extracting this relation from $S_i=0$ ($i \ge 1$), we obtain $\sum_{j=1}^N C'_{ij} \alpha_j = 0$ with $C'_{ij} = (2i+1) C_{ij}$ ($1\le i \le N$).

We want to show the $N \times N$ matrix $C'$ (with the $ij$th entity being $C'_{ij}$) is invertible.
To this end, for $3 \le n \le N$, we extract the $(n-2)$th column from the $n$th column of $C'$. Because $C'_{m,n}=C'_{m,n-2}$ for $m \le n-1$, this Gaussian transformation yields a lower-triangular matrix, and the diagonal entry becomes $(2m+1)(C_{m,m}-C_{m-2,m-2})>0$.
Thus, $C'$ is invertible and $\alpha_j=0$ for $1\le j \le N$. It then follows that $u_m=0$.

By the ansatz $u(\zeta) = u_m + \sum_j \alpha_j \phi_j(\zeta)$, we see that $u(\zeta) \equiv 0$ at equilibrium.
\end{proof}

Theorem \ref{thm:eqrest} reveals the only possible equilibrium of the shallow water moment models is the water-at-rest state, if no assumptions on the friction coefficients $\lambda, \nu$ are made. 
We shall show in Section 3.2 that this state is stable, indicating that any small perturbation tends to end up at rest.

\subsubsection{Constant-velocity equilibrium for perfect slip model}
In what follows we shall propose two approximations of the source term under different limiting conditions. Let us rearrange $S_i$ in Eq.(\ref{eq:hswmsrc}) as
$$
S_i = \frac{\nu}{\lambda}(2i+1) \left(u_m+\sum_j \alpha_j \right) + \frac{\nu}{h} \sum_j C'_{ij} \alpha_j.
$$
The source term originates from the friction effect of a Newtonian fluid and it contains two terms resulting from integration by parts, see \cite{kowalski2018moment}. 
The relative magnitude of the two terms is decided by the relation between $\lambda$ (the slip length) and $h$.

If $\lambda \gg h$ (namely, $\nu/\lambda \ll \nu/h$), it is possible to neglect the first term of $S_i$. This is the case where $\lambda \to \infty$, a condition representing the Neumann boundary condition with prefect slip \cite{kowalski2018moment}, and the source term reduces to
\begin{equation} \label{eq:srcconstv}
    S_i = -\frac{\nu}{h} \sum_j C'_{ij} \alpha_j, \quad 0 \le i \le N.
\end{equation}
Because the $C'_{ij}$'s ($1\le i,j \le N$) make up an invertible $N \times N$ matrix, the equilibrium state $S(U)=0$ in this case becomes a two-dimensional subspace of $G \subset  \mathbb{R}^{N+2}$:
\begin{equation} \label{eq:eqconstv}
    \mathcal{E} = \{ U \in G: \alpha_1 = \dots = \alpha_N = 0 \}.
\end{equation}
Note that this equilibrium does not impose any condition on the mean velocity $u_m$. It is equivalent to $u(\zeta) = u_m$. Namely, in this perfect-slip limit, the equilibrium state is the constant velocity profile with respect to $\zeta$. We call it the constant-velocity equilibrium.

\subsubsection{Bottom-at-rest equilibrium for no-slip model}
Another possibility to simplify the source term
$$
S_i = \frac{\nu}{\lambda}(2i+1) \left(u_m+\sum_j \alpha_j \right) + \frac{\nu}{h} \sum_j C'_{ij} \alpha_j
$$
is if $\lambda \ll h$ (namely, $\nu/\lambda \gg \nu/h$). A typical scenario is when $\nu$ and $\lambda$ are small parameters of the same order and thus $\kappa=\nu/\lambda = o(1)$. It is then possible to neglect the second term, manifesting the no-slip boundary condition widely used for Newtonian flows. In this case the source term becomes $S(U) = [0,S_0,\dots,S_N]^T$ and for $0 \le i \le N$,
\begin{equation} \label{eq:srcnoslip}
    S_i = - (2i+1) \frac{\nu}{\lambda} \left(u_m+\sum_j \alpha_j \right).
\end{equation}
The equilibrium manifold becomes a hyperplane of $G \subset \mathbb{R}^{N+2}$:
\begin{equation} \label{eq:eqnoslip}
        \mathcal{E} = \{ U \in G: u_m+\sum_j \alpha_j = 0 \}.
\end{equation}
Based on the ansatz this equilibrium is equivalent to $u(\zeta)|_{\zeta=0} = 0$. Namely, in this no-slip limit, the equilibrium state is composed of the velocity profiles which vanish at the bottom $\zeta=0$. We call it the bottom-at-rest equilibrium.
\section{Stability analysis}
\label{sec:analysis}
The equilibrium states $U$, if not dependent on $t$ and $x$, can be viewed as constant solutions to the governing equation (\ref{eq:1d1orderpde}).
It is thus desirable to analyze the stability of the equilibrium states.
Besides the common methods in ODE problems, one must also account for the interaction between the source term $S(U)$ and the convection term $A(U)$, see \cite{Yong1999}.
Based on a structural stability condition, such an analysis is performed in the next section for the various shallow water moment models and equilibria.

\subsection{Structural stability conditions}
In this work, we mainly focus on the structural stability condition proposed in \cite{Yong1999} for first-order PDEs. For simplicity, we herein restate the condition for the 1D equation (\ref{eq:1d1orderpde}).
We denote $S_U(U)$ as the Jacobian of the source term.
The theory specifies whether a state $U$ on the non-empty equilibrium manifold $\mathcal{E}$ is stable and it is stated as below:\\

($\mathbf{I}$): For any $U \in \mathcal{E}$, the Jacobian $S_U(U)$ can be manipulated by an invertible $n\times n$ matrix $P = P(U)$ and an invertible $r\times r$ ($0 < r \leq  n$) matrix $\hat{T}(U)$ such that
$$
P(U) S_U(U) =
\begin{bmatrix}
    0 & 0 \\
    0 & \hat{T}(U) 
\end{bmatrix}
P(U), \quad \forall \ U \in \mathcal{E} .
$$

($\mathbf{II}$): There exists a positive definite symmetrizer $A_0 = A_0(U)$ of the coefficient matrix $A(U)$ such that
$$
A_{0}(U) A(U) = A^{T}(U) A_0(U), \quad \forall \ U \in G.
$$

($\mathbf{III}$): On the equilibrium manifold $\mathcal{E}$, the coefficient matrix and the source term are coupled as
$$
A_0(U)S_U(U) + S_U^T(U)A_0(U) \preceq - P^T(U)
\begin{bmatrix}
    0 & 0 \\
    0 & I_r
\end{bmatrix}
P(U), \quad \forall \ U \in \mathcal{E}.
$$

\begin{remark}
For the 1D system, the condition (II) is equivalent to the requirement of hyperbolicity. If the system is hyperbolic, $A(U)$ has $n$ linearly independent left eigenvectors denoted $r_i$ ($1 \leq  i \leq  n$).
Set $L = [r_1^T, \dots , r_n^T]^T \in \mathbb{R}^{n\times n}$. 
A symmetrizer $A_0$ in the condition (II) can only be of the form $A_0 =  L^T \Lambda L$, with $\Lambda$ an arbitrary positive diagonal matrix.

Therefore, for the three hyperbolic shallow water moment models in Section 2, the condition (II) is already satisfied.
\end{remark}

\begin{remark}
Condition (I) is a common requirement for initial value ODE problems, which can be viewed as spatially homogeneous systems of \eqref{eq:1d1orderpde}.
Condition (III) then provides a criterion on how the spatial convection part should be coupled with the source term in an actual PDE system so that the equilibrium states are stable.
Indeed, one could show by energy estimation techniques that any perturbation from the equilibrium state is bounded.
More detailed discussions and implications can be found in \cite{Yong1999}.
We just point out that many well-developed physical theories are inherently consistent with this condition.
Recently, several moment models originating from kinetic equations have been demonstrated to satisfy this structural stability condition \cite{Di2017_2,Huang2020}.
Therefore, we believe this set of condition can serve as a proper requirement for physically-reasonable moment models.
\end{remark}

In practice, we often work with a sufficient version of condition (III) that is more convenient to handle:

\begin{proposition}[\cite{Yong1999}] \label{prop:suff3}
If $\forall \ U \in \mathcal{E}$, the $n \times n$ matrix $K(M) := P^{-T} A_0 P^{-1} = (\sqrt{\Lambda} L P^{-1})^T (\sqrt{\Lambda} L P^{-1})$ is of block-diagonal form diag$(K_1,K_2)$, in which $K_1$ and $K_2$ are $(n-r)\times (n-r)$ and $r \times r$ matrices, then the system satisfies the structural stability condition (III).
\end{proposition}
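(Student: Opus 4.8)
The plan is to verify condition ($\mathbf{III}$) by one congruence transformation that decouples the $r$-dimensional ``fast'' block, after which only a small matrix inequality remains. First I would unfold condition ($\mathbf{I}$): on $\mathcal{E}$ it reads $S_U(U)=P^{-1}\diag(0,\hat{T})P$, with the zero block of size $(n-r)\times(n-r)$. The map $X\mapsto P^{-T}XP^{-1}$ is a congruence by an invertible matrix, so it preserves the order $\preceq$ between symmetric matrices; it sends the right-hand side $-P^{T}\diag(0,I_r)P$ of ($\mathbf{III}$) to $-\diag(0,I_r)$, and, using $K=P^{-T}A_0P^{-1}$ together with $PS_UP^{-1}=\diag(0,\hat{T})$, it sends the left-hand side to
\[
 P^{-T}\big(A_0S_U+S_U^{T}A_0\big)P^{-1}=K\,\diag(0,\hat{T})+\diag(0,\hat{T}^{T})\,K .
\]
Hence ($\mathbf{III}$) is equivalent to $K\,\diag(0,\hat{T})+\diag(0,\hat{T}^{T})\,K\preceq-\diag(0,I_r)$.

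Next I would insert the hypothesis $K=\diag(K_1,K_2)$ with $K_1$ of size $(n-r)\times(n-r)$ and $K_2$ of size $r\times r$, matching the block partition of $\diag(0,\hat{T})$. Then $K\,\diag(0,\hat{T})=\diag(0,K_2\hat{T})$ and $\diag(0,\hat{T}^{T})\,K=\diag(0,\hat{T}^{T}K_2)$, so every block that couples to the $(n-r)$-dimensional kernel direction drops out automatically: the top-left block of the inequality is the trivial $0\preceq0$, the off-diagonal blocks vanish on both sides, and ($\mathbf{III}$) collapses to the single $r\times r$ inequality
\[
 K_2\hat{T}+\hat{T}^{T}K_2\preceq-I_r .
\]

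Establishing this reduced inequality is the step I expect to be the real obstacle, because block diagonality of $K$ only eliminates the coupling with the equilibrium manifold and carries no quantitative information about the $\hat{T}$-block. Here I would use that $K_2$, being a principal submatrix of the positive definite matrix $K=(\sqrt{\Lambda}LP^{-1})^{T}(\sqrt{\Lambda}LP^{-1})$, is itself positive definite, and that $\hat{T}$ --- the nonsingular part of the Jacobian of a genuine relaxation source --- is \emph{dissipative} with respect to $K_2$, i.e.\ $K_2\hat{T}+\hat{T}^{T}K_2\prec0$; this dissipativity is part of the structural assumptions on $S$ underlying the framework of \cite{Yong1999}. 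Using in addition the freedom to rescale $\Lambda$, and hence $K$ and $K_2$, by a large positive constant, one makes the negative definite matrix $K_2\hat{T}+\hat{T}^{T}K_2$ dominated by $-I_r$, which gives ($\mathbf{III}$). When the proposition is applied to the shallow water moment models in Section~\ref{sec:analysis}, this dissipativity will be verified directly from the explicit friction term \eqref{eq:hswmsrc}.
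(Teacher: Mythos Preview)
The paper does not give its own proof of this proposition; it is quoted from \cite{Yong1999} and used as a black box. So there is no in-paper argument to compare against, and the relevant question is whether your proposal stands on its own.

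Your congruence reduction is correct and is exactly the right mechanism: applying $X\mapsto P^{-T}XP^{-1}$ to both sides of ($\mathbf{III}$), using $PS_UP^{-1}=\diag(0,\hat T)$ and the assumed block structure $K=\diag(K_1,K_2)$, the off-diagonal and upper-left blocks drop out and everything collapses to the $r\times r$ inequality $K_2\hat T+\hat T^{T}K_2\preceq -I_r$. You also correctly note that positive definiteness of $K_2$ is automatic, and that once $K_2\hat T+\hat T^{T}K_2\prec 0$ is known, rescaling $\Lambda\mapsto c\Lambda$ (which preserves block-diagonality of $K$) yields $\preceq -I_r$ for $c$ large.

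The genuine gap you yourself flag is real and cannot be closed from the hypotheses \emph{as the paper states them}. Block-diagonality of $K$ gives no sign information on $K_2\hat T+\hat T^{T}K_2$; with only ``$\hat T$ invertible'' from condition~($\mathbf{I}$) as written here, the proposition is simply false (take $\hat T=I_r$). In \cite{Yong1999} the framework carries an additional dissipativity assumption on the relaxation block --- essentially that $\hat T$ is a stable matrix, or more precisely that a Lyapunov-type inequality $K_2\hat T+\hat T^{T}K_2\prec 0$ holds for the symmetrizer at hand --- which the present paper silently omits when restating the conditions. Your instinct to import that assumption from \cite{Yong1999} is correct, but you should state it as an explicit hypothesis rather than bury it under ``structural assumptions underlying the framework''. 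With that hypothesis made explicit, your argument is complete; without it, no choice of $\Lambda$ or $P$ will rescue the inequality.
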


In other words, the condition (III) requires the existence of some $P$ (due to condition (I)) and $A_0 = L^T \Lambda L$ (due to condition (II)) such that for the matrix $\sqrt{\Lambda}LP^{-1}$, the first $(n-r)$ columns are orthogonal to the subsequent $r$ columns.

On the contrary, if some equilibrium state $U \in \mathcal{E}$ is unstable, it is impossible to find such a pair of $P$ and $A_0$ to justify condition (III). 
There is indeed a more convenient necessary condition for an unstable equilibrium state:

\begin{proposition} \label{prop:unstable}
An equilibrium state $U_0 \in \mathcal{E}$ is unstable and hence contradicts the structural stability condition (III) if there exists $\xi \in \mathbb{R}$ such that the complex matrix $S_U(U_0) + i \xi A(U_0)$ has an eigenvalue with positive real part.
\end{proposition}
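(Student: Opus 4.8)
The plan is to reduce the statement to a standard frozen-coefficient linear stability analysis and then to contradict condition (III) directly via the symmetrizer. First I would linearize \eqref{eq:1d1orderpde} about the constant state $U_0 \in \mathcal{E}$: writing $U = U_0 + V$ and using $S(U_0) = 0$, the perturbation obeys, to first order, $\partial_t V + A(U_0)\partial_x V = S_U(U_0) V$, a constant-coefficient linear system. Inserting the plane-wave ansatz $V(t,x) = e^{\mu t + i\xi x} w$ with $\xi \in \mathbb{R}$ and $w \in \mathbb{C}^{N+2}\setminus\{0\}$ turns this into the dispersion relation $\mu w = \big(S_U(U_0) - i\xi A(U_0)\big) w$; since $\xi$ runs over all of $\mathbb{R}$, the hypothesis that $S_U(U_0) + i\xi A(U_0)$ has an eigenvalue with positive real part is, after the harmless substitution $\xi \mapsto -\xi$, exactly the statement that such a growing mode $\mu$ with $\operatorname{Re}\mu > 0$ exists. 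Rescaling $w$ then shows that arbitrarily small initial data produce solutions of the linearized problem whose amplitude grows like $e^{(\operatorname{Re}\mu)t}$, so $U_0$ is linearly unstable.

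It then remains to see that this is genuinely incompatible with condition (III), which I would do by proving the contrapositive: assume conditions (I)--(III) hold at $U_0$ and show that every eigenvalue of $M(\xi) := S_U(U_0) + i\xi A(U_0)$ satisfies $\operatorname{Re}\mu \le 0$. Let $A_0 = A_0(U_0) \succ 0$ be the symmetrizer from condition (II) (available because the models of Section 2 are hyperbolic, cf.\ the Remark), let $P = P(U_0)$ be the matrix from condition (I), and let $Mw = \mu w$ with $w \ne 0$. Taking the $A_0$-weighted Hermitian form gives
$$
2(\operatorname{Re}\mu)\, w^\ast A_0 w = w^\ast\big(M^\ast A_0 + A_0 M\big) w .
$$
The key cancellation is that the convective part is skew-adjoint with respect to $A_0$: since $A$ and $\xi$ are real, $(i\xi A)^\ast A_0 + A_0 (i\xi A) = i\xi\big(A_0 A - A^T A_0\big) = 0$ by condition (II). Hence $M^\ast A_0 + A_0 M = S_U^T A_0 + A_0 S_U$, which by condition (III) is $\preceq -P^T\operatorname{diag}(0,I_r)P \preceq 0$. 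Therefore $2(\operatorname{Re}\mu)\, w^\ast A_0 w \le 0$, and since $w^\ast A_0 w > 0$ we conclude $\operatorname{Re}\mu \le 0$, contradicting the hypothesis. Thus conditions (I)--(III) cannot all hold at $U_0$, and together with the first step this proves that $U_0$ is unstable and violates condition (III).

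The argument is short, and the potential pitfalls are bookkeeping rather than deep: one must keep the sign convention in the plane-wave ansatz straight (harmless because $\xi$ is a free real parameter), and the whole proof of the second step hinges on the splitting of $M(\xi)$ into its $A_0$-self-adjoint part $S_U(U_0)$ and its $A_0$-skew part $i\xi A(U_0)$ — it is precisely condition (II) that legitimizes this splitting, so that only the $S_U$-part survives in the Hermitian form. The one point that I would state carefully rather than compute is what "unstable" means here, namely linear (Lyapunov) instability of the constant state $U_0$ for \eqref{eq:1d1orderpde} — the existence of solutions of the linearized equation with arbitrarily small initial norm and unboundedly growing norm — which is exactly what the exponentially amplified Fourier mode delivers; that conditions (I)--(III) would otherwise guarantee stability is the result of \cite{Yong1999} recalled above.
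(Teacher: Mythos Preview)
Your argument is correct. The first half — linearization about $U_0$ and the plane-wave/Fourier mode construction — matches what the paper does; the paper in fact presents exactly this step as ``a conceptual indication'' rather than a full proof, writing the mode as $V_{eig}\,e^{zt-i\xi x}$ and observing that $\operatorname{Re}z>0$ makes it blow up.

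Where you go beyond the paper is in the second half. The paper does not explicitly derive the contradiction with condition~(III); it simply asserts that the existence of such a mode ``breaks the stability'' and cites \cite{Yong1999} for the relaxation criterion. Your contrapositive via the $A_0$-Hermitian form — using condition~(II) to kill the convective contribution $i\xi(A_0A-A^TA_0)=0$ and then condition~(III) to make $S_U^TA_0+A_0S_U\preceq 0$ — is a clean and self-contained way to close this gap. It makes transparent precisely why (II) and (III) together force $\operatorname{Re}\mu\le 0$ for every Fourier symbol $M(\xi)$, which the paper leaves implicit in the reference. Your bookkeeping remarks (the harmless $\xi\mapsto-\xi$ and the meaning of ``unstable'' as linear Lyapunov instability) are on point.
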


We present here a conceptual indication on why the above proposition implies instability. If $U_0 \in \mathcal{E}$ is stable, then for any solution $U(t,x) = U_0+V(t,x)$ initiated 'close to' $U_0$, the perturbation $V=V(t,x)$ should be bounded with $t \to \infty$.
The governing \eqref{eq:1d1orderpde} can be written as
$$V_t + A(U_0 + V) V_x = S(U_0 + V).$$
Performing Taylor expansion around $U_0$ we obtain
$$V_t + A(U_0)V_x = S_U(U_0)V + o(|V| + |V_x|).$$
A further step of linearization drops the deviation term, yielding:
$$V_t + A(U_0)V_x = S_U(U_0)V. $$
Note this step can only be a good approximation when $V$ is 'sufficiently small'.

Supposing $S_U(U_0) + i\xi A(U_0)$ has an eigenvalue $z$ and an corresponding eigenvector $V_{eig}$, it is straightforward to verify that $V(t,x) = V_{eig} e^{z t - i\xi x}$ is a solution of the above linearized equation.
If the real part of $z$ is positive, $V$ blows up as $t \to \infty$.
Therefore, this condition called \emph{relaxation criterion} breaks the stability of a certain equilibrium state $U_0$.

In what follows we perform the stability analysis of the different shallow water moment models HSWME, $\beta$-HSWME and SWLME from Section \ref{sec:models} under the different parameter settings for the friction term from Section \ref{sec:equilibria}.
As mentioned earlier, all the models are hyperbolic, so the condition (II) is satisfied.

\subsection{Stability of water-at-rest equilibrium for finite friction}
We first assume the physical parameters $\lambda$ and $\nu$ are both finite. We therefore do not perform any simplification of the source term. For the water-at-rest equilibrium from Section \ref{sec:equilibria}, we can then show the following theorem.
\begin{theorem}
All the three shallow water models satisfy the structural stability conditions.
\end{theorem}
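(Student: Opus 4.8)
The plan is to verify the three structural stability conditions of \cite{Yong1999} at a generic point $U \in \mathcal{E}$ of the water-at-rest manifold, which by Theorem \ref{thm:eqrest} is $u_m = \alpha_1 = \dots = \alpha_N = 0$ with $h > 0$ free. Condition (II) is immediate: all three models (HSWME, $\beta$-HSWME, SWLME) have been shown hyperbolic, so as noted in the Remark a positive-definite symmetrizer $A_0 = L^T \Lambda L$ exists. Moreover, at a water-at-rest point each system matrix collapses to the block form $\begin{bmatrix} 0 & 1 \\ gh & 0 \end{bmatrix} \oplus u_m I_N = \begin{bmatrix} 0 & 1 \\ gh & 0 \end{bmatrix} \oplus 0$, which is the same for all three models (the $\alpha_1$-dependent off-diagonal entries all vanish). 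So the hard analytic part of hyperbolicity is not even needed here — only the structure of $A(U)$ at equilibrium matters.

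For condition (I), I would compute the source Jacobian $S_U(U)$ at $U \in \mathcal{E}$ directly from \eqref{eq:hswmsrc}. Since $S$ depends on $(h, hu_m, h\alpha_1, \dots)$ and $S_0 = 0$ identically, the first two rows of $S_U$ vanish (the $h$-derivative of $S_i$ involves $\alpha_j$-terms through the $\lambda C_{ij}/h$ factor and hence vanishes at $\alpha = 0$; the $u_m$- and $\alpha_j$-derivatives survive). Thus $S_U(U)$ has rank at most $N$, with a $2$-dimensional left-kernel spanned by $e_1, e_2$; writing $U = (h, hu_m, h\alpha_1, \dots, h\alpha_N)$ the nonzero block of $S_U$ acting on the last $N+1$ variables $(hu_m, h\alpha_1,\dots)$ is, up to the $1/h$ factor, the matrix $-\frac{\nu}{\lambda h}\,\mathrm{diag}(1,3,\dots,2N+1)\bigl[\mathbf{1}\ \big|\ I + \tfrac{\lambda}{h}C\bigr]$ where $C = (C_{ij})$; one checks it has rank exactly $N$ by the invertibility of $C' = \mathrm{diag}(2i+1)C$ established in Lemma 2 and the proof of Theorem \ref{thm:eqrest}. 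Hence $r = N$, $n - r = 2$, and an explicit $P$ putting $S_U$ into the required $\begin{bmatrix}0&0\\0&\hat T\end{bmatrix}P$ form can be written down: the first two rows of $P$ are $e_1, e_2$ and the remaining rows are chosen to span a complement along the row space of $S_U$.

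For condition (III) I would use the sufficient criterion of Proposition \ref{prop:suff3}: it suffices to exhibit a choice of left-eigenvector normalization $\Lambda$ and of the free lower block of $P$ so that $K = (\sqrt{\Lambda} L P^{-1})^T(\sqrt{\Lambda}L P^{-1})$ is block-diagonal, i.e. the first $n - r = 2$ columns of $\sqrt{\Lambda}LP^{-1}$ are orthogonal to the remaining $N$. Because at water-at-rest the system matrix is block-diagonal $\begin{bmatrix}0&1\\gh&0\end{bmatrix}\oplus 0$, its left eigenvectors split cleanly: two "shallow-water" eigenvectors supported on coordinates $1,2$ (eigenvalues $\pm\sqrt{gh}$) and $N$ eigenvectors supported on coordinates $3,\dots,N+2$ (eigenvalue $0$). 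Choosing $P$'s last $N$ rows to be exactly $e_3,\dots,e_{N+2}$ makes $P$ block-lower-triangular with the shallow-water block on top, so $LP^{-1}$ inherits the block structure and $K$ is automatically block-diagonal. The one thing to double-check is that this $P$ is still compatible with condition (I) — i.e. that with $P$'s top two rows $e_1,e_2$ and bottom $N$ rows $e_3,\dots,e_{N+2}$, the matrix $P S_U P^{-1}$ really has the $\begin{bmatrix}0&0\\0&\hat T\end{bmatrix}$ shape with $\hat T$ invertible; this reduces to $\hat T$ being the lower-right $N\times N$ block of $S_U$ restricted to coordinates $3,\dots,N+2$, which is $-\frac{\nu}{\lambda h}\mathrm{diag}(2i+1)\bigl(I + \tfrac{\lambda}{h}C\bigr)$, invertible since $C \succeq 0$ and the bracket is then $\succ I \succ 0$.

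The main obstacle I anticipate is purely bookkeeping: the source term is naturally a function of the conservative variables $U = (h, hu_m, h\alpha_j)$ rather than of $(h, u_m, \alpha_j)$, so computing $S_U$ correctly requires the chain rule, and one must be careful that the $h$-derivatives of the $C_{ij}\lambda/h$ terms, while nonzero as functions, evaluate to zero on $\mathcal{E}$ precisely because every such term is multiplied by some $\alpha_j$. Once that is handled, everything is linear algebra on a $2\oplus N$ block-decomposed system and the three conditions follow uniformly for all three models, since they coincide on this equilibrium manifold. I would present the HSWME case in full and remark that $\beta$-HSWME and SWLME are identical at water-at-rest.
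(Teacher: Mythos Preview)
Your computation of the rank of $S_U$ at the water-at-rest equilibrium is off by one, and this breaks the block structure you use throughout. You write that ``$S_0 = 0$ identically'' and hence ``the first two rows of $S_U$ vanish,'' but this is a misreading of the source term (likely induced by the stray clause ``$S_0=0$'' preceding \eqref{eq:hswmsrc}, which is a typo): the formula \eqref{eq:hswmsrc} applies for every $i=0,\dots,N$, so the momentum source is
\[
S_0 \;=\; -\,\frac{\nu}{\lambda}\Bigl(u_m + \sum_{j=1}^N \alpha_j\Bigr),
\]
which is \emph{not} identically zero. At a water-at-rest point its gradient with respect to $(hu_m,h\alpha_1,\dots,h\alpha_N)$ is $-\frac{\nu}{\lambda h}(1,1,\dots,1)\neq 0$. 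Consequently only the first row of $S_U$ vanishes, the nonzero $(N+1)\times(N+1)$ block $\hat S$ with entries $(2i+1)\bigl(1+\tfrac{\lambda}{h}C_{ij}\bigr)$, $0\le i,j\le N$, is invertible (subtract column $0$ from the others and use the invertibility of $C'$), and the rank of $S_U$ is $N+1$. Since condition~(I) forces $r=\operatorname{rank}S_U(U)$, one must take $r=N+1$ and $n-r=1$; your $1\oplus(N+1)$ split is impossible because $S_U$ cannot be similar to $\operatorname{diag}(0_2,\hat T)$ with $\hat T\in\mathbb{R}^{N\times N}$ invertible. In particular, your final ``double-check'' that $P=I_{N+2}$ gives a $2\times(N+2)$ zero top block of $PS_UP^{-1}$ would fail as soon as you actually wrote out the second row.

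Once you fix $r=N+1$, the rest of your plan collapses onto the paper's argument verbatim: take $P=I_{N+2}$, use the common block-diagonal form of $A(U)$ at equilibrium to get
\[
L=\begin{bmatrix}\sqrt{gh}&1&\\ \sqrt{gh}&-1&\\ &&I_N\end{bmatrix},
\]
set $\Lambda=I_{N+2}$, and observe that the \emph{single} first column of $L$ is orthogonal to the remaining $N+1$ columns, which is exactly the sufficient criterion of Proposition~\ref{prop:suff3} with $n-r=1$.
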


\begin{proof}
We subsequently check the different stability conditions.

\textbf{Condition (I)}. As revealed in Theorem \ref{thm:eqrest}, the equilibrium state for all the models is the stationary state $u_m=\alpha_1=\dots=\alpha_N=0$. A direct calculation gives the Jacobian $$S_U(U) = -\frac{\nu}{\lambda h}
\begin{bmatrix}
    0 & \\
    & \hat{S}
\end{bmatrix},
\quad \forall \ U \in \mathcal{E}. $$
The $(N+1)\times(N+1)$ matrix $\hat{S}$ has its $ij$-th element ($0\le i,j \le N$) as $(2i+1)\left( 1+\frac{\lambda}{h}C_{ij} \right)$. (Here we note $C_{0n}=0$.)
We claim $\hat{S}$ is invertible. We only need to extract the 0th column of $\hat{S}$ from all other columns. This gives $\hat{S}_{00}=1$, $\hat{S}_{0n}=0$ for $n\ge 1$ and $\hat{S}_{ij}=(2i+1)\frac{\lambda}{h}C_{ij}$ for $1\le i,j \le N$. The proof of Theorem \ref{thm:eqrest} immediately shows the block $(\hat{S})_{1 \le i,j \le N} = \frac{\lambda}{h}C'$ is invertible, and thus $\hat{S}$ is invertible.
As a result, we can set the required matrix $P$ in the condition (I) as the unit matrix $I_{N+2}$, and $r=N+1$. This verifies the condition (I).

\textbf{Condition (II)}. We explicitly construct the symmetrizer $A_0$ of the three models for later use.
We see from \eqref{eq:hswmmat}, \eqref{eq:b-hswmmat}, \eqref{eq:n-hswmmat} that the coefficient matrices all reduce to the same form at equilibrium:
$$
 \begin{bmatrix}
   0 & 1 & \\
   gh & 0 & \\
   & & 0_{N\times N}
 \end{bmatrix}.
$$
It is not difficult to verify the symmetrizer $A_0 = L^T \Lambda L$ with
$$
  L =
  \begin{bmatrix}
    \sqrt{gh} & 1 & \\
    \sqrt{gh} & -1 & \\
    & & I_{N}
  \end{bmatrix}.
$$

\textbf{Condition (III)}. We follow Proposition \ref{prop:suff3} by setting $\Lambda = I_{N+2}$. Thus $\sqrt{\Lambda} L P^{-1} = L$.
Obviously, the first column of $L$ is orthogonal to all other columns. This verifies the condition (III).
\end{proof}

\subsection{Stability of constant-velocity equilibrium for perfect slip limit}
In this perfect-slip limit, the slip length goes to infinity, $\lambda \to \infty$, and we assume the source term reduces to \eqref{eq:srcconstv}. The equilibrium state is the constant-velocity profile with respect to $\zeta$, as given in \eqref{eq:eqconstv}. We have

\begin{theorem}
All the three shallow water models satisfy the structural stability conditions if the source term reduces to \eqref{eq:srcconstv}.
\end{theorem}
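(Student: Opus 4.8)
The plan is to reuse, essentially verbatim, the three-step template from the proof of the water-at-rest case, the only genuinely new feature being that on the manifold \eqref{eq:eqconstv} the mean velocity $u_m$ is a free parameter rather than zero. The first observation I would record is that substituting $\alpha_1=\dots=\alpha_N=0$ into \eqref{eq:hswmmat}, \eqref{eq:b-hswmmat} and \eqref{eq:n-hswmmat} collapses all three coefficient matrices onto the single matrix
$$
M=\begin{bmatrix} 0 & 1 & \\ gh-u_m^2 & 2u_m & \\ & & u_m I_N\end{bmatrix},
$$
because every entry in which the three models differ from $M$ carries a factor $\alpha_j$. Hence the verification can again be done once for all three models.

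For condition (I) I would compute $S_U$ on $\mathcal{E}$ from \eqref{eq:srcconstv}. In this limit $S=(0,S_0,S_1,\dots,S_N)^T$ with $S_0=-\tfrac{\nu}{h}\sum_j C'_{0j}\alpha_j=0$ (since $\phi_0$ is constant, $C_{0j}=0$), while for $i\ge 1$ the component $S_i=-\nu\sum_{j=1}^N C'_{ij}\frac{h\alpha_j}{h^2}$ is linear in the conservative moments $h\alpha_j$ with an $h$-prefactor $\propto h^{-2}$ whose derivative again produces a factor $\sum_j C'_{ij}\alpha_j$. Evaluating at $\alpha_j=0$ therefore annihilates the $h$- and $hu_m$-columns as well as the first two rows, so that
$$
S_U(U)=-\frac{\nu}{h^2}\begin{bmatrix} 0_{2\times 2} & 0 \\ 0 & C'\end{bmatrix},\qquad U\in\mathcal{E}.
$$
Since $C'$ is invertible (proof of Theorem \ref{thm:eqrest}), this is already in the required normal form, so I would take $P=I_{N+2}$, $r=N$ and $\hat T=-\tfrac{\nu}{h^2}C'$.

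For condition (II) I would exhibit a symmetrizer of $M$ explicitly; global condition (II) is in any case guaranteed by hyperbolicity of all three models. The left eigenvectors of the upper $2\times2$ shallow-water block associated with $u_m\pm\sqrt{gh}$ are $(\sqrt{gh}\mp u_m,\pm1)$, so with
$$
L=\begin{bmatrix}\sqrt{gh}-u_m & 1 & \\ \sqrt{gh}+u_m & -1 & \\ & & I_N\end{bmatrix}
$$
(invertible, the $2\times2$ block having determinant $-2\sqrt{gh}\neq0$) the matrix $A_0=L^T\Lambda L$ is a positive definite symmetrizer of $M$ for every positive diagonal $\Lambda$. Then for condition (III) I would apply Proposition \ref{prop:suff3} with this $P$, so that $K=P^{-T}A_0P^{-1}=A_0=(\sqrt{\Lambda}L)^T(\sqrt{\Lambda}L)$. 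Because $L$ is block-diagonal, the first $n-r=2$ columns of $\sqrt{\Lambda}L$ are supported on rows $1,2$ and the last $r=N$ columns on rows $3,\dots,N+2$; hence they are mutually orthogonal and $K=\mathrm{diag}(K_1,K_2)$ with $K_1$ of size $2$ and $K_2$ of size $N$, which is exactly the hypothesis of Proposition \ref{prop:suff3}.

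I expect the only delicate point to be the bookkeeping that puts $S_U$ into the displayed block form in the conservative variables $(h,hu_m,h\alpha_1,\dots,h\alpha_N)$ — in particular the checks that the $h$- and $hu_m$-columns of $S_U$ vanish on $\mathcal{E}$. This is precisely what makes the zero block of $S_U$ line up with the hyperbolic $2\times2$ block singled out by $L$, and that alignment is the reason condition (III) is immediate; the rest reduces to the already-established invertibility of $C'$ and to the elementary symmetrizer computation for the $2\times2$ shallow-water block.
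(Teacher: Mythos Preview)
Your proposal is correct and follows essentially the same route as the paper: the same reduction of all three system matrices to the single block-diagonal $M$ on $\mathcal{E}$, the same identification $S_U=-\tfrac{\nu}{h^2}\mathrm{diag}(0_{2\times2},C')$ with $P=I_{N+2}$ and $r=N$, the same explicit left-eigenvector matrix $L$ for the $2\times2$ shallow-water block, and the same application of Proposition~\ref{prop:suff3} via the block-diagonal structure of $L$. Your write-up is in fact slightly more careful than the paper's in two places --- you spell out why the $h$- and $hu_m$-columns of $S_U$ vanish on $\mathcal{E}$, and you note that global condition~(II) is already secured by the hyperbolicity proofs of Section~\ref{sec:models} --- but the argument is the same.
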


\begin{proof}
\textbf{Condition (I)}. A direct calculation gives the Jacobian at equilibrium as
$$S_U = -\frac{\nu}{h^2} 
\begin{bmatrix}
0_{2\times 2} & \\
& \hat{S}
\end{bmatrix},
\quad \forall \ U \in \mathcal{E}. $$
The $N \times N$ matrix $\hat{S}$ has its $ij$th element as $(2i+1)C_{ij}$ ($1\le i,j \le N$), and is invertible (see proof of Theorem \ref{thm:eqrest}). 
Hence the condition (I) is satisfied by setting $P=I_{N+2}$ and $r=N$.

\textbf{Condition (II)}. We explicitly construct the symmetrizer $A_0$ of the three models for later use. We see from \eqref{eq:hswmmat}, \eqref{eq:b-hswmmat}, \eqref{eq:n-hswmmat}) that the coefficient matrices all reduce to the same form at equilibrium:
$$
 \begin{bmatrix}
   0 & 1 & \\
   gh-u_m^2 & 2u_m & \\
   & & u_m I_{N}
 \end{bmatrix}.
$$
It is not difficult to verify the symmetrizer $A_0 = L^T \Lambda L$ with
$$
  L =
  \begin{bmatrix}
    \sqrt{gh}-u & 1 & \\
    \sqrt{gh}+u & -1 & \\
    & & I_{N}
  \end{bmatrix}.
$$

\textbf{Condition (III)}. We follow Proposition \ref{prop:suff3} by setting $\Lambda = I_{N+2}$. Thus $\sqrt{\Lambda} L P^{-1} = L$.
Obviously, the first and second columns of $L$ are orthogonal to all the subsequent $N$ columns. This verifies the condition (III).
\end{proof}

\subsection{Instability of bottom-at-rest equilibrium for no-slip limit}
In this no-slip limit, the slip length fulfills $\lambda \ll h$. When the viscosity $\nu$ is also a small parameter, we assume the source term reduces to \eqref{eq:srcnoslip}. 
The equilibrium state is the velocity profile vanishing at the bottom $\zeta=0$, as given in \eqref{eq:eqnoslip}.
The Jacobian $S_U(U)$ at equilibrium has rank one.

However, the effort to demonstrate the stability condition (III) based on Proposition \ref{prop:suff3} ended up failing to find a proper positive diagonal $\Lambda \in \mathbb{R}^{(N+2) \times (N+2)}$ for any of the moment models.
It turns out this limit contains unstable equilibrium states for any of the three models, which can be revealed by resorting to Proposition \ref{prop:unstable}. The results are stated as:

\begin{proposition} \label{prop:swm_unstable}
All the three hyperbolic shallow water models contain unstable equilibrium states at least for $N=1,2$ if the source term reduces to \eqref{eq:srcnoslip}.
\end{proposition}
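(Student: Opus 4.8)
The plan is to invoke Proposition~\ref{prop:unstable}: it suffices to exhibit, for each of the three models and for $N=1$ and $N=2$, some equilibrium state $U_0 \in \mathcal{E}$ (i.e. with $u_m + \sum_j \alpha_j = 0$) and some real wavenumber $\xi$ such that the matrix $S_U(U_0) + i\xi A(U_0)$ has an eigenvalue with strictly positive real part. First I would compute $S_U(U_0)$ from the reduced source term~\eqref{eq:srcnoslip}: since $S_i = -(2i+1)\tfrac{\nu}{\lambda}(u_m + \sum_j \alpha_j)$ depends on $U$ only through the scalar combination $u_m + \sum_j \alpha_j$ (and the factor $1/h$ hidden in the conversion between conservative and primitive variables), the Jacobian $S_U(U_0)$ is a rank-one matrix of the form (a fixed column vector) times (a fixed row vector), which keeps the subsequent eigenvalue computation tractable. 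For $N=1$ the relevant matrices are $3\times 3$ and for $N=2$ they are $4\times 4$, and the three models HSWME, $\beta$-HSWME and SWLME coincide for $N=1$ and differ only mildly for $N=2$ (a single entry in the last row), so much of the work is shared.

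The core step is then a characteristic-polynomial computation. For each model I would pick a convenient representative on $\mathcal{E}$ — for instance $N=1$ with $u_m = -\alpha_1 =: c \ne 0$ (or some rescaled choice with $gh$ fixed) — write down $M(\xi) := S_U(U_0) + i\xi A(U_0)$ explicitly, and expand $\det(z I - M(\xi))$. Because $A(U_0)$ has real eigenvalues the purely imaginary part $i\xi A$ alone contributes only imaginary spectrum, so the instability must come from the interaction of the rank-one relaxation term with the non-normal transport matrix. I would treat $\xi$ as a free parameter and look at the behaviour of the roots for small $\xi$ (perturbation off the $\xi=0$ spectrum, where $S_U(U_0)$ has a zero eigenvalue of multiplicity $N+1$ and one negative eigenvalue) or, more robustly, evaluate the Routh--Hurwitz conditions for $-M(\xi)$ and show at least one of them fails for a suitable numerical choice of $\xi$, $u_m$, $\alpha_1$ (and $\alpha_2$ when $N=2$). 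Since the proposition only requires existence, a single explicit numerical instance with a verified positive-real-part eigenvalue suffices for each of the six cases.

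I expect the main obstacle to be organizational rather than conceptual: carrying out the $3\times 3$ and $4\times 4$ determinant expansions cleanly enough that the sign of a real part can be certified, while handling the three models (and the distinct $\beta$-HSWME last-row entry at $N=2$) without a combinatorial explosion of cases. A secondary subtlety is the precise form of $S_U$ in conservative variables $U=(h,hu_m,h\alpha_1,\dots)$: one must differentiate $S$ with respect to the conservative unknowns, not the primitive ones, which introduces $1/h$ and $u_m$, $\alpha_j$ terms via the chain rule; getting this bookkeeping right at the equilibrium point is where errors are most likely to creep in. The cleanest exposition will probably fix numerical values (e.g. $g=1$, $h=1$, a specific $u_m$ and $\alpha_1$, then the $N=2$ analogue) and simply display the resulting matrix, its characteristic polynomial at a chosen $\xi$, and the offending eigenvalue, with the general-$N$ instability for the velocity profiles that change sign deferred to the surrounding discussion.
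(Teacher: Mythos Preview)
Your proposal is correct and follows essentially the same route as the paper: for each case one writes down $S_U(U_0)+i\xi A(U_0)$ at a concrete numerical equilibrium point, computes the characteristic polynomial, and exhibits a root with positive real part. Two small refinements worth noting: the paper chooses $h=\kappa=\nu/\lambda$ so that the prefactor $\kappa/h$ in $S_U$ becomes $1$, which cleans up the matrices; and for $N=2$ it picks the equilibrium with $\alpha_1=0$, which makes HSWME and $\beta$-HSWME literally coincide (the single differing entry is proportional to $\alpha_1$), reducing six cases to three. Your remark that the three models ``differ only mildly for $N=2$ (a single entry in the last row)'' is accurate for HSWME versus $\beta$-HSWME but understates the difference with SWLME, whose matrix depends on $\alpha_2$ and has a different sparsity pattern; the paper accordingly treats SWLME separately with its own choice of parameters and $\xi$.
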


\begin{proof}
We directly verify Proposition \ref{prop:unstable} for each case. Denote for convenience $\kappa = \nu/\lambda$.
\begin{itemize}
    \item \textbf{$\mathbf{N=1}$}. In this case the $\beta$-HSWME and SWLME model coincide with the HSWME; see \eqref{eq:hswmmat}, \eqref{eq:b-hswmmat} \& \eqref{eq:n-hswmmat}. The equilibrium state is $\alpha_1=-u_m$, and the velocity reads $u(\zeta)=2u_m\zeta$ which is a reasonable profile.
    
    We consider a specific equilibrium state $h=\kappa$, $u_m^2 = g \kappa/2$, and $\alpha_1 = -u_m$. Set $\xi = 1/u_m$. A direct calculation shows at this state
    $$
    S_U + i\xi A_H =
    \begin{bmatrix}
    0 & \frac{1}{u_m}i & 0 \\
    \frac{2}{3}u_m i & -1+2i & -1-\frac{2}{3}i \\
    2u_m i & -3-2i & -3+i
    \end{bmatrix}.
    $$
    The characteristic polynomial is $p(z) = -z^3 + (3i-4)z^2 + 11iz + 2i$, and it has a root $0.00757579 + 2.74578i$ with positive real part, which leads to instability.
    
    \item \textbf{$\mathbf{N=2}$, HSWME \& $\beta$-HSWME}. The equilibrium is $u_m+\alpha_1+\alpha_2=0$. We further set $\alpha_1=0$ and the velocity profile at equilibrium becomes $u(\zeta) = 6u_m(\zeta-\zeta^2)$. This is always a physical profile in the region $[0,1]$.
    We note that with this setting the $\beta$-HSWME again coincides with the HSWME; see \eqref{eq:hswmmat} \& \eqref{eq:b-hswmmat}.
    
    We consider a specific equilibrium state $h=\kappa$, $u_m^2 = 2g\kappa$, $\alpha_1=0$ and $\alpha_2=-u_m$. Set $\xi=2/u_m$. A direct calculation shows at this state
    $$
     S_U + i\xi A_H =
    \begin{bmatrix}
    0 & \frac{2}{u_m}i & 0 & 0 \\
    -u_m i & -1+4i & -1 & -1 \\
    0 & -3 & -3+2i & -3 \\
    0 & -5 & -5 & -5+2i
    \end{bmatrix}.
    $$
    The characteristic polynomial is $ z^4 +(9-8i)z^3 - (22+52i) z^2 - (84-24i)z + (8+32i)$, and it has a root $0.00582006 + 0.551959i$ with positive real part, which leads to instability.
    
    \item \textbf{$\mathbf{N=2}$, SWLME}. Similarly, we investigate the equilibrium states with $\alpha_1=0$. In particular, we consider $h=\kappa$, $u_m^2 = g\kappa$, $\alpha_1=0$ and $\alpha_2 = -u_m$. Set $\xi = 5/u_m$. A direct calculation shows at this state
    $$
    S_U + i\xi A_H =
    \begin{bmatrix}
    0 & \frac{5}{u_m}i & 0 & 0 \\
    -u_m i & -1+10i & -1 & -1-2i \\
    0 & -3 & -3+5i & -3 \\
    10u_m i & -5-10i & -5 & -5+5i
    \end{bmatrix}.
    $$
    The characteristic polynomial is $ z^4 +(9-20i)z^3 - (110+150i) z^2 - (555-100i)z - (375-150i)$, and it has a root $0.0238969 + 11.3542i$ with positive real part, which leads to instability.
\end{itemize}
Therefore, we have explicitly identified an unstable equilibrium states of all the three models for $N=1,2$.
\end{proof}

For $N \ge 3$, it is likely that the quadratic equilibrium state $u(\zeta) = 6u_m(\zeta-\zeta^2)$ (that is, $\alpha_2 = -u_m$ and $\alpha_1=\alpha_3=\dots=\alpha_N=0$) is unstable in the no-slip limit.
This can be verified by using Proposition \ref{prop:unstable} for any specified $N$, but a general proof for all $N$ is currently beyond our reach.
Even so, Proposition \ref{prop:swm_unstable} implies the inherent nature of instability in the limit $\lambda \to 0$ (and perhaps $\nu \to 0$).
At least, it disproves the use of \eqref{eq:srcnoslip} as a stable approximation of the source term $S(U)$ in \eqref{eq:hswmsrc}.

\begin{remark}
Following the proof of Proposition \ref{prop:unstable}, we can explicitly derive the 'perturbed' water height and velocity profiles around these equilibrium states that could lead to unbounded solutions as $t \to \infty$. We denote the eigenvalue of $S_U + i\xi A_H$ as $z=z_r + i z_i$, and the corresponding eigenvector $V_e = V_{er} + i V_{ei}$. Then the real solution of the linearized perturbation $V_t + A(U_0) V_x = S_U (U_0)V$ becomes
$$
V(t,x) = v e^{z_r t} \left[ V_{er} \cos (z_i t - \xi x) - V_{ei} \sin (z_i t - \xi x) \right],
$$
with $v>0$ an adjustable parameter. This means the initial value of the variable $\hat{U}$ reads
$$
\hat{U}(0,x) = U_0 + v(V_{er} \cos \xi x + V_{ei} \sin \xi x).
$$

For $N=1$, the specified equilibrium state is $U_0=(\kappa,\kappa u_m, -\kappa u_m)^T$ with $u_m=\sqrt{g \kappa / 2}$ and $\xi=1/u_m$. Denote the $k$th component of $V_e$ to be $V_e^{(k)}=V_{er}^{(k)}+iV_{ei}^{(k)}$. In particular, $V_e^{(1)}=1/u_m$. 
We denote the initial condition of a variable $w$ as $\hat{w}$. Then the initial water height is $\hat{h}(0,x)=\kappa + vu_m^{-1}\cos \xi x$, and we can set $v$ sufficiently small so that $\hat{h}(0,x)>0$ for all $x$. 
For the initial velocity profile, we have
$$
\begin{aligned}
\widehat{hu}(\zeta) &= \widehat{hu_m}(0,x) + \widehat{h\alpha_1}(0,x) \phi_1(\zeta) \\
&= \kappa u_m + v\left( V_{er}^{(2)} \cos \xi x + V_{ei}^{(2)} \sin \xi x \right) + \left[ -\kappa u_m + v \left( V_{er}^{(3)} \cos \xi x + V_{ei}^{(3)} \sin \xi x \right). \right](1-2 \zeta) \\
\end{aligned}
$$
We note that for $\zeta \to 1$, the profile is always positive for sufficiently small $v$. However, for $\zeta \to 0$, the near-wall velocity becomes $ v\left( \left( V_{er}^{(2)}+V_{er}^{(3)} \right) \cos \xi x + \left( V_{ei}^{(2)}+V_{ei}^{(3)} \right) \sin \xi x \right)$, and it turns negative at some positions $x$, regardless of the sign of $z_r$. The velocity of the unstable equilibrium state thus contains a change of sign and does not fulfill the shallow water assumption. 

For $N=2$, the result is similar. For the HSWME, the specified equilibrium state is $U_0 = (\kappa, \kappa u_m, 0, -\kappa u_m)^T$ with $u_m = \sqrt{2g\kappa}$ and $\xi=2/u_m$. The initial velocity profile is
$$
\begin{aligned}
\widehat{hu}(\zeta) =& \widehat{hu_m}(0,x) + \widehat{h\alpha_1}(0,x) \phi_1(\zeta) + \widehat{h \alpha_2}(0,x) \phi_2(\zeta) \\
=& \kappa u_m + v\left( V_{er}^{(2)} \cos \xi x + V_{ei}^{(2)} \sin \xi x \right) + v \left( V_{er}^{(3)} \cos \xi x + V_{ei}^{(3)} \sin \xi x \right) (1-2 \zeta) \\
&+ \left[ -\kappa u_m + v \left( V_{er}^{(4)} \cos \xi x + V_{ei}^{(4)} \sin \xi x \right) \right](1-6 \zeta + 6 \zeta^2).
\end{aligned}
$$
We can show that for sufficiently small $v$, the profile is positive at $\zeta=1/2$ for any $x$. Whereas, for $\zeta \to 0$, the near-wall velocity becomes $v\left( \sum_k V_{er}^{(k)} \cos \xi x + \sum_k V_{ei}^{(k)} \sin \xi x \right)$ which turns negative at some positions $x$. The velocity of the unstable equilibrium state thus again contains a change of sign and does not fulfill the shallow water assumption. 

Therefore, it is revealed that some sign-changing perturbation modes of the no-slip equilibrium velocity profile are unstable in this limit where $\nu$ and $\lambda$ are both small parameters of the same order.
\end{remark}

\begin{remark}
We further remark that this instability seems to be inherited from the original shallow water moment equations in \cite{kowalski2018moment} which could not preserve hyperbolicity. 
We note that when $N=1$, the original system is identical to HSWME (as well as the $\beta$-model and SWLME model). For $N=2$, we can again consider the equilibrium states with $\alpha_1=0$.
In particular, we set $h=\kappa$, $u_m^2 = 35 g \kappa$, $\alpha_1=0$, $\alpha_2=-u_m$, and $\xi = 35/u_m$. Then at this state
$$
A_H = 
\begin{bmatrix}
0 & 1 & 0 & 0 \\
-\frac{41}{35}u_m^2 & 2u_m & 0 & -\frac{2}{5}u_m \\
0 & 0 & 0 & 0 \\
\frac{12}{7}u_m^2 & -2u_m & 0 & \frac{4}{7}u_m
\end{bmatrix},
$$
and the matrix $S_U + i\xi A_H$ has an eigenvalue $0.135419 + 57.5886 i$ with positive real part.
\end{remark}
\section{Simulations}
\label{sec:simulations}

In this section, we will perform simulations for different friction parameters $\nu, \lambda$, such that the three equilibrium manifolds identified in Section \ref{sec:equilibria} are reproduced. All simulations shown here are performed for the models with $N=2$, but the results are qualitatively the same for larger $N$.

The simulations are carried out with a second order scheme path-consistent finite volume scheme based on the implementation \cite{Pimentel2020a}. For more details we refer to the reference.

As initial condition we use a dam-break setup on the domain $[-1,2]$:
\begin{equation*}
    h(0,x) = \left\{\begin{array}{l}
        1.5 \ \ \text{if} \ \ x<0 \\
        1.0 \ \ \text{if} \ \ x>0
        \end{array}\right., u_m(0,x) = 0.25, \alpha_1 = -0.1, \alpha_2 = -0.1,
\end{equation*}
where the initial condition ensures that the velocity profile yields $u(\zeta) > 0$ for all $\zeta \in [0,1]$.

The solution is computed with the second order finite volume scheme taken from \cite{Koellermeier2020}. The scheme uses a path-consistent discretization of the non-conservative products. In addition, a straightforward time splitting between the transport and friction terms is employed. The friction step is then computed using an implicit scheme for the treatment of the possibly stiff right-hand side.  

Three different sets of parameters $\nu, \lambda$ are chosen such that convergence to each respective equilibrium is shown numerically. The specific choices of the parameters are made such that the convergence occurs after reasonable time and the evolution is visible as monitored at times $t=0.1, 0.25, 0.5, 1$. Simulations for larger end times lead to the same results and are not shown here for conciseness.

Figure \ref{fig:water-at-rest} shows the results of the HSWME model for $\nu = 10, \lambda = 1$. The choice of the friction parameters leads to a fast relaxation to the water-at-rest equilibrium $\mathcal{E}_1=\{ U \in G: \alpha_1 = \dots = \alpha_N = 0 \}$. The relaxation can be seen for the solutions at times $t=0.1, 0.25, 0.5, 1$. The variable $EQ_1 = |u_m| + \sum_{i=1}^N |\alpha_i|$ measures the $L_1$-distance from the water-at-rest equilibrium. It can be seen that the flow solution indeed quickly relaxes towards the water-at-rest equilibrium despite the initial shock. The velocity profile in Figure \ref{fig:water-at-rest_uy} clearly converges to the water-at-rest equilibrium. We note that the results are qualitatively the same for the $\beta$-HSWME and SWLME models, which are not shown here.
\begin{figure}[htb!]
		\begin{subfigure}{0.32\textwidth}
		    \centering
			\includegraphics[width=0.99\textwidth]{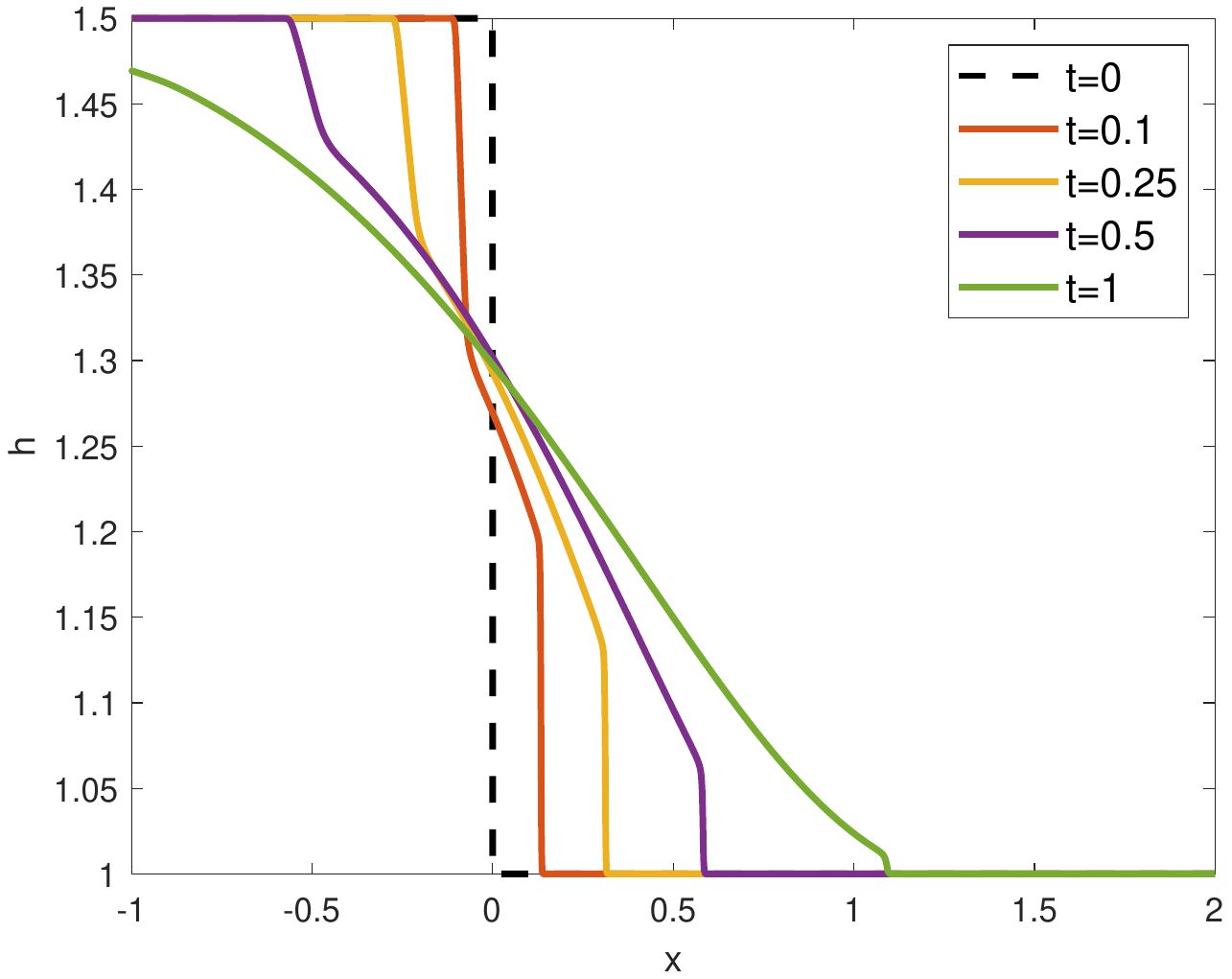}
			\caption{water height $h$}
		    \label{fig:water-at-rest_h}
		\end{subfigure}
		\begin{subfigure}{0.32\textwidth}
		    \centering
			\includegraphics[width=0.99\textwidth]{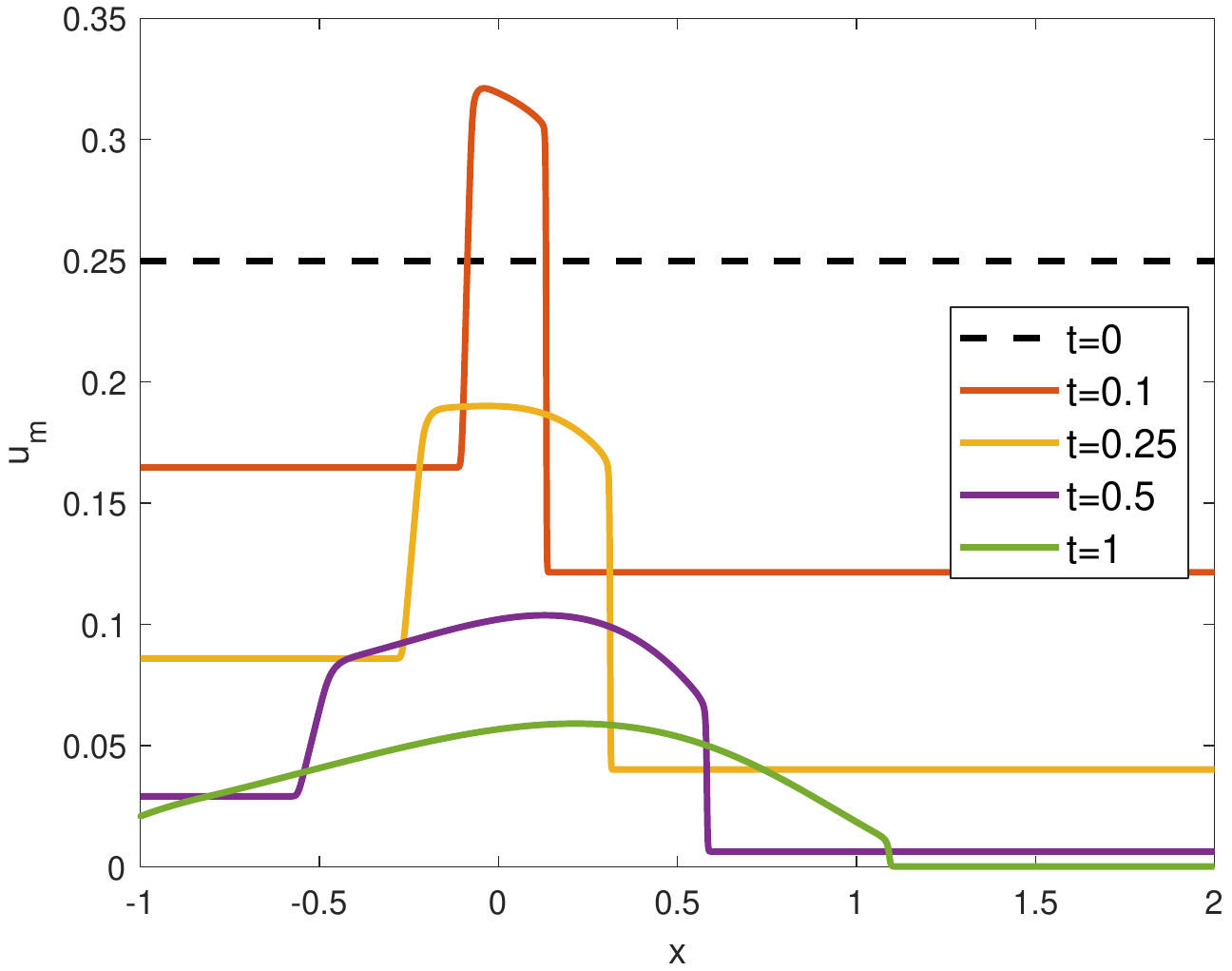}
			\caption{mean velocity $u_m$}
		    \label{fig:water-at-rest_u}
		\end{subfigure}
		\begin{subfigure}{0.32\textwidth}
		    \centering
			\includegraphics[width=0.99\textwidth]{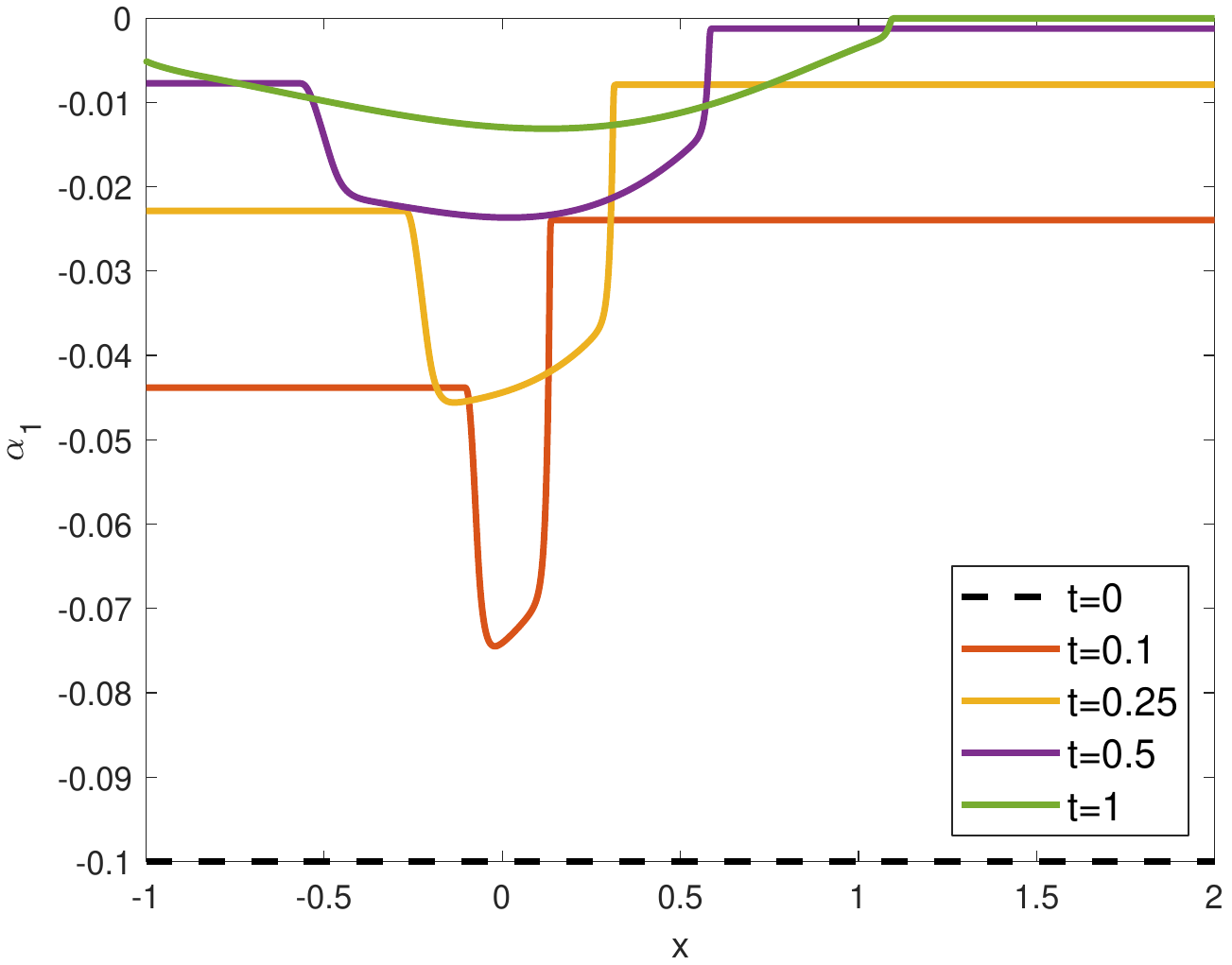}
			\caption{coefficient $\alpha_1$}
		    \label{fig:water-at-rest_alpha1}
		\end{subfigure}
		\center{
		\begin{subfigure}{0.32\textwidth}
		    \centering
			\includegraphics[width=0.99\textwidth]{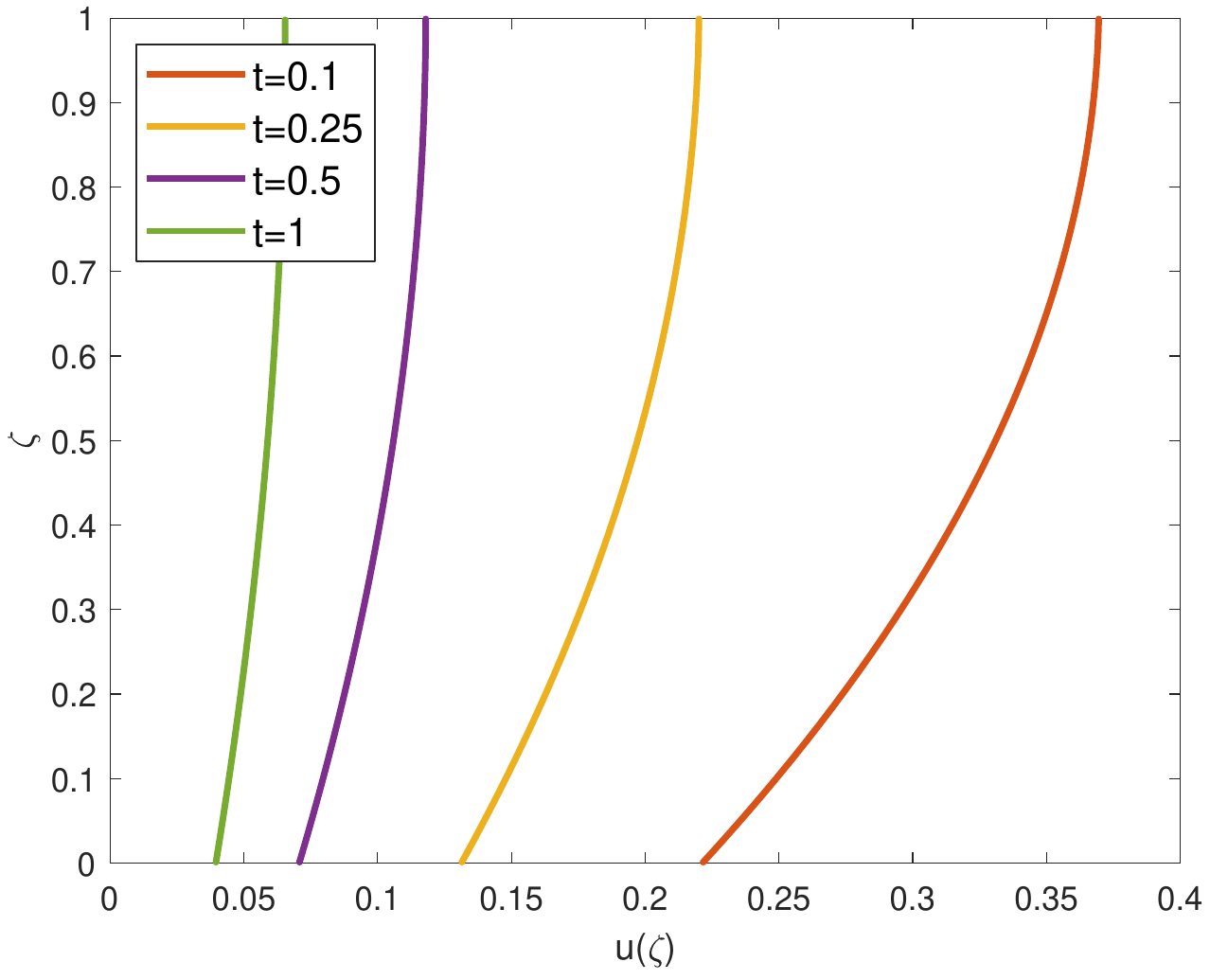}
			\caption{velocity profile $u(\zeta)$, $x=0$}
		    \label{fig:water-at-rest_uy}
		\end{subfigure}
		\begin{subfigure}{0.32\textwidth}
		    \centering
			\includegraphics[width=0.99\textwidth]{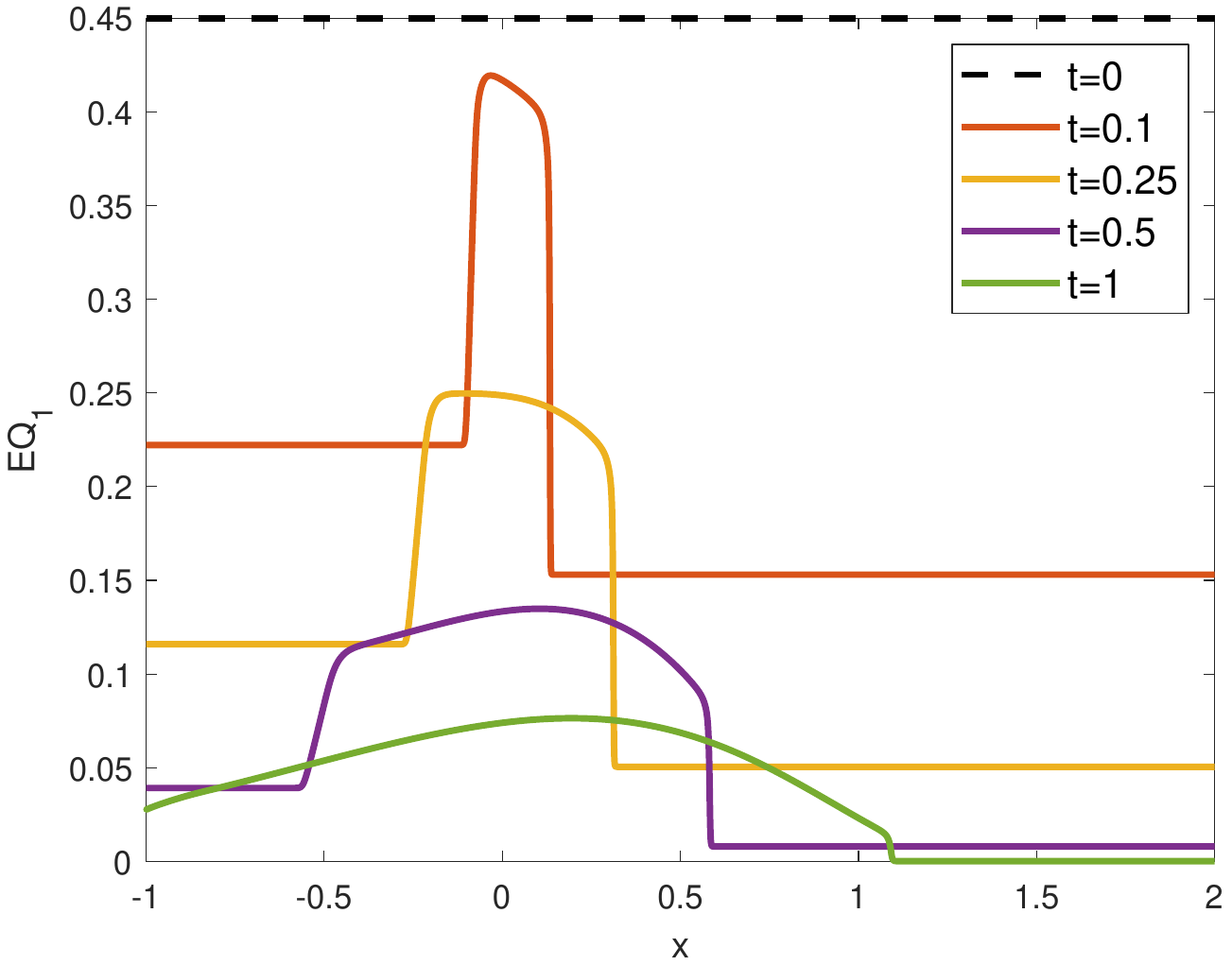}
			\caption{Deviation from equilibrium}
		    \label{fig:water-at-rest_EQ1}
		\end{subfigure}}
        \caption{For $\nu=10, \lambda = 1$, the HSWME model is converging to the water-at-rest equilibrium with time.}
        \label{fig:water-at-rest}
\end{figure}

Figure \ref{fig:constant-velocity} shows the results of the HSWME model for $\nu = 1, \lambda = 10$. The choice of the friction parameters leads to a fast relaxation to the constant-velocity equilibrium $\mathcal{E}_2 = \{ U \in G: \alpha_1 = \dots = \alpha_N = 0 \}$. The relaxation can be seen for the solutions at times $t=0.1, 0.25, 0.5, 1$. The variable $EQ_2 = \sum_{i=1}^N |\alpha_i|$ measures the $L_1$-distance from the constant-velocity equilibrium. It can be seen that the flow solution indeed quickly relaxes towards the constant-velocity equilibrium. The velocity profile in Figure \ref{fig:constant-velocity_uy} clearly converges to the constant-velocity equilibrium. Again, we note that the results are qualitatively the same for the $\beta$-HSWME and SWLME models, which are not shown here.
\begin{figure}[htb!]
		\begin{subfigure}{0.32\textwidth}
		    \centering
			\includegraphics[width=0.99\textwidth]{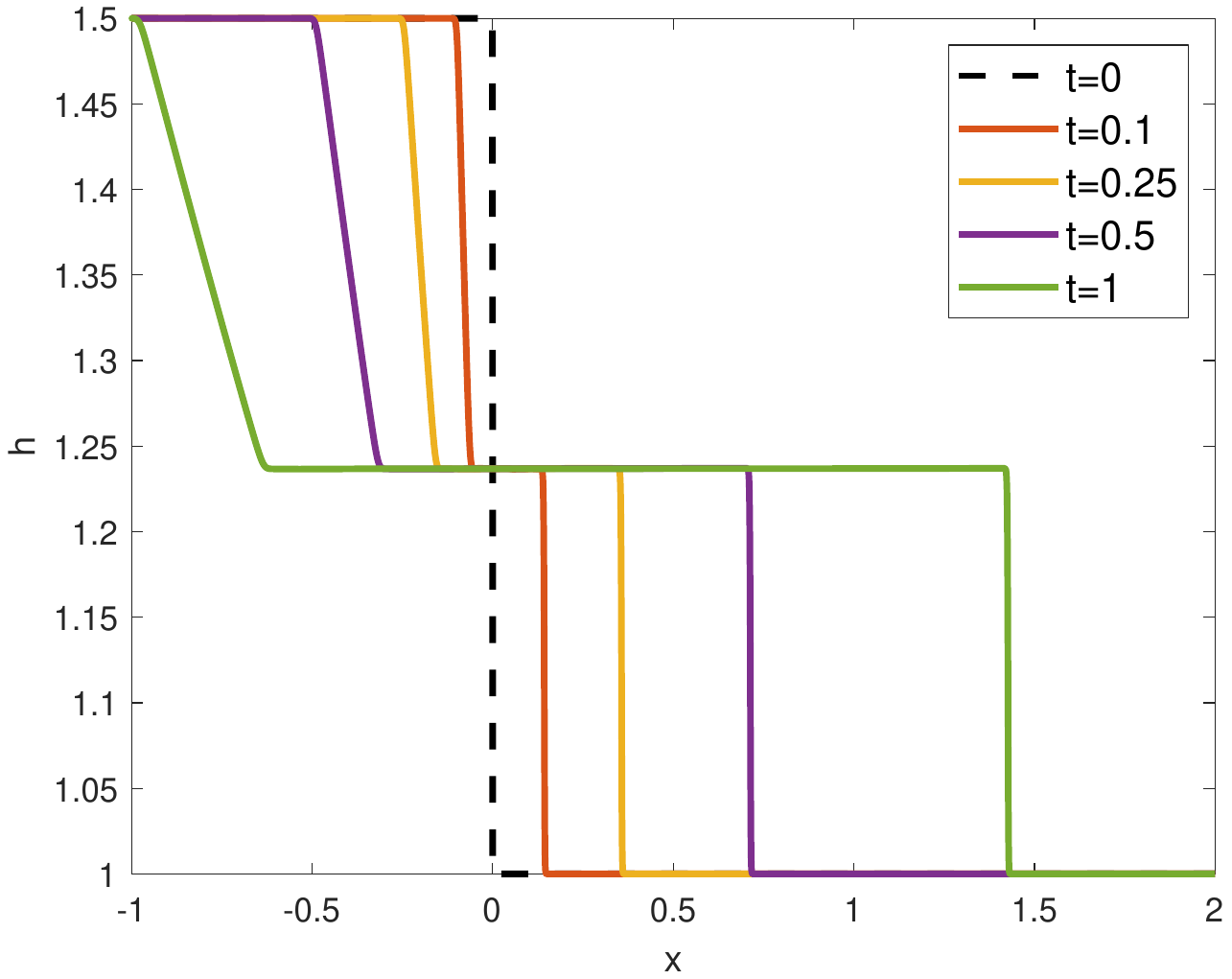}
			\caption{water height $h$}
		    \label{fig:constant-velocity_h}
		\end{subfigure}
		\begin{subfigure}{0.32\textwidth}
		    \centering
			\includegraphics[width=0.99\textwidth]{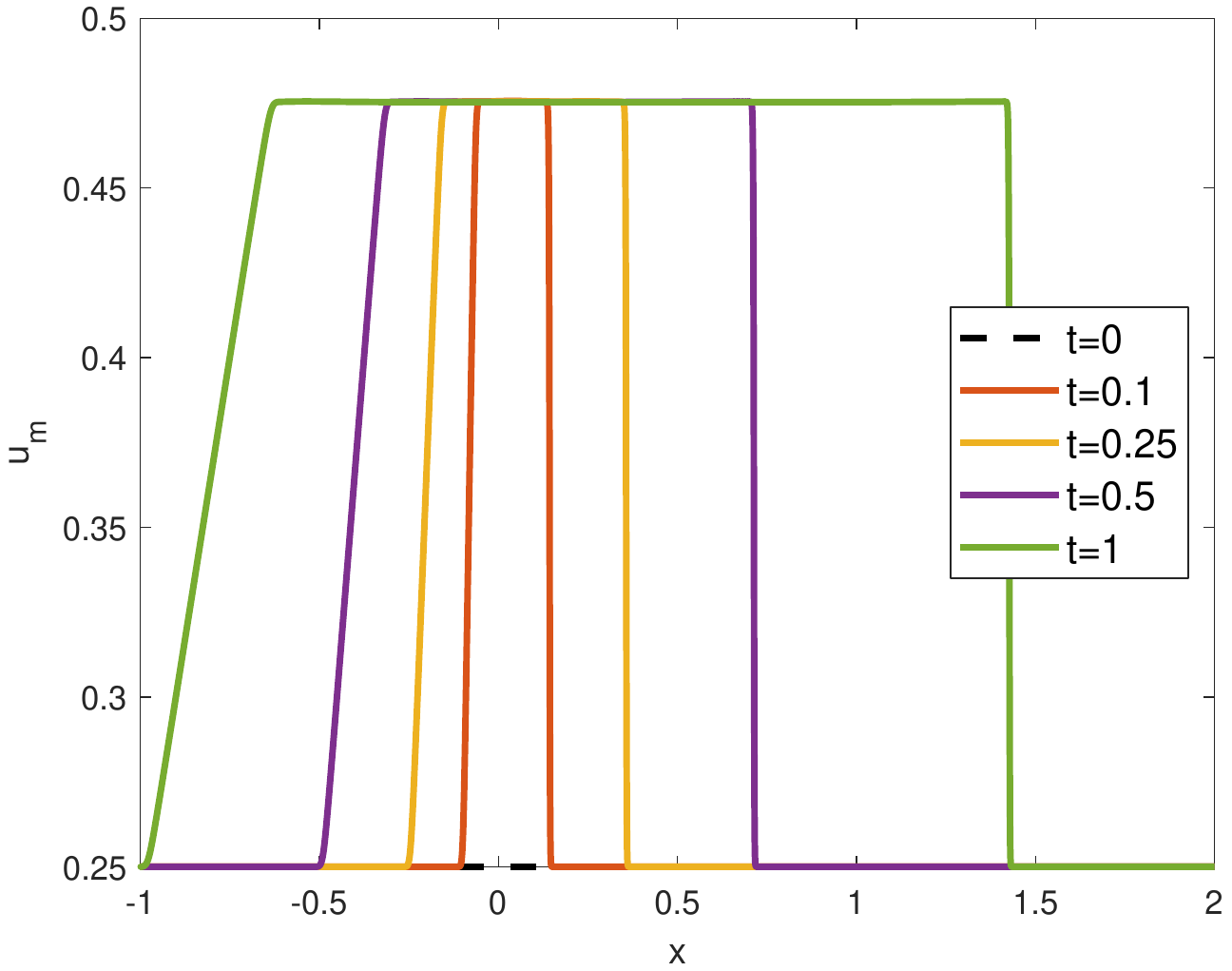}
			\caption{mean velocity $u_m$}
		    \label{fig:constant-velocity_u}
		\end{subfigure}
		\begin{subfigure}{0.32\textwidth}
		    \centering
			\includegraphics[width=0.99\textwidth]{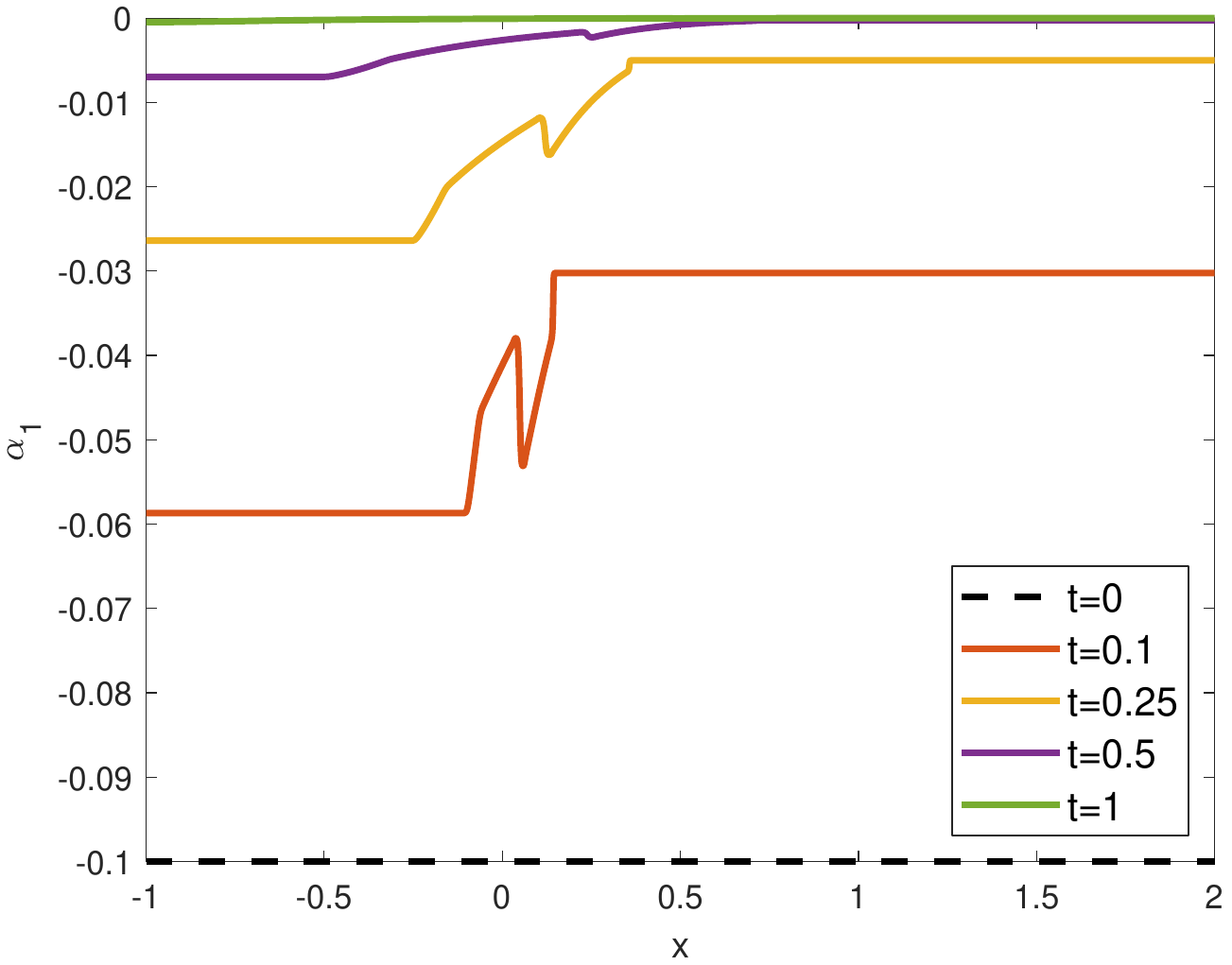}
			\caption{coefficient $\alpha_1$}
		    \label{fig:constant-velocity_alpha1}
		\end{subfigure}
		\center{
		\begin{subfigure}{0.32\textwidth}
		    \centering
			\includegraphics[width=0.99\textwidth]{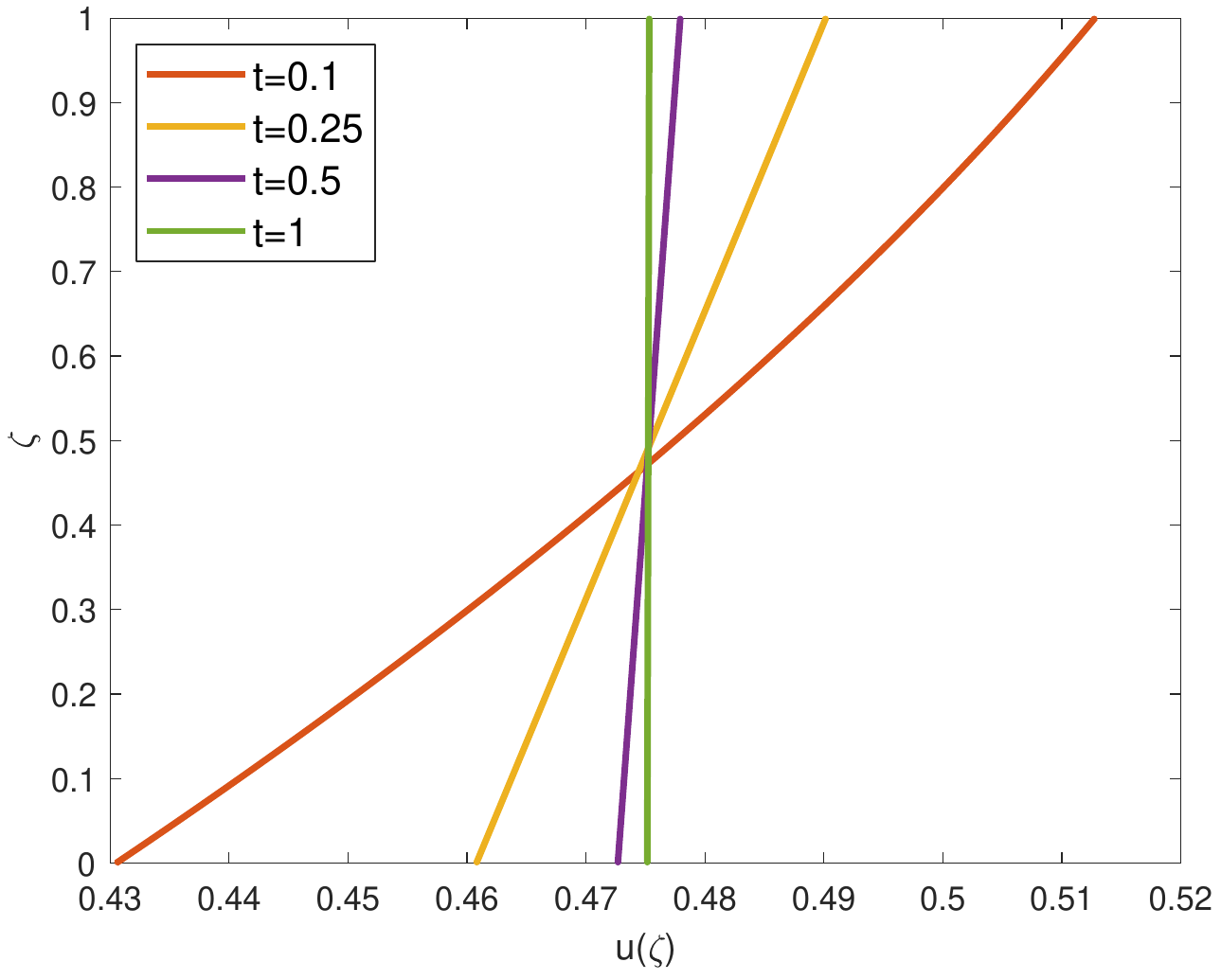}
			\caption{velocity profile $u(\zeta)$, $x=0$}
		    \label{fig:constant-velocity_uy}
		\end{subfigure}
		\begin{subfigure}{0.32\textwidth}
		    \centering
			\includegraphics[width=0.99\textwidth]{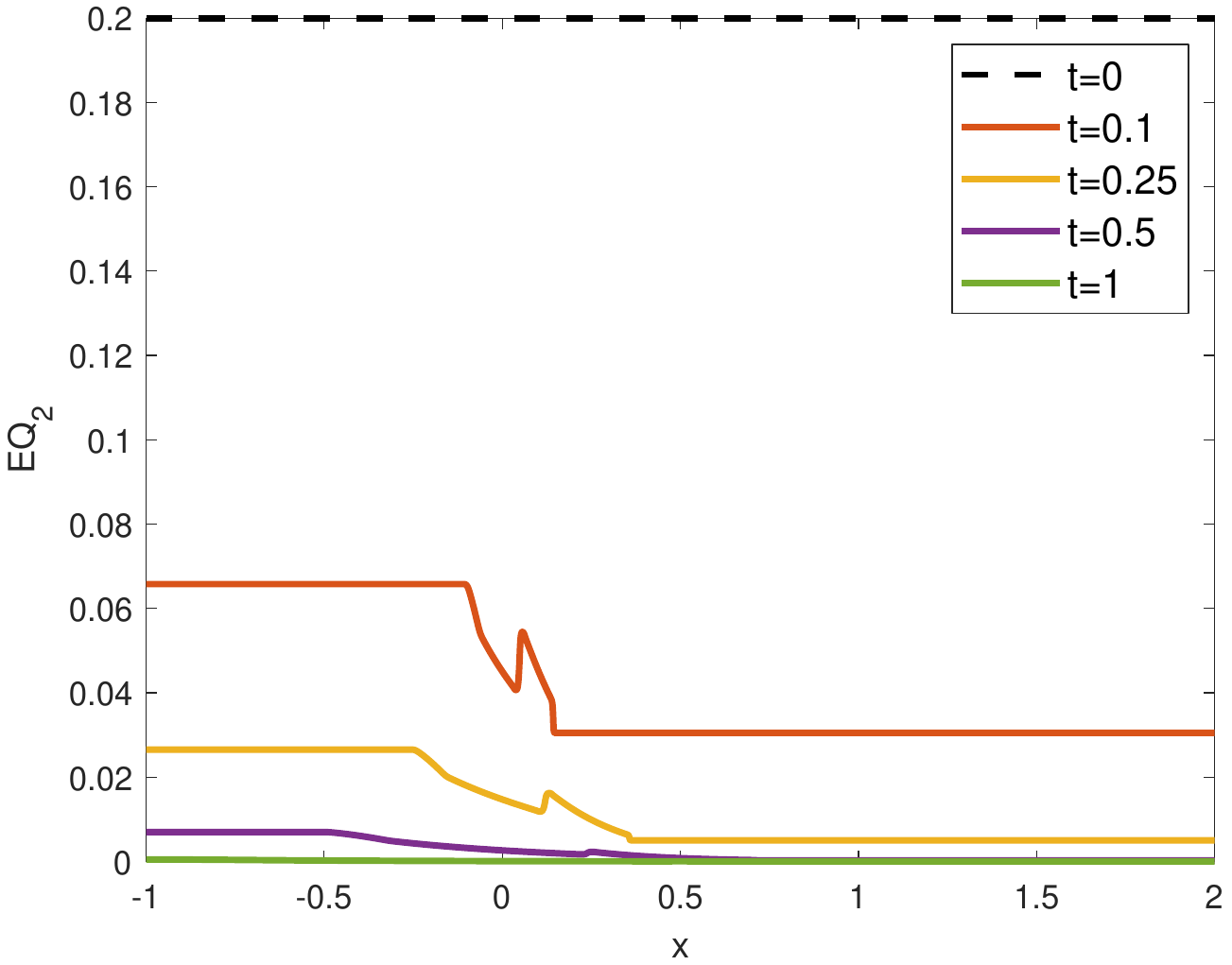}
			\caption{Deviation from equilibrium}
		    \label{fig:constant-velocity_EQ2}
		\end{subfigure}}
        \caption{For $\nu=1, \lambda = 10$, the HSWME model is converging to the constant-velocity equilibrium with time.}
        \label{fig:constant-velocity}
\end{figure}

Figure \ref{fig:bottom-at-rest} shows the results of the HSWME model for $\nu = 0.001, \lambda = 0.001$. The choice of the friction parameters leads to a fast relaxation to the bottom-at-rest equilibrium $\mathcal{E}_3 = \{ U \in G: u_m+\sum_j \alpha_j = 0 \}$. The relaxation can be seen for the solutions at times $t=0.1, 0.25, 0.5, 1$. The variable $EQ_3 = |u_m + \sum_{i=1}^N \alpha_i|$ measures the $L_1$-distance from the bottom-at-rest equilibrium, i.e. the deviation from zero velocity at the bottom. It can be seen that the flow solution indeed quickly relaxes towards the bottom-at-rest equilibrium. Note that the bottom-at-rest equilibrium here is different from the water-at-rest equilibrium in Figure \ref{fig:water-at-rest}, where also the mean velocity converges to zero. The velocity profile in Figure \ref{fig:bottom-at-rest_uy} clearly converges to the bottom-at-rest equilibrium. As shown in Section \ref{sec:analysis}, this equilibrium is linearly unstable. In this full non-linear simulation we thus may or may not see instabilities emerging. There is indeed a small instability building up in Figure \ref{fig:bottom-at-rest_EQ3}, which is best visible for time $t=1$ around $x=0.5$. The instability clearly grows in time as predicted. This can also be seen in tests with larger end times, which are not shown here for conciseness. This test case numerically shows that the bottom-at-rest equilibrium leads to a non-linear instability for the HSWME model. We note that the results are qualitatively the same for the $\beta$-HSWME model, which is not shown here. However, the SWLME yields a slightly different non-linear stability behavior, as seen in the next figure.
\begin{figure}[htb!]
		\begin{subfigure}{0.32\textwidth}
		    \centering
			\includegraphics[width=0.99\textwidth]{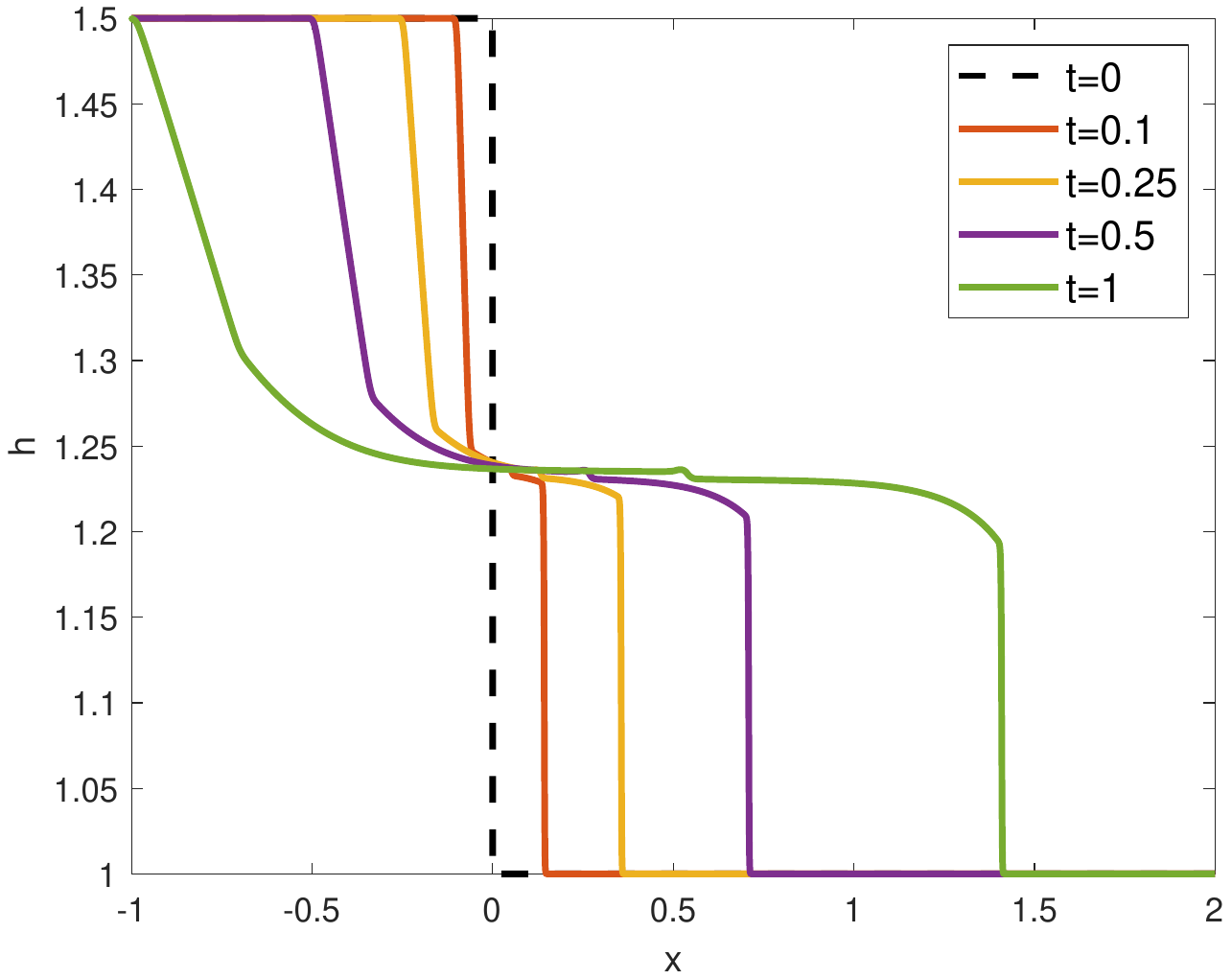}
			\caption{water height $h$}
		    \label{fig:bottom-at-rest_h}
		\end{subfigure}
		\begin{subfigure}{0.32\textwidth}
		    \centering
			\includegraphics[width=0.99\textwidth]{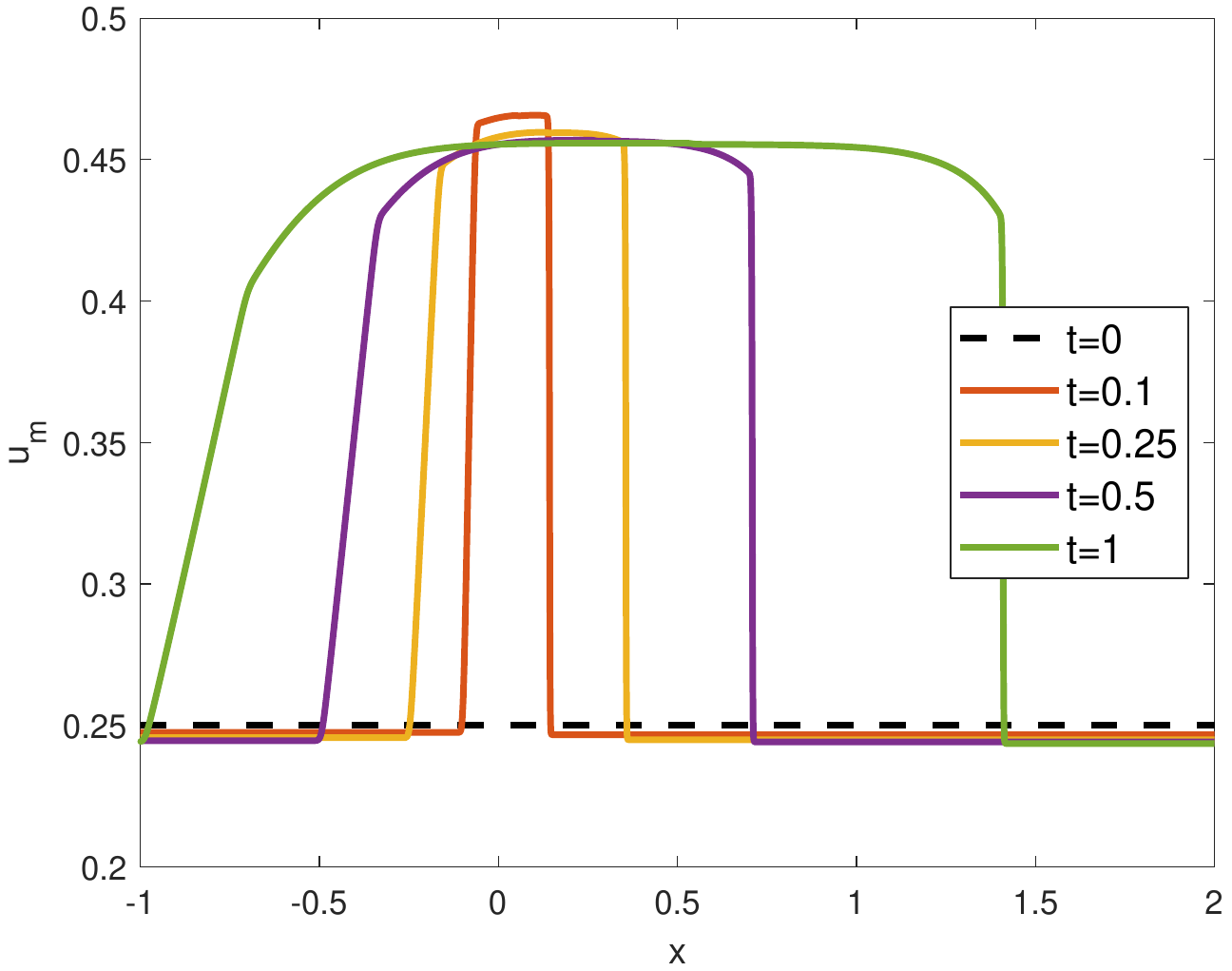}
			\caption{mean velocity $u_m$}
		    \label{fig:bottom-at-rest_u}
		\end{subfigure}
		\begin{subfigure}{0.32\textwidth}
		    \centering
			\includegraphics[width=0.99\textwidth]{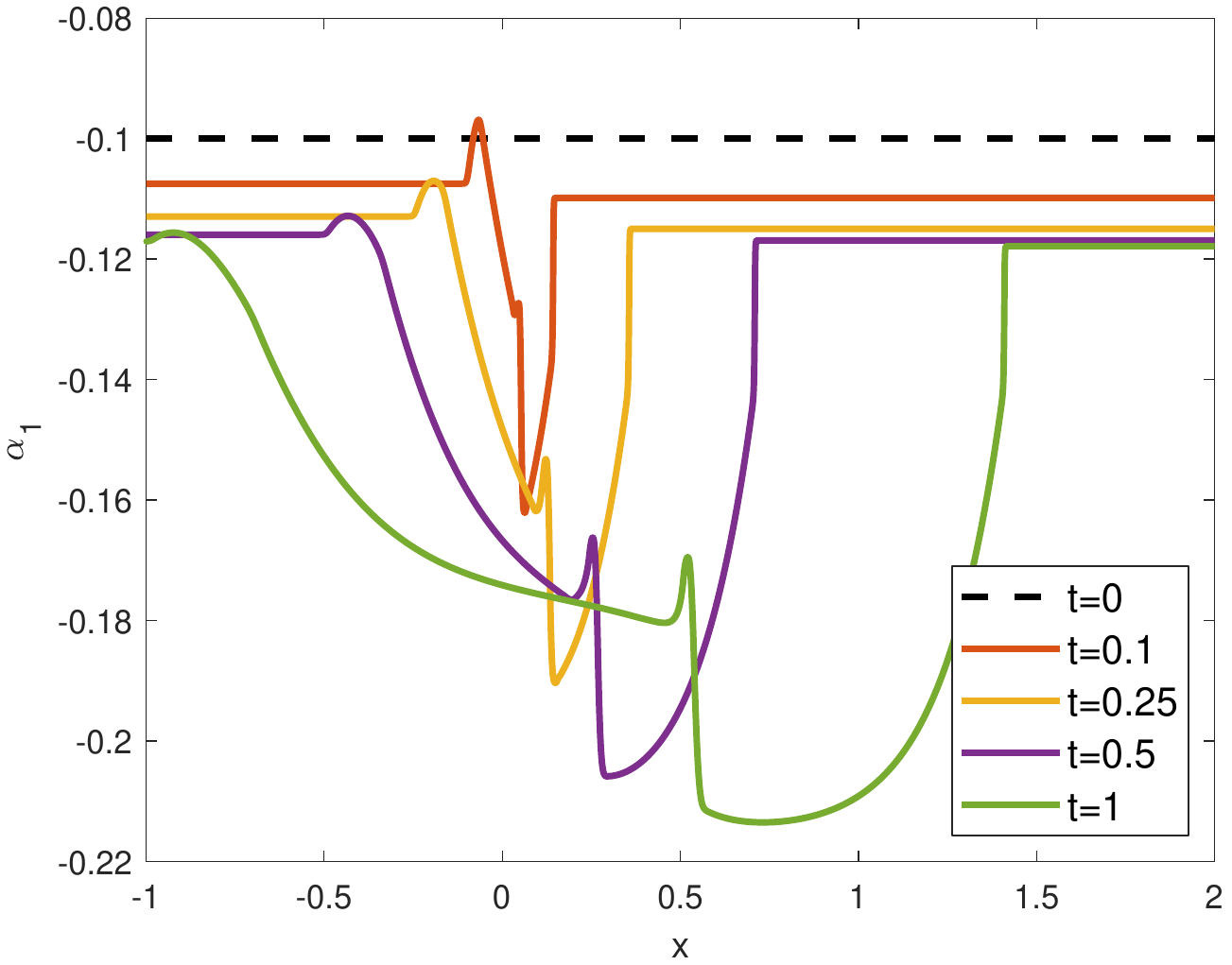}
			\caption{coefficient $\alpha_1$}
		    \label{fig:bottom-at-rest_alpha1}
		\end{subfigure}
		\center{
		\begin{subfigure}{0.32\textwidth}
		    \centering
			\includegraphics[width=0.99\textwidth]{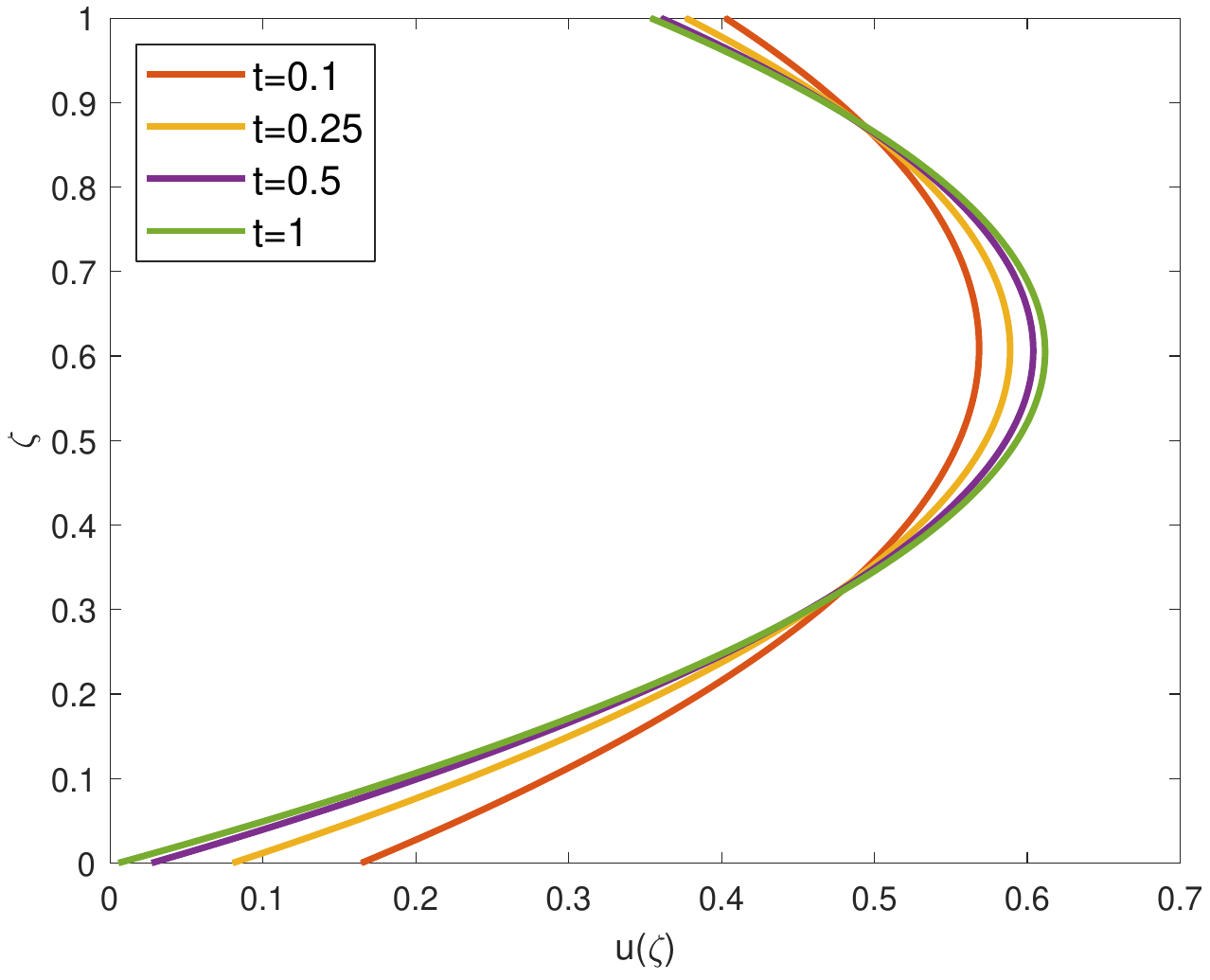}
			\caption{velocity profile $u(\zeta)$, $x=0$}
		    \label{fig:bottom-at-rest_uy}
		\end{subfigure}
		\begin{subfigure}{0.32\textwidth}
		    \centering
			\includegraphics[width=0.99\textwidth]{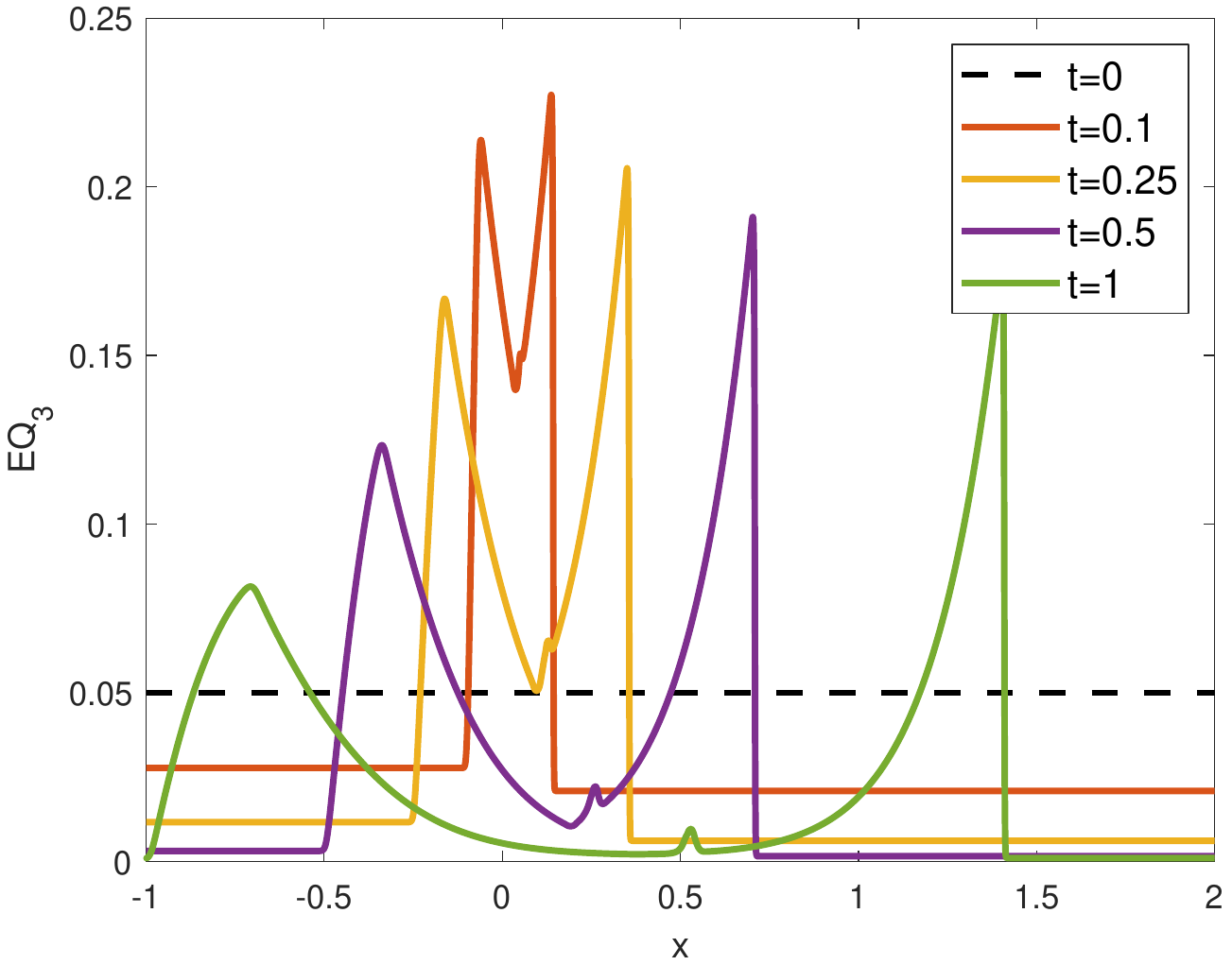}
			\caption{Deviation from equilibrium}
		    \label{fig:bottom-at-rest_EQ3}
		\end{subfigure}}
        \caption{For $\nu=10^{-3}, \lambda = 10^{-3}$, the HSWME model is converging to the bottom-at-rest equilibrium with time. A small instability is forming and propagating downstream.}
        \label{fig:bottom-at-rest}
\end{figure}

Figure \ref{fig:new_bottom-at-rest} shows the results of the SWLME model for $\nu = 0.0011, \lambda = 0.001$. The results are the same as for the HSWME model in Figure \ref{fig:bottom-at-rest}, despite the fact that the instability is not visible in the results. The missing instability is in agreement with the propagation speeds and the eigenstructure of the system. The wave that is causing the instability in the HSWME model test case above is removed in the SWLME model. This test case numerically shows that the bottom-at-rest equilibrium does not lead to a non-linear instability for the SWLME model. We note that this instability is also not forming for larger end times. It seems that the SWLME model has better non-linear stability properties than the HSWME and $\beta$-HSWME models.
\begin{figure}[htb!]
		\begin{subfigure}{0.32\textwidth}
		    \centering
			\includegraphics[width=0.99\textwidth]{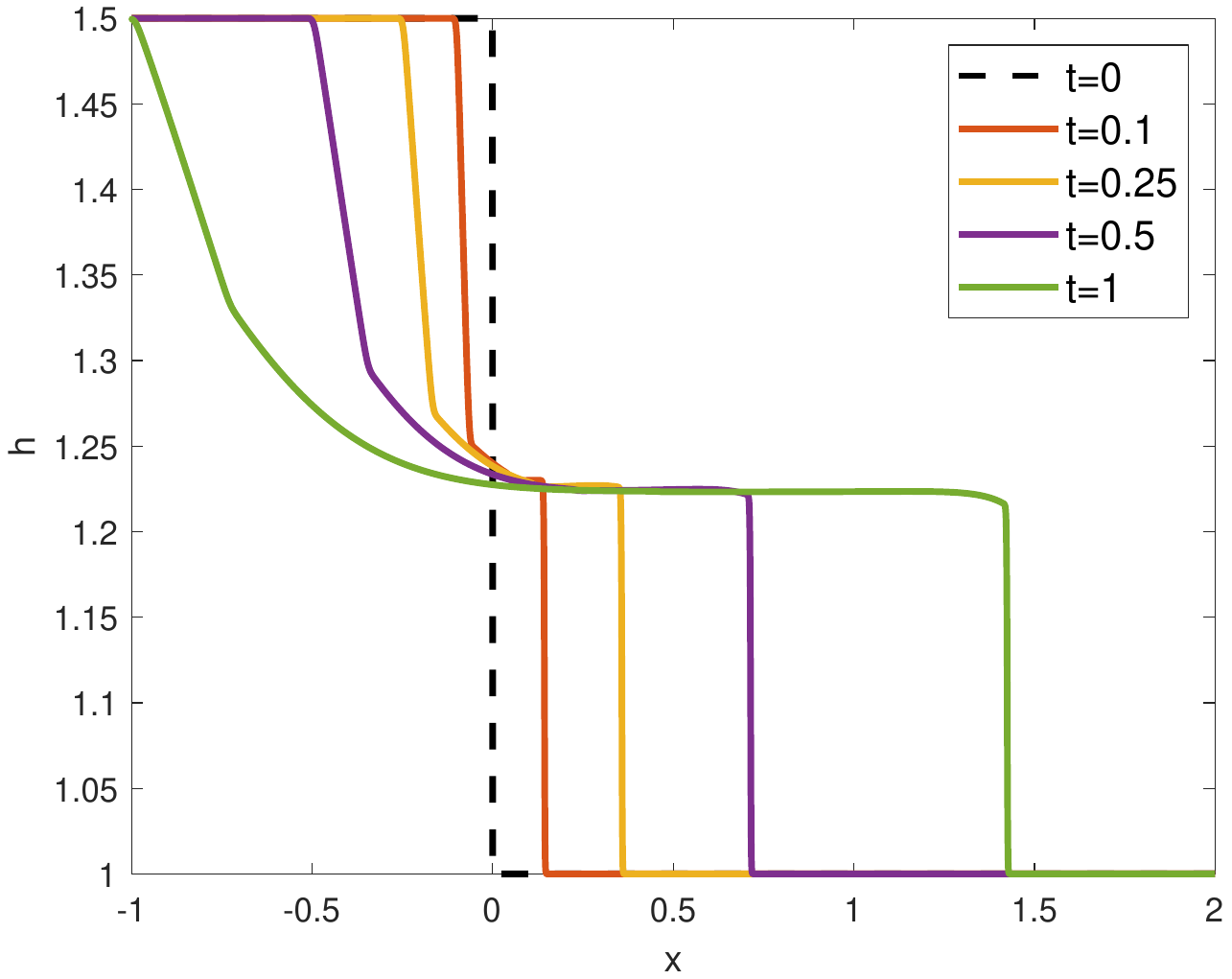}
			\caption{water height $h$}
		    \label{fig:new_bottom-at-rest_h}
		\end{subfigure}
		\begin{subfigure}{0.32\textwidth}
		    \centering
			\includegraphics[width=0.99\textwidth]{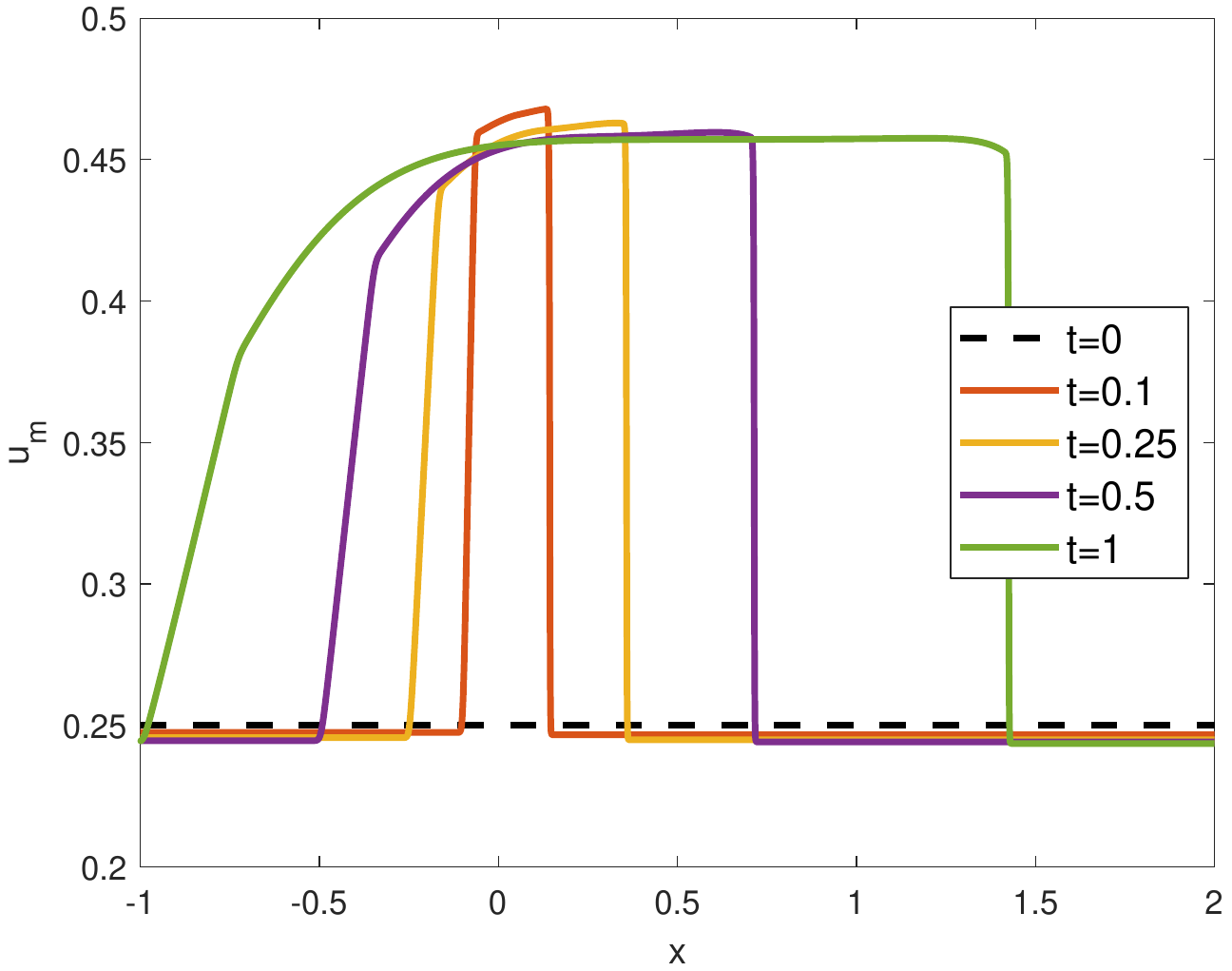}
			\caption{mean velocity $u_m$}
		    \label{fig:new_bottom-at-rest_u}
		\end{subfigure}
		\begin{subfigure}{0.32\textwidth}
		    \centering
			\includegraphics[width=0.99\textwidth]{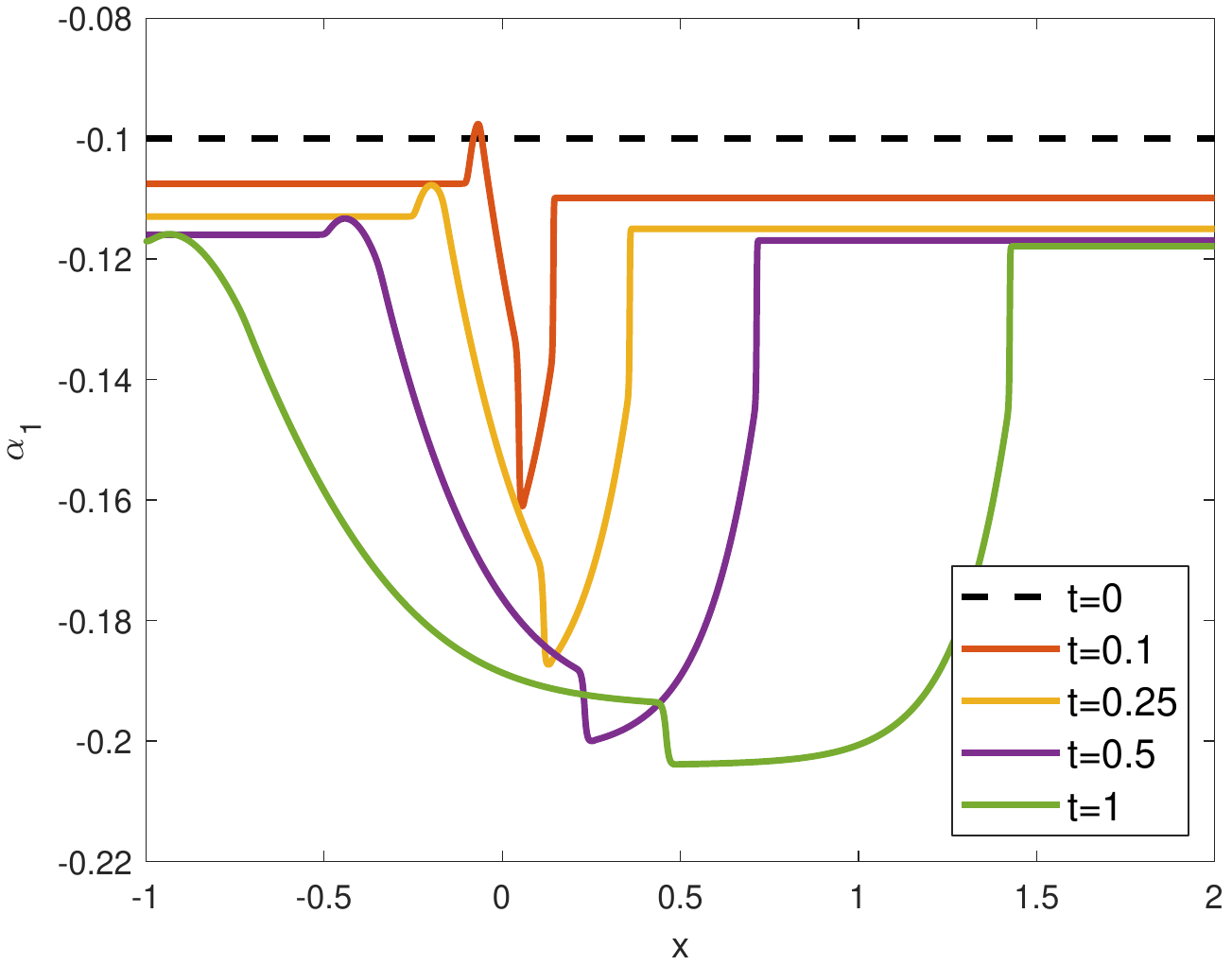}
			\caption{coefficient $\alpha_1$}
		    \label{fig:new_bottom-at-rest_alpha1}
		\end{subfigure}
		\center{
		\begin{subfigure}{0.32\textwidth}
		    \centering
			\includegraphics[width=0.99\textwidth]{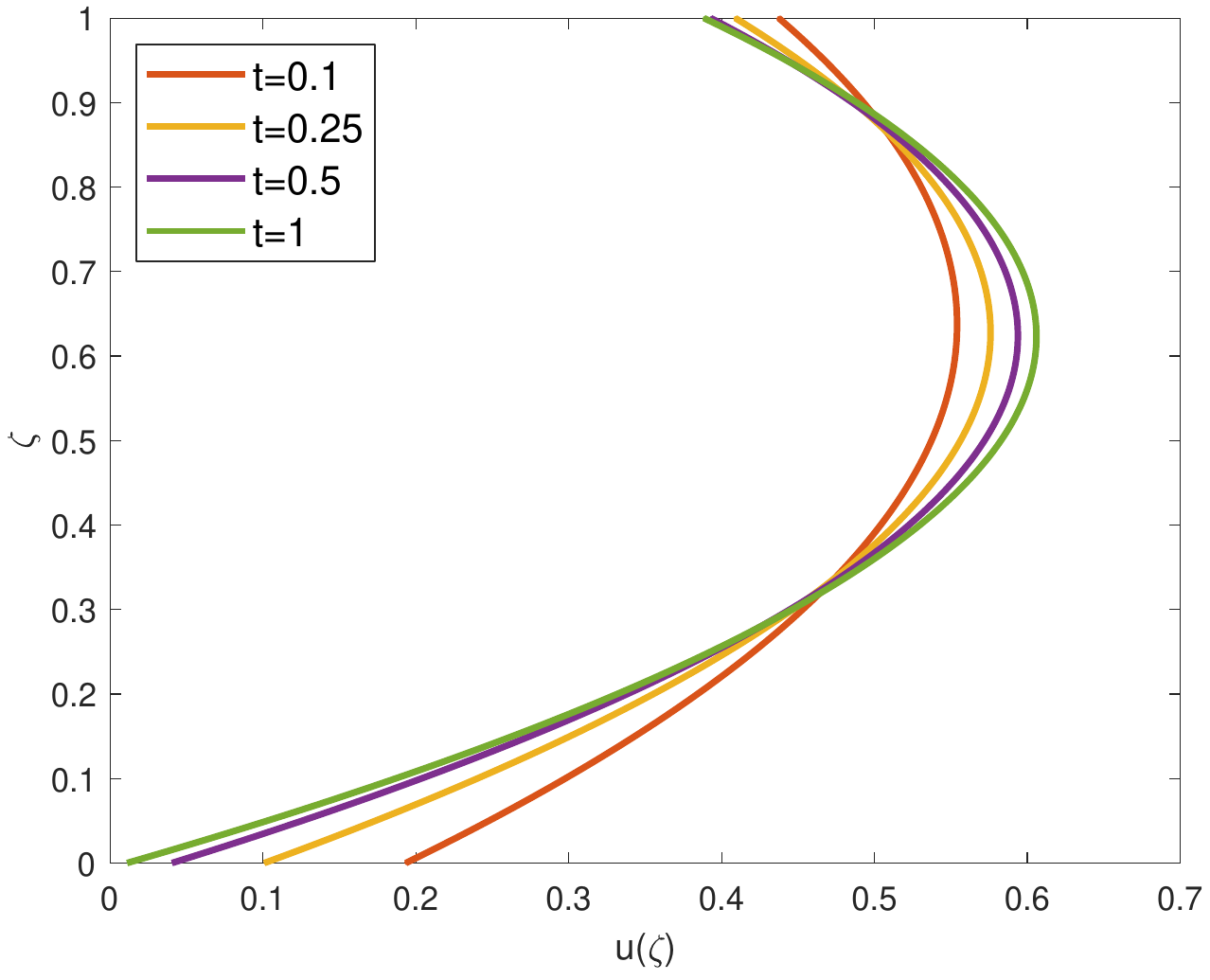}
			\caption{velocity profile $u(\zeta)$, $x=0$}
		    \label{fig:new_bottom-at-rest_uy}
		\end{subfigure}
		\begin{subfigure}{0.32\textwidth}
		    \centering
			\includegraphics[width=0.99\textwidth]{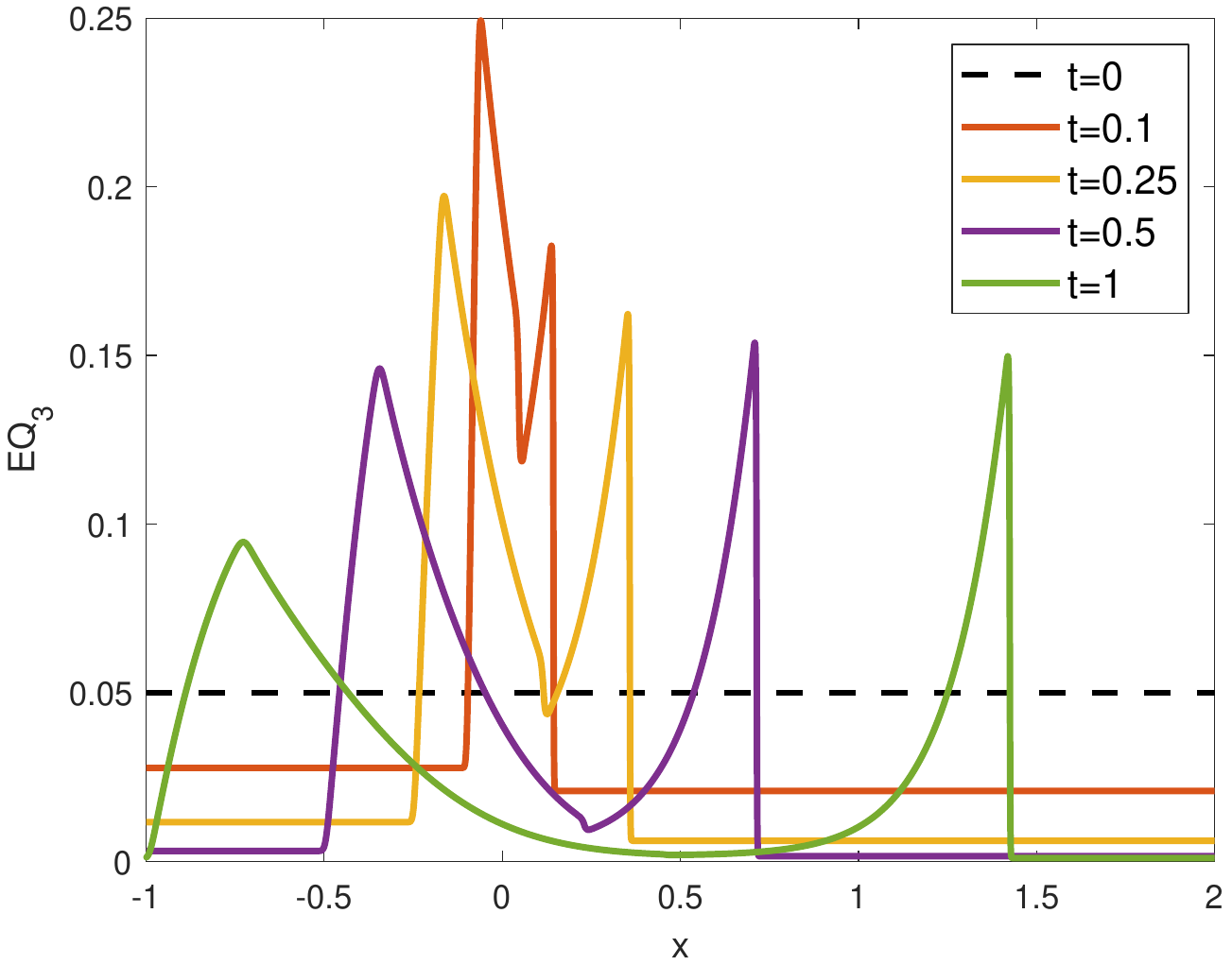}
			\caption{Deviation from equilibrium}
		    \label{fig:new_bottom-at-rest_EQ3}
		\end{subfigure}}
        \caption{For $\nu=10^{-3}, \lambda = 10^{-3}$, the SWLME model is converging to the bottom-at-rest equilibrium with time. No instability can be seen.}
        \label{fig:new_bottom-at-rest}
\end{figure}

As a summary of the numerical simulations it can be said that the expected convergence towards the three distinguished equilibrium manifolds could clearly be demonstrated. All equilibria exist as converged states of simulation scenarios. Despite the proven linear instability of the bottom-at-rest equilibrium, the SWLME model seems to be more stable than the HSWME and $\beta$-HSWME models. 

\section{Conclusions}
This paper performs a thorough analysis, both analytically and numerically, on the stability of several newly-developed shallow water moment models. To properly analyze the moment models, we first gave a concise but very general hyperbolicity proof for the HSWME and $\beta$-HSWME models. Next we identified three different equilibrium manifolds, which include the water-at-rest, the constant-velocity, and the bottom-at-rest profiles for the velocity. 
Each equilibrium manifold is attained by different limiting conditions for the friction coefficients. 
Our analysis revealed nonlinear stability of the water-at-rest equilibrium and the constant-velocity equilibrium. 
For the bottom-at-rest equilibrium, several counterexamples for linear instability were shown. However, the linear instability goes together with changing velocity sign, which can be interpreted as a violation of the shallow flow regime. We therefore infer that the equations are linearly stable as long as the flow conditions allow to treat the problem as shallow. In numerical tests, we obtained the same three equilibrium manifolds by choosing different friction parameters. A small non-linear instability could be seen for the bottom-at-rest equilibrium in the case of HSWME and $\beta$-HSWME, but the instability was not visible for the SWLME model due to its different structure. 
Building upon this stability analysis, further work should focus on numerical methods to properly preserve the equilibrium manifolds or guarantee positive velocity distributions. Additionally, high-order numerical schemes in the stiff regime can be investigated. 

\section*{Acknowledgements}
This research has been partially supported by the European Union's Horizon 2020 research and innovation program under the Marie Sklodowska-Curie grant agreement no. 888596.
Q. Huang is supported by the National Natural Science Foundation of China (Grant no. 51906122).
J. Koellermeier is a postdoctoral fellow in fundamental research of the Research Foundation - Flanders (FWO), funded by FWO grant no. 0880.212.840.


\bibliographystyle{plain}
\bibliography{references}

\begin{thebibliography}{10}

\bibitem{Abramowitz1964}
M.~Abramowitz and I.~A. Stegun.
\newblock {\em Handbook of Mathematical Functions: With Formulas, Graphs, and
  Mathematical Tables}.
\newblock Applied mathematics series. Dover Publications, 1964.

\bibitem{Cai2013}
Z.~Cai, Y.~Fan, and R.~Li.
\newblock Globally hyperbolic regularization of {G}rad's moment system in one
  dimensional space.
\newblock {\em Commun. Math. Sci.}, 11(2):547--571, 2013.

\bibitem{christen2010ramms}
M.~Christen, J.~Kowalski, and P.~Bartelt.
\newblock Ramms: Numerical simulation of dense snow avalanches in
  three-dimensional terrain.
\newblock {\em Cold Regions Science and Technology}, 63(1-2):1--14, 2010.

\bibitem{craster2009dynamics}
R.~V. Craster and O.~K. Matar.
\newblock Dynamics and stability of thin liquid films.
\newblock {\em Reviews of modern physics}, 81(3):1131, 2009.

\bibitem{Di2017_2}
Yana Di, Yuwei Fan, Ruo Li, and Lingchao Zheng.
\newblock Linear stability of hyperbolic moment models for {B}oltzmann
  equation.
\newblock {\em Numer. Math. Theor. Meth. Appl.}, 10(2):255--277, May 2017.

\bibitem{Fan2016}
Y.~Fan, J.~Koellermeier, J.~Li, R.~Li, and M.~Torrilhon.
\newblock Model reduction of kinetic equations by operator projection.
\newblock {\em Journal of Statistical Physics}, 162(2):457--486, 2016.

\bibitem{Garres2020}
J.~Garres-Díaz, T.~Morales de~Luna, M.~J. Castro, and J.~Koellermeier.
\newblock Shallow water moment models for bedload transport problems.
\newblock {\em submitted}.

\bibitem{Grad1949}
H.~Grad.
\newblock On the kinetic theory of rarefied gases.
\newblock {\em Communications on Pure and Applied Mathematics}, 2(4):331--407,
  1949.

\bibitem{Huang2020}
Q.~Huang, S.~Li, and W.-A. Yong.
\newblock Stability analysis of quadrature-based moment methods for kinetic
  equations.
\newblock {\em SIAM J. Appl. Math.}, 80(1):206--231, 2020.

\bibitem{Koellermeier2017b}
J.~Koellermeier.
\newblock {\em Derivation and numerical solution of hyperbolic moment equations
  for rarefied gas flows}.
\newblock dissertation, RWTH Aachen University, Aachen, 2017.

\bibitem{Pimentel2020}
J.~Koellermeier and E.~Pimentel.
\newblock Steady states and well-balanced schemes for shallow water moment
  equations with topography.
\newblock {\em submitted}, http://arxiv.org/abs/2011.07667.

\bibitem{Pimentel2020a}
J.~Koellermeier and E.~Pimentel-Garc\'ia.
\newblock Software for: Steady states and well-balanced schemes for shallow
  water moment equations with topography.
\newblock Zenodo, 2020.
\newblock http://doi.org/10.5281/zenodo.4274991.

\bibitem{Koellermeier2020}
J.~Koellermeier and M.~Rominger.
\newblock Analysis and numerical simulation of hyperbolic shallow water moment
  equations.
\newblock {\em Commun. Comp. Phys.}, 28((3)):1038--1084, 2020.

\bibitem{Koellermeier2014}
J.~Koellermeier, R.~Pascal Schaerer, and M.~Torrilhon.
\newblock A framework for hyperbolic approximation of kinetic equations using
  quadrature-based projection methods.
\newblock {\em Kinetic and Related Models}, 7(3):531--549, 2014.

\bibitem{kowalski2018moment}
J.~Kowalski and M.~Torrilhon.
\newblock Moment approximations and model cascades for shallow flow.
\newblock {\em Communications in Computational Physics}, 25, 2019.

\bibitem{Kremer2010}
G.~M. Kremer.
\newblock {\em An Introduction to the Boltzmann Equation and Transport
  Processes in Gases}.
\newblock Springer, New York, 2010.

\bibitem{LiuJW2016}
J.~Liu and W.-A. Yong.
\newblock Stability analysis of the {B}iot/{S}quirt models for wave propagation
  in saturated porous media.
\newblock {\em Geophys. J. Int.}, 204(1):535--543, January 2016.

\bibitem{schijf1953theoretical}
J.~B. Schijf and J.~C. Sch{\"o}nfled.
\newblock Theoretical considerations on the motion of salt and fresh water.
\newblock IAHR, 1953.

\bibitem{Serre1999}
D.~Serre.
\newblock {\em Systems of Conservation Laws 1: Hyperbolicity, Entropies, Shock
  Waves}.
\newblock Cambridge University Press, Cambridge, 1999.

\bibitem{Torrilhon2016}
M.~Torrilhon.
\newblock Modeling nonequilibrium gas flow based on moment equations.
\newblock {\em Annual Review of Fluid Mechanics}, 48(1):429--458, 2016.

\bibitem{Yong1999}
W.-A. Yong.
\newblock Singular perturbations of first-order hyperbolic systems with stiff
  source terms.
\newblock {\em Journal of Differential Equations}, 155(1):89--132, 1999.

\bibitem{Yong2008}
W.-A. Yong.
\newblock An interesting class of partial differential equations.
\newblock {\em J. Math. Phys.}, 49, 2008.

\end{thebibliography}
\end{document}